\newcommand{\E}{\mathbb{E}}
\theoremstyle{plain}
\newtheorem{definition}{Definition}
\newtheorem{prop}{Proposition}
\newtheorem{thm}{Theorem}
\newtheorem{lemma}{Lemma}
\newtheorem{corr}{Corollary}
\newtheorem{assmp}{Assumption}
\providecommand{\@LN}[2]{}
\DeclareMathOperator{\Cov}{Cov}
\DeclareMathOperator{\Var}{Var}
\DeclareMathOperator{\sign}{Sign}
\DeclareMathOperator{\Max}{Max}
\DeclareMathOperator{\Min}{Min}
\begin{document}

\title{{\huge{}Monotone Ecological Inference}\thanks{For helpful comments and suggestions, we are grateful to Kosuke Imai, Gary King, Charles Manski, Shiying Hao, and Derek Ouyang. Elzayn, Ho, and Morton: Stanford University. Goldin: University of Chicago, American Bar Foundation, and NBER. Guage: Columbia University.}}

\author{Hadi Elzayn \and Jacob Goldin \and Cameron Guage \and Daniel E. Ho \and Claire Morton}

\date{March, 2025}

\clearpage
\maketitle
\thispagestyle{empty}

\begin{abstract}
\noindent We study monotone ecological inference, a partial identification approach to ecological inference. The approach exploits information about one or both of the following conditional associations: (1) outcome differences between groups within the same neighborhood, and (2) outcomes differences within the same group across neighborhoods with different group compositions. We show how assumptions about the sign of these conditional associations, whether individually or in relation to one another, can yield informative sharp bounds in ecological inference settings. We illustrate our proposed approach using county-level data to study differences in Covid-19 vaccination rates among Republicans and Democrats in the United States.
\end{abstract}

\doublespacing
\clearpage
\setcounter{page}{1}

\section{Introduction}

Ecological inference (EI) -- the use of aggregate data to investigate individual-level associations -- is a common challenge in fields such as political science, sociology, economics, epidemiology, and public health. To fix ideas, consider a researcher seeking to learn the difference in the prevalence of some outcome across two groups of individuals, but data on both the outcome and group membership are available only at some aggregated level, such as the individual’s neighborhood. EI is challenging because the neighborhood-level data can be consistent with divergent individual-level associations between group membership and the outcome. 

There are several canonical estimation strategies that have been proposed for EI settings, but each requires strong assumptions. Ecological regression \citep{goodman1953ecological} entails regressing neighborhood-level outcomes on neighborhood-level rates of group membership. The approach is unbiased only absent ``contextual effects'' -- e.g., it would be biased if one group is more likely to live in neighborhoods where individuals of both groups tend to have higher values of the outcome. Alternatively, the neighborhood model \citep{freedman1991ecological} uses group-weighted averages across neighborhoods to estimate group means. It yields unbiased  estimates under a different, but equally strong assumption, namely that there is no association between group membership and the outcome among individuals living in the same neighborhood. A third approach to EI, proposed by \citet{king1997solution}, relies for identification on an assumption about the role of contextual effects within a parametric statistical model.\footnote{Specifically, King models the individual-level variables of interest as independent draws from a truncated bivariate normal distribution, where truncation ensures that the estimated proportions fall in the unit interval. It provided the first statistical approach to combine the method of bounds with ecological regression.}

A limitation to all of these approaches is that they require strong assumptions for point identification. A potentially appealing alternative is therefore to focus on partial identification methods, such as the so-called method of bounds \citep{Duncan1953AnAT}, which yields the range of individual-level associations potentially consistent with the observed aggregated data. In practice, however, the method of bounds interval is frequently wide. Depending on the application, such bounds may not shed much light on the parameter of interest.

In this paper, we propose a middle-ground approach to identification in EI settings that strikes a balance between obtaining informative results while relying on potentially more credible identifying assumptions. The identifying assumptions we consider concern two quantities: (1) the between-group within-neighborhood association -- i.e, outcome differences between groups within the same neighborhood, and (2) the within-group between-neighborhood association -- i.e., outcome differences within the same group across neighborhoods with different group compositions. We show how assumptions about the \emph{sign} of these conditional associations -- whether individually or in relation to one another -- can aid identification. In particular, we derive sharp bounds for the group-level outcome means and for the difference in outcome means by group in EI settings in which the researcher can sign one or both of the conditional associations. Because our identifying assumptions relate to the signs of the conditional associations, we refer to our proposed method as \emph{monotone ecological inference}, in the spirit of \citet{manski_pepper_2000}.

A virtue of monotone ecological inference is that it relies on weaker identifying assumptions than canonical EI estimators, in that it does not require taking a stance on the exact magnitude of either the between- or the within-group association. In contrast, the neighborhood model is unbiased only if the between-group association is exactly zero, and ecological regression is unbiased only if the within-group association is exactly zero.\footnote{King's approach also imposes that the within-group association be zero a priori, but allows for a non-zero conditional association after incorporating feasibility constraints based on the neighborhood-level data \citep[for discussion, see][]{lewis2001understanding,jiang2020ecological}.} These distinctions are important in practice because in many settings of interest, the researcher will not be able to entirely rule out within-neighborhood variation in group outcomes (so that the between-group association may be non-zero) nor be able to rule out the presence of all contextual effects (so that the within-group association may be non-zero). At the same time, the researcher may be able to use theory or auxiliary data to form a belief about the likely direction of any such associations if they are present, enabling monotone ecological infernece.\footnote{As we illustrate below, an auxiliary sample of individual-level data may be useful to assess the sign of one or both of the conditional associations, even if the sample is not sufficiently large or representative enough to answer the primary research question on its own.}

A related possibility is that the researcher may have reason to believe that the two conditional associations run in the same direction as one another, even without necessarily knowing the direction. For example, the researcher may expect that the contextual effects in a particular setting would amplify any group differences in the outcome that would otherwise exist, such as through social norms or local control over policy-making, so that the within-group (between-neighborhood) association would tend to have the same sign as the between-group (within-neighborhood) association. Under this condition, which we refer to as \emph{contextual reinforcement}, we show that the data identifies the sign of the difference in group means, and we provide sharp bounds for the group means as well as for their difference. 

The main reason that monotone ecological inference is appealing as a research design is that its identifying assumptions can often be reasoned about on the basis of expert institutional knowledge and/or auxiliary data. We illustrate this type of reasoning in our empirical application, where we study differences in Covid-19 vaccination rates by political party using county-level data. This topic has been the focus of substantial interest in recent years, and much of the prior evidence is ecological in nature \citep{albrecht2022vaccination,ye2023exploring}.

In applying monotone ecological inference to this research question, we start by noting that the setting is one in which contextual reinforcement is likely to hold. In particular, because many public health services and policies like vaccination mandates are determined and applied on a sub-national level, we expect that any within-neighborhood association between political party and vaccination status would be amplified through neighborhood-level mediators. For example, suppose that Republicans are more vaccine hesitant for reasons that are not fully mediated through neighborhood (e.g., differences in partisan media exposure and attitude formation), so that the between-group association is negative. We would then also expect that Republicans living in “red” counties (where most of the residents are Republicans) would be more likely to face local policies and social norms inhibiting vaccinations, as compared to Republicans living in “blue” counties (where few of the residents are Republicans); hence the within-group (across-county) association would also be negative.\footnote{Conversely, if Democrats tended to be more opposed to vaccination, we would expect the opposite pattern to obtain for both identifying assumptions.} The plausibility of such ``feedback’’ from individual- to neighborhood-level associations supports the assumption of contextual reinforcement in this setting. 

To augment the theoretical support for contextual reinforcement, we obtain an auxiliary data set consisting of individual-level vaccination records for approximately one million individuals from a registry of electronic health records. We link these records to publicly available data on voter registration and then anonymize them. For this sample, we find that both the between- and within-group associations are negative: Republicans are less likely to be vaccinated than Democrats living in the same county, and individuals of both parties are more likely to be vaccinated in counties where individuals of their same party have higher vaccination rates. Although the exact magnitude of these associations is unlikely to be the same in the overall population, we interpret the sign of the estimated associations for this subpopulation to support the contextual reinforcement assumption. 

Analyzing the county-level data using monotone ecological inference, we find evidence that Democrats are vaccinated at higher rates than Republicans. We estimate the vaccination rate among Democrats to be between 0.61 and 0.81, compared to between 0.33 and 0.56 among Republicans. With respect to the difference in vaccination rates between Democrats and Republicans, the width of the contextual reinforcement bounds is approximately 67\% narrower than the method of bounds interval. And unlike the method of bounds, the contextual reinforcement bounds exclude the possibility that there is no difference in vaccination rates by political party.

Our results contribute to a large literature that studies identification in EI settings;  \citep[for surveys see, e.g.,][]{king1997solution, king2004ecological, cho200824}. Most of this literature focuses on point-identification, with notable exceptions that include \citet{Duncan1953AnAT}, \citet{horowitz1995identification}, \citet{cross2002regressions}, \citet{greiner}, \citet{manski2018credible}, and \citet{jiang2020ecological}. Among these,  \citet{manski2018credible} in particular shares a key feature of our approach, which is to consider sign restrictions on the joint distribution of individual-level variables to aid in identification. However, as we detail below, our primary focus is on identifying a different parameter, which leads us to a very different set of results and insights. More recently, \citet{li2023partial}  also consider the role of ``bounded variation assumptions'' for identifying personalized risk assessments from published medical studies, a setting that shares some features of ecological inference but also differs from it in important respects. We contribute to this literature by proposing a novel partial identification strategy that can yield informative bounds under appealing identifying assumptions. We also provide novel expressions for the biases associated with canonical EI estimators, clarifying the interpretation of results when they are applied.

Outside of the EI setting, our results relate to a literature that infers disparities in individual-level data based on probabilistic estimates of group membership \citep{chen2019fairness,kallus2022assessing, mccartan2024estimating}. Closest to our approach is \citet{elzayn2024measuring}, which applies a partial identification strategy for estimating income tax audit disparities by race using individual-level data and probabilistically inferred racial characteristics based on the sign of conditional covariance terms that are the individual-level analogs to the conditional associations we study.\footnote{\citet{elzayn2024measuring} does not relate its approach to ecological inference nor derive the sharp bounds we provide here.} Our approach is also similar in spirit to prior work that studies the identification of treatment effects when the researcher substitutes identifying assumptions based on inequalities for identifying assumptions based on equalities \citep[e.g.,][]{manski_pepper_2000, molinari2010missing}.

We proceed as follows. Section 2 describes our empirical setup and defines several canonical EI estimators. Section 3 derives sharp bounds for the population difference in group means using Monotone EI. Section 4 extends the same approach to identification of the levels of group means. Section 5 develops Monotone EI tools for identification of group means at the neighborhood-level. Section 6 applies our results to study partisan gaps in Covid-19 vaccinations. Section 7 concludes. An open-source R software package, MonotoneEI, is available to implement our proposed approach. The software and accompanying
documentation are available at \url{https://github.com/reglab/MonotoneEI}.

\section{Empirical Framework}

A population of individuals is characterized by a triple $(X,Y,N)$. We interpret $X$ to define the individual's group membership; $Y$ to be the outcome of interest; and $N$ as the individual's neighborhood. For expositional simplicity, we initially focus on the case in which $X\in\{0,1\}$; the analysis extends naturally to settings in which there are more than two discrete groups, as we discuss below. We are primarily interested in the mean values of $Y$ by group, $$Y^x=\mathbb{E}[Y|X=x],$$ as well as the difference in group means, $$D=Y^1-Y^0.$$

Following the presentation of our main results, we also consider the identification of the levels and difference of the group means for specific neighborhoods, $\mathbb{E}[Y |X,N=n]$ and $\mathbb{E}[Y | X=1,\,N=n] - \mathbb{E}[Y | X=0,\,N=n]$.

We do not observe individual-level values of $X$ or $Y$, but rather observe data that has been aggregated across individuals with the same value of $N$. Denote the mean values of $X$ and $Y$ for the individuals in a particular neighborhood $n$ by
$$X_n=\mathbb{E}[X|N=n]$$ and $$Y_n=\mathbb{E}[Y|N=n].$$

The distribution of individuals across neighborhoods is given by  $$p_n=\text{Pr}(N=n).$$ 

We assume the researcher can directly observe $(X_n,Y_n,p_n)$ for each $n$, deferring issues of sampling uncertainty and statistical inference to our empirical application.  We also assume that there are a finite number of neighborhoods, $N\in\mathcal{N}$ with $|\mathcal{N}|<\infty$. To avoid degenerate cases, we assume that there exists some $n\in\mathcal{N}$ for which $p_n>0$ and $X_n\in(0,1)$. In addition, we assume that for each group $x$ and neighborhood $n$, the conditional expectation of $Y$ with respect to $x$ and $n$ exists and is bounded, $\underline{Y}\leq Y_n^x\leq \overline{Y}$ for some pair $(\underline{Y},\overline{Y})\in \mathbb{R}^2$, where  $Y_n^x=\mathbb{E}[Y|N=n, X=x]$.

Figure 1 visualizes the joint relationship between $X$, $Y$, and $N$. The overall population association between $X$ and $Y$ can be decomposed into two conditional associations. We formally define the \emph{between-group association} as
\begin{equation*}
    \delta_B := \mathbb{E}\left[\text{Cov} \left( Y , X \, | \,N \right) \right]
\end{equation*}
and the \emph{within-group association} as:
\begin{equation*}
    \delta_W := \mathbb{E}\left[\text{Cov} \left( Y,X_N \, | \,X \right) \right]
\end{equation*}
where $X_N=\mathbb{E}[X|N]$. Intuitively, the between-group association, $\delta_B$, refers to differences in the outcome between groups within the same neighborhood. In turn, the within-group association, $\delta_W$, refers to differences in the outcome across neighborhoods with different group prevalence, among individuals within the same group.\footnote{These interpretation follows from the fact that $X$ is binary. In particular, $\Cov(X,Y|N)=\E[XY|N]-\E[Y|N]\E[X|N]=\Pr[X=1|N](\E[Y|X=1,N]-\E[Y|N])=\Var[X|N](Y_N^1-Y_N^0)$. Hence,
$\delta_B = \sum_{n\in\mathcal{N}}(Y_n^1-Y_n^0)p_n\Var[X|n]$. Similarly, $\delta_W=\mathbb{E}[X]\,\text{Cov}\left(Y^1_n,X_n\right) + (1-\mathbb{E}[X])\,\text{Cov}\left(Y^0_n,X_n\right)$.} 

It is possible to relate the overall difference in group means, $D$, to these conditional associations, $\delta_B$ and $\delta_W$. To do so, define $\gamma=\frac{\text{Var}(X_N)}{\text{Var}(X)}$. Intuitively, $\gamma$ reflects the loss of information due to the data being aggregated, so that $\gamma<1$ in EI settings. We then have the following result. 

\begin{prop}[Decomposition of Overall Association into Conditional Associations]
\label{prop:decomp_conditional_assoc}
\begin{equation*}
    D = \frac{\delta_B + \delta_W}{(1-\gamma)\Var(X)}
\end{equation*}
\end{prop}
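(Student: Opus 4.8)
The plan is to translate the difference-in-means parameter $D$ into covariance language and then apply the law of total covariance twice, once conditioning on $N$ and once on $X$. The starting point is the elementary identity for a binary regressor: since $X\in\{0,1\}$, a direct expansion of $\Cov(X,Y)=\E[XY]-\E[X]\E[Y]$ gives $\Cov(X,Y)=\Var(X)\,D$, so that $D=\Cov(X,Y)/\Var(X)$. The same computation shows more generally that for any two functions $f(X),g(X)$ of a binary $X$ we have $\Cov(f(X),g(X))=\Var(X)\,(f(1)-f(0))(g(1)-g(0))$, an identity I will reuse below. It therefore suffices to decompose $\Cov(X,Y)$.

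First I would apply the law of total covariance conditioning on $N$:
\begin{equation*}
    \Cov(X,Y)=\E[\Cov(X,Y\mid N)]+\Cov(X_N,Y_N)=\delta_B+\Cov(X_N,Y_N),
\end{equation*}
where $X_N=\E[X\mid N]$ and $Y_N=\E[Y\mid N]$. Because $X_N$ is $N$-measurable, iterated expectations gives $\Cov(X_N,Y_N)=\Cov(X_N,Y)$, so the problem reduces to expressing $\Cov(X_N,Y)$ in terms of $\delta_W$ and $\gamma$. Next I would apply the law of total covariance a second time, now conditioning on $X$:
\begin{equation*}
    \Cov(X_N,Y)=\E[\Cov(X_N,Y\mid X)]+\Cov\bigl(\E[X_N\mid X],\,Y^X\bigr)=\delta_W+\Cov\bigl(\E[X_N\mid X],\,Y^X\bigr).
\end{equation*}
Both $\E[X_N\mid X]$ and $Y^X=\E[Y\mid X]$ are functions of the binary variable $X$, so the reusable identity yields $\Cov(\E[X_N\mid X],Y^X)=\Var(X)\,\bigl(\E[X_N\mid X=1]-\E[X_N\mid X=0]\bigr)\,D$.

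The crux is then to evaluate the conditional-mean gap $\E[X_N\mid X=1]-\E[X_N\mid X=0]$, and I expect this to be the main obstacle, as it is where the aggregation parameter $\gamma$ enters. To pin it down I would compute $\Cov(X_N,X)$ in two ways. Conditioning on $N$ and using $\E[X\mid N]=X_N$ (so that $\Cov(X_N,X\mid N)=0$) gives $\Cov(X_N,X)=\Var(X_N)$. On the other hand, since $\Cov(X_N,X\mid X)=0$ we have $\Cov(X_N,X)=\Cov(\E[X_N\mid X],X)$, and the binary identity then gives $\Cov(X_N,X)=\Var(X)\bigl(\E[X_N\mid X=1]-\E[X_N\mid X=0]\bigr)$. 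Equating the two expressions shows the gap equals $\Var(X_N)/\Var(X)=\gamma$.

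Substituting back, $\Cov(X_N,Y)=\delta_W+\gamma\Var(X)\,D$, and hence $\Cov(X,Y)=\delta_B+\delta_W+\gamma\Var(X)\,D$. Finally, equating this with $\Cov(X,Y)=\Var(X)\,D$ from the first step yields a linear equation in $D$; solving $(1-\gamma)\Var(X)\,D=\delta_B+\delta_W$ gives $D=(\delta_B+\delta_W)/\bigl((1-\gamma)\Var(X)\bigr)$, as claimed. The only place I would need to be careful is in justifying the two conditional-independence facts $\Cov(X_N,X\mid N)=0$ and $\Cov(X_N,X\mid X)=0$ that license the projection steps, since everything else is bookkeeping with the law of total covariance and the binary identity.
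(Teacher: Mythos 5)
Your proof is correct and follows essentially the same route as the paper's: decompose $\Cov(X,Y)$ by the law of total covariance conditioning on $N$ (producing $\delta_B$) and then decompose $\Cov(X_N,Y)$ conditioning on $X$ (producing $\delta_W$), using the binary-$X$ identity $\Cov(X,Y)=\Var(X)\,D$ throughout, with your double-counting of $\Cov(X_N,X)$ supplying the key fact $\E[X_N\mid X=1]-\E[X_N\mid X=0]=\gamma$. The only point left implicit is that the final division by $(1-\gamma)\Var(X)$ is licensed by the paper's standing non-degeneracy assumption (some $n$ with $p_n>0$ and $X_n\in(0,1)$), which guarantees $\Var(X)>0$ and $\gamma<1$.
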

\noindent The proofs of Proposition \ref{prop:decomp_conditional_assoc} and all subsequent results are provided in the Appendix. Because $\gamma<1$, an immediate implication of Proposition \ref{prop:decomp_conditional_assoc} is that the sign of $D$ is the same as that of $\delta_B+\delta_W$.

\subsection*{Canonical EI Estimators}

We next describe three canonical approaches for estimating the group-level means in EI settings. We focus on these three methods (as opposed to other popular methods like \citet{king1997solution}) because they will appear as inputs into the monotone EI bounds we derive.

The most common method for conducting EI is 
ecological regression (ER) \citep{goodman1953ecological}. The ecological regression estimator for the difference in group means, $\widehat{D}_{ER}$, is defined as the estimated coefficient for $X_N$ in the weighted least squares regression of $Y_N$ on $X_N$, with weights based on the share of the population in each neighborhood:
\begin{equation*}
    \widehat{D}_{ER} = \frac{\sum\limits_{n\in\mathcal{N}}\,p_n\,\widetilde{Y}_n\,\widetilde{X}_n}{\sum\limits_{n\in\mathcal{N}}\,p_n\left(\widetilde{X}_n\right)^2}
\end{equation*}
where $\widetilde{X}_n=X_n-\frac{\sum_n p_n X_n}{\sum_n p_n}$ denotes the demeaned value of $X_n$, and similarly for $\widetilde{Y}_n=Y_n-\frac{\sum_n p_n Y_n}{\sum_n p_n}$. We will focus on the ecological regression estimator's asymptotic limit under an \emph{iid} sampling process:
\begin{equation*}
    D_{ER} = \frac{\text{Cov}(Y_N,X_N)}{\text{Var}(X_N)}
\end{equation*}
where $X_N = \mathbb{E}[X|N]$ and $Y_N = \mathbb{E}[Y|N]$. 

In turn, the ecological regression estimates for the group-level means are given by

\begin{equation*}
    \widehat{Y}_{ER}^0 = \sum\limits_{n\in\mathcal{N}}\,p_n\,Y_n - \left(\frac{\sum\limits_{n\in\mathcal{N}}\,p_n\,\widetilde{Y}_n\,\widetilde{X}_n}{\sum\limits_{n\in\mathcal{N}}\,p_n\,(\widetilde{X}_n)^2}\right)\,\left(\sum\limits_{n\in\mathcal{N}}\,p_n\,X_n\right)
\end{equation*}
and 
\begin{equation*}
   \widehat{Y}_{ER}^1 =     \sum\limits_{n\in\mathcal{N}}\,p_n\,Y_n + \left(\frac{\sum\limits_{n\in\mathcal{N}}\,p_n\,\widetilde{Y}_n\,\widetilde{X}_n}{\sum\limits_{n\in\mathcal{N}}\,p_n\,(\widetilde{X}_n)^2}\right)\,\left(\sum\limits_{n\in\mathcal{N}}\,p_n\,(1-X_n)\right)
\end{equation*}
which respectively converge to 

\begin{equation*}
     Y_{ER}^0 =  \mathbb{E}[Y_N] - \left(\frac{\text{Cov}(Y_N,X_N)}{\text{Var}(X_N)}\right)\,\mathbb{E}[X_N]
\end{equation*}
and
\begin{equation*}
    Y_{ER}^1 =  \mathbb{E}[Y_N] + \left(\frac{\text{Cov}(Y_N,X_N)}{\text{Var}(X_N)}\right)\,\left(1-\mathbb{E}[X_N]\right).\footnote{One can equivalently define the ecological regression estimates for the group means as the estimated coefficients of the weighted regression of $Y_N$ on $X_N$ and $1-X_N$.}
\end{equation*}

An alternative method for estimating group-level differences in ecological inference settings is the so-called Neighborhood Model (NM) \citep{freedman1991ecological}.\footnote{\citet{freedman1991ecological} proposes two model variants: the non-linear neighborhood model and the linear neighborhood. Our focus is on the former.} The neighborhood model estimators for the group-level means are weighted averages of the neighborhood-level outcomes, with weights given by the group's prevalence in the neighborhood:
\begin{equation*}
    \widehat{Y}^1_{NM} = \frac{\sum\limits_{n\in\mathcal{N}} p_nX_nY_n}{\sum\limits_{n\in\mathcal{N}} p_n X_n}
\end{equation*}

and 

\begin{equation*}
    \widehat{Y}^0_{NM} = \frac{\sum\limits_{n\in\mathcal{N}} p_n\,(1-X_n)\,Y_n}{\sum\limits_{n\in\mathcal{N}} p_n \,(1-X_n)\,}
\end{equation*}

The neighborhood model estimator for the difference in group means, $\widehat{D}_{NM}$, is correspondingly defined as
\begin{equation*}
    \widehat{D}_{NM} = \frac{\sum\limits_{n\in\mathcal{N}} p_nX_nY_n}{\sum\limits_{n\in\mathcal{N}} p_n X_n} - \frac{\sum\limits_{n\in\mathcal{N}} p_n (1-X_n) Y_n}{\sum\limits_{n\in\mathcal{N}} p_n (1-X_n)}
\end{equation*}

As above, we will focus on the neighborhood model estimators' asymptotic limits:
\begin{equation*}
 Y^1_{NM} = \frac{\mathbb{E}[X_N\,Y_N]}{\mathbb{E}[X_N]} 
\end{equation*}

\begin{equation*}
 Y^0_{NM} =  \frac{\mathbb{E}\left[ (1-X_N)\,Y_N \right]}{\mathbb{E} \left[ 1-X_N \right]}
\end{equation*}

and
\begin{equation*}
 D_{NM} = \frac{\mathbb{E}[X_N\,Y_N]}{\mathbb{E}[X_N]} -  \frac{\mathbb{E}\left[ (1-X_N)\,Y_N \right]}{\mathbb{E} \left[ 1-X_N \right]}
\end{equation*}

It will be useful in our subsequent results to observe that ecological regression and the neighborhood model estimates follow a close mechanical relationship.

\begin{lemma}[Relationship Between Neighborhood Model and Ecological Regression]\label{lemma:nm_er}

\begin{equation*}
    D_{NM} =\gamma \,D_{ER}
\end{equation*}
\end{lemma}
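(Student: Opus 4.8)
The plan is to reduce the neighborhood model difference $D_{NM}$ to a single covariance-over-variance expression and then read off the ratio against $D_{ER}$ directly. First I would set $\mu_X := \E[X_N]$ and $\mu_Y := \E[Y_N]$, noting by the law of iterated expectations that $\E[X_N] = \E[X]$ and $\E[Y_N] = \E[Y]$. Since $\E[1-X_N] = 1-\mu_X$ and $\E[(1-X_N)\,Y_N] = \mu_Y - \E[X_N Y_N]$, I would place the two fractions defining $D_{NM}$ over the common denominator $\mu_X(1-\mu_X)$.

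The next step is routine algebra. The resulting numerator is $(1-\mu_X)\,\E[X_N Y_N] - \mu_X\bigl(\mu_Y - \E[X_N Y_N]\bigr)$, and the two terms carrying $\E[X_N Y_N]$ combine, leaving $\E[X_N Y_N] - \mu_X \mu_Y = \Cov(Y_N, X_N)$. This yields the compact form $D_{NM} = \Cov(Y_N, X_N)\big/\bigl(\mu_X(1-\mu_X)\bigr)$.

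The key observation is that because $X$ is binary, $\Var(X) = \E[X]\bigl(1-\E[X]\bigr) = \mu_X(1-\mu_X)$, so the denominator above is exactly $\Var(X)$. Comparing with $D_{ER} = \Cov(Y_N, X_N)\big/\Var(X_N)$, the two expressions share the same numerator and differ only in their denominators, so $D_{NM}/D_{ER} = \Var(X_N)/\Var(X) = \gamma$, which is the claimed identity.

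I expect the only point requiring care — rather than a genuine obstacle — is tracking the signs when combining the fractions and verifying that the cross terms cancel to leave precisely $\Cov(Y_N, X_N)$. The binary structure of $X$ is what makes the denominator collapse to $\Var(X)$; absent that structure, the identity would carry an extra factor and the clean proportionality by $\gamma$ would break.
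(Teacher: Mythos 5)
Your proof is correct and matches the paper's approach: both reduce $D_{NM}$ to $\Cov(Y_N,X_N)/\Var(X)$ (using $\E[X_N]=\E[X]$ and the binary structure of $X$ so that $\Var(X)=\E[X](1-\E[X])$) and then compare against $D_{ER}=\Cov(Y_N,X_N)/\Var(X_N)$ to read off the factor $\gamma$. One cosmetic point: rather than forming the ratio $D_{NM}/D_{ER}$ (which tacitly assumes $D_{ER}\neq 0$), simply note that $\gamma\,D_{ER}=\frac{\Var(X_N)}{\Var(X)}\cdot\frac{\Cov(Y_N,X_N)}{\Var(X_N)}=\frac{\Cov(Y_N,X_N)}{\Var(X)}=D_{NM}$, which covers the degenerate case as well.
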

\noindent where $\gamma=\frac{\text{Var}(X_N)}{\text{Var}(X)}<1$, as above.

Whereas ecological regression and the neighborhood model yield point estimates for the group-level means and their difference, an alternative approach is to calculate the most extreme values of these quantities that are consistent with the observed data \citep{Duncan1953AnAT}. In EI settings, consistency with the data requires that, for each $n\in\mathcal{N}$, (i) $Y_n^0\in[\underline{Y},\overline{Y}]$; (ii) $Y_n^1\in[\underline{Y},\overline{Y}]$; and (iii) $X_n\,Y_n^1 + (1-X_n)Y_n^0=Y_n$.
The Method of Bounds (MOB) interval is defined by the minimum and maximum group-level means that satisfy these constraints. 

\begin{prop}[Method of Bounds for Group Means]\label{prop:mob_levels}
Suppose that for each $x$ and $n$, the conditional expectation of $Y$ with respect to $x$ and $n$ exists, $\underline{Y}\leq\mathbb{E}[Y|X=x,N=n]\leq \overline{Y}$ for some pair $(\underline{Y},\overline{Y})\in \mathbb{R}^2$. Define the following parameters:
    \begin{align*}
    Y_{MOB}^{1+} :&= \frac{\E\left[\min\left\{Y_n-\underline{Y}(1-X_n),\overline{Y}X_n\right\}\right\}]}{\E[X]}. \\
        Y_{MOB}^{0-} :&=\frac{\E[\max\left\{Y_n-\overline{Y}X_n,\underline{Y}(1-X_n)\right\}]}{1-\E[X]}\\
    Y_{MOB}^{1-} :&= \frac{\E[\max\left\{Y_n-\overline{Y}(1-X_n),\underline{Y}X_n \right\}]}{\E[X]}\\
    Y_{MOB}^{0+} :&= \frac{\E[\min\left\{Y_n-\underline{Y}X_n,\overline{Y}(1-X_n)\right\}]}{1-\E[X]}
\end{align*}

It follows that
\begin{align}
   Y^1 &\in \left[ Y^{1-}_{MOB} \,,\, Y^{1+}_{MOB} \right] \label{eqn:mob_means_y1}
\\  \text{ and } Y^0 &\in \left[ Y^{0-}_{MOB} \,,\, Y^{0+}_{MOB} \right] \label{eqn:mob_means_y0}
\end{align}

The bounds in (\ref{eqn:mob_means_y1}) and (\ref{eqn:mob_means_y0}) 
are sharp absent additional information.
\end{prop}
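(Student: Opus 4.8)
The plan is to reduce the problem to a collection of independent one-dimensional optimization problems, one per neighborhood, and then aggregate. First I would use the binary structure of $X$ and the law of iterated expectations to write the target parameter as a weighted average of the unknown neighborhood-group means: since $\Pr[X=1\mid N=n]=X_n$ and $\E[Y\mid X=1,N=n]=Y_n^1$, we have $Y^1=\frac{\sum_{n}p_nX_nY_n^1}{\E[X]}$, where $\E[X]=\sum_n p_n X_n$ is observed. The analogous expression $Y^0=\frac{\sum_n p_n(1-X_n)Y_n^0}{1-\E[X]}$ holds for the other group. The researcher's identified set is thus the image, under these linear maps, of the feasible parameter region: the set of vectors $(Y_n^1,Y_n^0)_{n\in\mathcal N}$ obeying the box constraints $Y_n^x\in[\underline Y,\overline Y]$ together with the accounting identity $X_nY_n^1+(1-X_n)Y_n^0=Y_n$, with the observed $(X_n,Y_n,p_n)$ held fixed.

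Second, I would solve the per-neighborhood program. Fixing $n$ with $X_n\in(0,1)$, the accounting identity confines $(Y_n^1,Y_n^0)$ to a line, which I parametrize by $Y_n^1$ via $Y_n^0=\frac{Y_n-X_nY_n^1}{1-X_n}$. Maximizing the contribution $X_nY_n^1$ then amounts to increasing $Y_n^1$ until one of two constraints binds: either $Y_n^1\le\overline Y$ (giving $X_nY_n^1\le\overline Y X_n$) or $Y_n^0\ge\underline Y$ (giving $X_nY_n^1\le Y_n-\underline Y(1-X_n)$), so the maximum is $\min\{Y_n-\underline Y(1-X_n),\,\overline Y X_n\}$; minimizing yields the corresponding $\max$. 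These are exactly the summands defining $Y^{1+}_{MOB}$ and $Y^{1-}_{MOB}$, and the degenerate cases $X_n\in\{0,1\}$ can be checked to be consistent with the same formulas. The same computation with the roles of the groups interchanged produces the summands for $Y^{0\pm}_{MOB}$.

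Third, I would aggregate and argue sharpness. Because the box constraints and the accounting identity never couple distinct neighborhoods, the feasible region factorizes as a product over $n$, so the linear objective $\sum_n p_n X_n Y_n^1$ is maximized (resp.\ minimized) by extremizing each term separately; dividing by $\E[X]$ gives the claimed interval for $Y^1$, and symmetrically for $Y^0$. Sharpness requires exhibiting, for each endpoint, one admissible configuration attaining it: I would select the neighborhood-wise maximizers (resp.\ minimizers) simultaneously and verify that the induced partner value $Y_n^0$ (resp.\ $Y_n^1$) lies in $[\underline Y,\overline Y]$. Since the proposition is stated in terms of conditional means alone, no further distributional construction is needed, as any target in $[\underline Y,\overline Y]$ arises as the mean of some $[\underline Y,\overline Y]$-valued outcome.

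The main obstacle is this sharpness step, and specifically confirming that the separate, neighborhood-by-neighborhood extremizers can be chosen at once without conflict. The subtlety is that the within-neighborhood accounting identity links $Y_n^1$ and $Y_n^0$, so driving $X_nY_n^1$ to its extreme forces a particular $Y_n^0$; the content of the two-element $\min/\max$ is precisely that this forced partner never leaves $[\underline Y,\overline Y]$. Once this admissibility is checked neighborhood by neighborhood, the product structure of the feasible set delivers global attainment automatically, and the fact that $Y^1$ and $Y^0$ are bounded separately (rather than jointly) means no cross-group constraint can tighten the intervals further.
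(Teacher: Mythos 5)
Your construction matches the paper's proof in all essentials: the paper likewise reduces to a per-neighborhood problem (its lemma on neighborhood values characterized by the method of bounds derives exactly your two-element $\min/\max$ expressions by writing the accounting identity as a line and eliminating the redundant box constraints), aggregates with the Bayes weights $p_nX_n/\E[X]$ and $p_n(1-X_n)/(1-\E[X])$, and uses the fact that the neighborhood problems decouple so that all per-neighborhood extremizers can be selected simultaneously, with the forced partner values remaining in $[\underline{Y},\overline{Y}]$ by construction.

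The one substantive shortfall is your sharpness step. The paper defines a sharp bound (in the paragraph immediately following the proposition) to mean that \emph{every} point $y^1$ in the interval is realized by some joint distribution consistent with the observed $(p_n,X_n,Y_n)$ --- not merely that the two endpoints are attainable. Your proposal explicitly reduces sharpness to ``exhibiting, for each endpoint, one admissible configuration attaining it,'' which under that definition is insufficient as stated. The paper closes this gap with a dedicated lemma (Feasibility of Intermediates): an $\alpha$-mixture of two feasible configurations is again feasible, because the accounting identity is linear in $(Y_n^1,Y_n^0)$ and the box constraints are convex, so attaining both endpoints yields every intermediate value. Your own setup supplies the same fix essentially for free --- the feasible region is a product of segments, hence convex, and $Y^1$ is a linear (hence continuous) image of it, so its range is an interval --- but that sentence needs to be said. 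Note also that your closing remark, that any target in $[\underline{Y},\overline{Y}]$ arises as the mean of some $[\underline{Y},\overline{Y}]$-valued outcome, addresses a different (and valid) point, namely passing from feasible conditional means to an actual distribution of $(X,Y,N)$; it does not substitute for the interior-point argument.
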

\vspace{6mm}
In Proposition \ref{prop:mob_levels} and throughout, we refer to an interval $[a\,,\,b]$ as a sharp bound for $Y^1$ if, for every $y^{1}\in \left[ a \,,\, b \right]$,
there exists some joint distribution of $(X,Y)$ that is consistent with the observed marginal distribution of ($p_n$, $X_n$, $Y_n$) and that implies $Y^1=y^{1}$, and similarly for bounds for $Y^0$ and $D$.

The same logic yields bounds for the difference in group means:

\begin{prop}[Method of Bounds for Difference in Group Means]\label{prop:mob_dif}
    Suppose that for each $x$ and $n$, the conditional expectation of $Y$ with respect to $x$ and $n$ exists, $\underline{Y}\leq|\mathbb{E}[Y|X=x,N=n]\leq \overline{Y}$ for some pair $(\underline{Y},\overline{Y})\in \mathbb{R}^2$. Define the following parameters:
        \begin{align*}
D_{MOB}^{+} &= \frac{\E[\min\left\{Y_n-\underline{Y}(1-X_n),\overline{Y}X_n\right\}]-\E[X]\E[Y]}{\Var[X]}
\\
\text{ and }   
D_{MOB}^- &= \frac{\mathbb{E}[Y](1-\E[X])  -\mathbb{E}[\min \{ Y_N-\underline{Y}X_n,\,\overline{Y}(1-X_N) \}] }{\Var(X)}
\end{align*}

It follows that
\begin{align}
   D\in \left[ D_{MOB}^- \,,\, D_{MOB}^+ \right] \label{eqn:mob_d}
\end{align}

The bounds in (\ref{eqn:mob_d}) are sharp absent additional information.
\end{prop}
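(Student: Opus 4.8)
The plan is to recast the proposition as a linear optimization over the unobserved within-neighborhood conditional means $\{(Y_n^0,Y_n^1)\}_{n\in\mathcal{N}}$, and then to exploit the accounting identity to collapse that optimization to a single \emph{monotone} problem in one scalar. First I would establish the equivalence between distributions and feasible conditional means: a joint law of $(X,Y,N)$ is consistent with the observed $(p_n,X_n,Y_n)$ if and only if the means $Y_n^x=\E[Y\mid X=x,N=n]$ satisfy, for every $n$, the box constraints $Y_n^0,Y_n^1\in[\underline{Y},\overline{Y}]$ and the identity $X_nY_n^1+(1-X_n)Y_n^0=Y_n$. The forward direction is immediate; for the converse I would construct an explicit consistent law by placing, within each neighborhood, conditional point masses at $Y_n^1$ (for $X=1$) and $Y_n^0$ (for $X=0$). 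Since $X$ is binary, $Y^1=\E[XY]/\E[X]=\E[X_nY_n^1]/\E[X]$ and $Y^0=\E[(1-X_n)Y_n^0]/(1-\E[X])$, so bounding $D$ over consistent laws is exactly optimizing $D=Y^1-Y^0$ over all feasible $\{(Y_n^0,Y_n^1)\}$.

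Next I would use the identity to eliminate $Y_n^0$, writing $\E[(1-X_n)Y_n^0]=\E[Y]-\E[X_nY_n^1]$ and letting $S:=\E[X_nY_n^1]=\sum_n p_nX_nY_n^1$. Combining the two denominators via $\Var(X)=\E[X](1-\E[X])$ gives
\begin{equation*}
    D=\frac{S}{\E[X]}-\frac{\E[Y]-S}{1-\E[X]}=\frac{S-\E[X]\,\E[Y]}{\Var(X)},
\end{equation*}
so $D$ is a strictly increasing affine function of the single scalar $S$. The extreme values of $D$ are therefore attained at the extreme feasible values of $S$, and because $S$ is a nonnegatively weighted sum of the terms $z_n:=X_nY_n^1$ that may be chosen independently across neighborhoods, we have $\max S=\sum_n p_n\max z_n$ and $\min S=\sum_n p_n\min z_n$.

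Then I would pin down the per-neighborhood feasible interval for $z_n$. Translating the two box constraints on $Y_n^1$ and on $Y_n^0=(Y_n-X_nY_n^1)/(1-X_n)$ yields $z_n\in[\max\{\underline{Y}X_n,\,Y_n-\overline{Y}(1-X_n)\},\ \min\{\overline{Y}X_n,\,Y_n-\underline{Y}(1-X_n)\}]$, an interval that is nonempty precisely because $Y_n\in[\underline{Y},\overline{Y}]$, with the degenerate cases $X_n\in\{0,1\}$ absorbed automatically. Substituting $\max z_n$ into the display gives $D_{MOB}^+$ directly, and substituting $\min z_n$, together with the elementary rewriting $\max\{\underline{Y}X_n,\,Y_n-\overline{Y}(1-X_n)\}=Y_n-\min\{Y_n-\underline{Y}X_n,\,\overline{Y}(1-X_n)\}$, reproduces $D_{MOB}^-$. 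This also transparently recovers the Proposition~\ref{prop:mob_levels} bounds, since the single configuration that maximizes $S$ simultaneously maximizes $Y^1$ and minimizes $Y^0$ (and symmetrically for the minimum), so that $D_{MOB}^+=Y^{1+}_{MOB}-Y^{0-}_{MOB}$ and $D_{MOB}^-=Y^{1-}_{MOB}-Y^{0+}_{MOB}$.

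Finally, for sharpness I would note that as the $z_n$ range over their (connected) intervals, $S$ ranges continuously over all of $[\min S,\max S]$, so by the displayed monotone relation $D$ attains every value in $[D_{MOB}^-,D_{MOB}^+]$; each target value corresponds to a feasible $\{(Y_n^0,Y_n^1)\}$, which the construction of the first step turns into an explicit consistent distribution. The main obstacle I anticipate is the careful bookkeeping of the per-neighborhood feasible interval and the verification that the closed forms match the stated $D_{MOB}^{\pm}$ at the degenerate neighborhoods where $X_n\in\{0,1\}$ and one conditional mean drops out; the conceptual crux, however, is the observation that the coupling identity reduces a two-sided optimization over two group means to a single monotone maximization of $S$, which is what makes the bounds both tractable and sharp.
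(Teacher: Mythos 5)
Your proof is correct, and it takes a genuinely different route from the paper's. The paper first proves sharp bounds on the levels $Y^1$ and $Y^0$ (Proposition \ref{prop:mob_levels}, built on a per-neighborhood characterization lemma), obtains $D_{MOB}^{+}=Y_{MOB}^{1+}-Y_{MOB}^{0-}$ by direct algebra (using $A-\min\{B,C\}=\max\{A-B,A-C\}$ and $\Var(X)=\E[X](1-\E[X])$), and gets sharpness from a separate mixture lemma (convex combinations of feasible neighborhood configurations are feasible and interpolate the disparity). You instead collapse everything to one scalar: eliminating $Y_n^0$ through the accounting identity gives
\begin{equation*}
D=\frac{S-\E[X]\,\E[Y]}{\Var(X)},\qquad S=\E\left[X_N Y_N^1\right],
\end{equation*}
so $D$ is a strictly increasing affine function of $S$, and $S$ ranges over a closed interval because it is a linear image of the product of per-neighborhood intervals for $z_n=X_nY_n^1$. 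This single reduction yields the closed forms (plug in the coordinatewise extremes of $z_n$) and sharpness (connectedness of the range of $S$) in one stroke; moreover, the compatibility of extremes that the paper must note explicitly --- $Y^1$ is maximized exactly when $Y^0$ is minimized --- is automatic in your parametrization, since $Y_n^0$ is determined by $z_n$. What the paper's organization buys in exchange is reuse: the level bounds are a stated result in their own right, and the mixture machinery (Lemma \ref{lem:feasb_intermediates}, Corollary \ref{lem:feasb_intermediate_disp}) is invoked again in the sharpness arguments for Propositions \ref{prop:cna_bounds}, \ref{prop:cga_bounds}, and \ref{prop:bounds_cr}, whereas your scalar-$S$ device is tailored to the difference-in-means functional. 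Your per-neighborhood interval for $z_n$, its nonemptiness from $Y_n\in[\underline{Y},\overline{Y}]$, the rewriting $\max\{\underline{Y}X_n,\,Y_n-\overline{Y}(1-X_n)\}=Y_n-\min\{Y_n-\underline{Y}X_n,\,\overline{Y}(1-X_n)\}$, and the degenerate cases $X_n\in\{0,1\}$ all check out, so the formulas you derive agree exactly with the stated $D_{MOB}^{\pm}$.
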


In the next section, we consider the bias of these estimators and show that they define sharp bounds under various assumptions about the sign of the within- and between-group associations.

\section{Identification of the Difference in Group Means}

In this section we study how assumptions about the sign of the within- and between-group associations provide identifying power for learning about the difference in group means. We first establish that the bias of the ecological regression estimate for the difference in group means depends on the within-group association.\footnote{Propositions \ref{prop:bias_er} and \ref{prop:bias_nm} are the EI analogues to Proposition 1.1 and 1.2 in \citet{elzayn2024measuring}.} 

\begin{prop}[Bias of Ecological Regression for Difference in Group Means]\label{prop:bias_er}
\begin{equation*}
    D_{ER} - D=   \frac{ \delta_W}{\Var(X_N)}
\end{equation*}
\end{prop}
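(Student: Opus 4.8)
The plan is to express both $D_{ER}$ and $D$ in terms of the conditional associations $\delta_B$ and $\delta_W$ using results already established in the excerpt, and then subtract. Proposition~\ref{prop:decomp_conditional_assoc} already gives us $D = \frac{\delta_B + \delta_W}{(1-\gamma)\Var(X)}$, so the main task is to obtain a comparable expression for $D_{ER}$ in terms of $\delta_B$ alone (or $\delta_B$ plus $\delta_W$ minus the right remainder). By definition, $D_{ER} = \frac{\Cov(Y_N,X_N)}{\Var(X_N)}$, so the first step is to decompose the numerator $\Cov(Y_N, X_N)$.

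First I would rewrite $\Cov(Y_N, X_N) = \E[Y_N X_N] - \E[Y_N]\E[X_N]$ and then relate $Y_N = X_N Y_N^1 + (1-X_N)Y_N^0$ at the neighborhood level. The key identity I expect to use is the analogue of the footnote computation: since $\delta_B = \sum_n (Y_n^1 - Y_n^0)\,p_n\Var[X|n]$ and $\Var[X|n] = X_n(1-X_n)$ for binary $X$, I would try to show that $\Cov(Y_N, X_N)$ equals $\delta_B$ divided by an appropriate factor, or more precisely that $\Cov(Y_N,X_N)$ itself admits a between/within split at the aggregated level. The cleaner route is probably to recall that by Lemma~\ref{lemma:nm_er}, $D_{NM} = \gamma\, D_{ER}$, and that the neighborhood model has a known bias expressible through $\delta_B$; combined with the decomposition in Proposition~\ref{prop:decomp_conditional_assoc}, this pins down $D_{ER} - D$. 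Concretely, I would write $D_{ER} = \frac{1}{\gamma}D_{NM}$ and use whatever expression the paper establishes (or that I can derive in parallel with the footnote) for $D_{NM}$ in terms of $\delta_B$ and the aggregate moments.

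The most direct computation, though, is to show $\Cov(Y_N, X_N) = \frac{\delta_B + \delta_W - \delta_W^{\text{agg}}}{\cdots}$; rather than guess, I would anchor on the two facts $D = \frac{\delta_B+\delta_W}{(1-\gamma)\Var(X)}$ and the target $D_{ER} - D = \frac{\delta_W}{\Var(X_N)}$. Adding the target to $D$ gives a prediction for $D_{ER}$, namely $D_{ER} = \frac{\delta_B+\delta_W}{(1-\gamma)\Var(X)} + \frac{\delta_W}{\Var(X_N)}$, and using $\Var(X_N) = \gamma\Var(X)$ this should collapse to $D_{ER} = \frac{\delta_B + \delta_W + (1-\gamma)\delta_W/\gamma \cdot \gamma}{\cdots}$; I would simplify this symbolically and check it reduces to $\frac{\Cov(Y_N,X_N)}{\Var(X_N)}$. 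Verifying that algebraic collapse is the substantive content, so I would instead prove the forward direction: compute $\Cov(Y_N,X_N)$ directly, recognize the between-neighborhood part as $\E[\Cov(Y,X\mid N)]$ scaled, and the leftover as the aggregated within piece, then divide by $\Var(X_N)$ and subtract $D$.

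The hard part will be bookkeeping the two different variance normalizations, $\Var(X)$ versus $\Var(X_N)=\gamma\Var(X)$, and correctly identifying which portion of $\Cov(Y_N,X_N)$ corresponds to $\delta_B$ versus the residual that survives aggregation. The law of total covariance $\Cov(Y,X) = \E[\Cov(Y,X\mid N)] + \Cov(\E[Y\mid N],\E[X\mid N]) = \delta_B + \Cov(Y_N,X_N)$ is the organizing identity I would lean on: it immediately shows $\Cov(Y_N,X_N) = \Cov(Y,X) - \delta_B$, and since $D = \Cov(Y,X)/\Var(X)$ while $D_{ER} = \Cov(Y_N,X_N)/\Var(X_N)$, careful substitution combined with Proposition~\ref{prop:decomp_conditional_assoc} should deliver $D_{ER} - D = \delta_W/\Var(X_N)$ after the $\gamma$ factors cancel. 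The only real risk is a sign or normalization slip, so I would cross-check the final expression against the $\delta_W=0$ case, where ecological regression is known to be unbiased.
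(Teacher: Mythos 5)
Your proposal is correct: the four facts you assemble in your final paragraph are exactly sufficient, and the algebra you defer does collapse. Explicitly, since $X$ is binary, $\Cov(X,Y)=\Var(X)\,D$, and the law of total covariance over $N$ gives $\Cov(Y_N,X_N)=\Cov(X,Y)-\delta_B$, so
\begin{align*}
D_{ER}-D &= \frac{\Cov(X,Y)-\delta_B}{\Var(X_N)}-D
= \frac{\Var(X)\,D-\delta_B-\Var(X_N)\,D}{\Var(X_N)}\\
&= \frac{(1-\gamma)\Var(X)\,D-\delta_B}{\Var(X_N)}
= \frac{\delta_W}{\Var(X_N)},
\end{align*}
the last equality substituting $(1-\gamma)\Var(X)\,D=\delta_B+\delta_W$ from Proposition~\ref{prop:decomp_conditional_assoc}. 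Structurally, your route differs from a direct derivation: you condition on $N$ (extracting $\delta_B$) and then trade $\delta_B$ for $\delta_W$ through Proposition~\ref{prop:decomp_conditional_assoc}, so the result becomes a corollary of that decomposition. The more direct argument conditions on $X$, which produces $\delta_W$ itself: $\Cov(Y,X_N)=\delta_W+\Cov(\E[Y\,|\,X],\E[X_N\,|\,X])$, and because $\E[X_N\,|\,X{=}1]-\E[X_N\,|\,X{=}0]=\Cov(X,X_N)/\Var(X)=\Var(X_N)/\Var(X)=\gamma$, the covariance term equals $D\,\gamma\,\Var(X)=D\,\Var(X_N)$; combined with $\Cov(Y,X_N)=\Cov(Y_N,X_N)$, this yields the proposition in one step and makes transparent where the $\Var(X_N)$ normalization comes from. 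Your route is nonetheless legitimate, since Proposition~\ref{prop:decomp_conditional_assoc} precedes this result in the paper, and it is shorter given that the decomposition is already in hand. Two small corrections to your exploratory middle paragraphs, which you rightly abandon: the route through $D_{NM}=\gamma D_{ER}$ and the neighborhood-model bias would be a forward reference (that bias is Proposition~\ref{prop:bias_nm}, established only after this proposition), and the tentative guess that $\Cov(Y_N,X_N)$ equals ``$\delta_B$ divided by an appropriate factor'' is wrong --- the correct relation is the additive one you use in the end, $\Cov(Y_N,X_N)=\Cov(X,Y)-\delta_B$.
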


We have the following direct corollary:
\begin{corr}\label{corr:unbiased_er} The ecological regression estimator, $\widehat{D}_{ER}$, is unbiased if and only if $\delta_W=0$.
\end{corr}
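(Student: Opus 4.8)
The plan is to read the corollary off directly from Proposition \ref{prop:bias_er}, which already supplies the exact bias of ecological regression,
\[
D_{ER} - D = \frac{\delta_W}{\Var(X_N)}.
\]
Since the corollary is an ``if and only if'' about when this bias vanishes, essentially all of the work has been done; what remains is to certify that the right-hand side is zero precisely when $\delta_W = 0$, and to pin down the sense in which $\widehat{D}_{ER}$ is ``unbiased.'' On the latter, I would interpret unbiasedness as the statement that the estimator's probability limit hits the target, i.e.\ $D_{ER} = D$, consistent with the paper's focus on the \emph{iid} asymptotic limits $D_{ER}$ and $D$.

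First I would record that $\Var(X_N) > 0$. This is not cosmetic: it is exactly the condition under which $D_{ER} = \Cov(Y_N, X_N)/\Var(X_N)$ is well-defined in the first place, and it holds in any nondegenerate EI setting, since $\Var(X_N) = 0$ would mean the group share $X_n$ is constant across all neighborhoods of positive mass, leaving no cross-neighborhood variation on which to run the ecological regression. With the denominator fixed to be strictly positive, the bias $\delta_W/\Var(X_N)$ is simply a nonzero scalar multiple of $\delta_W$.

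The conclusion is then the elementary observation that a quantity divided by a strictly positive number is zero if and only if its numerator is zero. Hence $D_{ER} - D = 0 \iff \delta_W = 0$, which is precisely the claim. I anticipate no real obstacle here; the single point deserving a word of care is the strict positivity of $\Var(X_N)$, because it is what upgrades the one-directional implication (``$\delta_W = 0$ forces no bias'') to the full equivalence (``no bias forces $\delta_W = 0$'' as well). Everything else follows immediately from Proposition \ref{prop:bias_er}.
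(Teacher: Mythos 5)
Your proposal is correct and matches the paper's treatment: the paper presents this as a direct corollary of Proposition \ref{prop:bias_er}, exactly as you do, with the equivalence following immediately from the bias formula $D_{ER} - D = \delta_W/\Var(X_N)$. Your explicit remark that $\Var(X_N) > 0$ is what makes the ``only if'' direction go through (and what makes $D_{ER}$ well-defined at all) is a detail the paper leaves implicit, but it is the same argument.
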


\vspace{6mm}

When the within-group association is non-zero, ecological regression is biased because differences in the prevalence of groups across neighborhoods is conflated by other neighborhood-level contextual effects. This phenomenon was recognized as early as \citet{goodman1953ecological}; it is sometimes referred to as aggregation bias in the EI literature. 

Interestingly, the condition provided in Corollary 1 differs slightly from the condition typically described in the EI literature under which ecological regression is unbiased, which is that
\begin{equation}\label{eq:ecreg_bias_lit}\text{Cov}(Y^1_N,X_N)=\text{Cov}(Y^0_N,X_N)=0
\end{equation}
where $Y^x_N=\mathbb{E}[Y|N,X=x]$ for $x=0,1$.\footnote{
A stronger condition that is sometimes discussed, which implies \eqref{eq:ecreg_bias_lit}, is the constancy model, under which each group's outcomes are the same in each neighborhood, $Y^x_n=Y^x$ for all $n$ and for $x=0,1$.} Although condition \eqref{eq:ecreg_bias_lit} implies $\delta_W=0$, the converse is not true. This distinction is significant because condition \eqref{eq:ecreg_bias_lit} can be empirically tested with the residuals from the ecological regression, as proposed, for instance, by \citet{loewen1989recent} and \citet{gelman2001models}. However, $\delta_W$ may be zero (so that ecological regression is unbiased), even when the ecological regression residuals indicate that the conditional expectation function of $Y_N$ given $X_N$ is highly non-linear (indicating that Condition \eqref{eq:ecreg_bias_lit} is violated). Figure \ref{fig:example_goodman} provides an illustration. 

\begin{figure}[ht]\caption{Ecological Regression with Contextual Effects}
{    \centering   
    \includegraphics[width=0.6\textwidth]{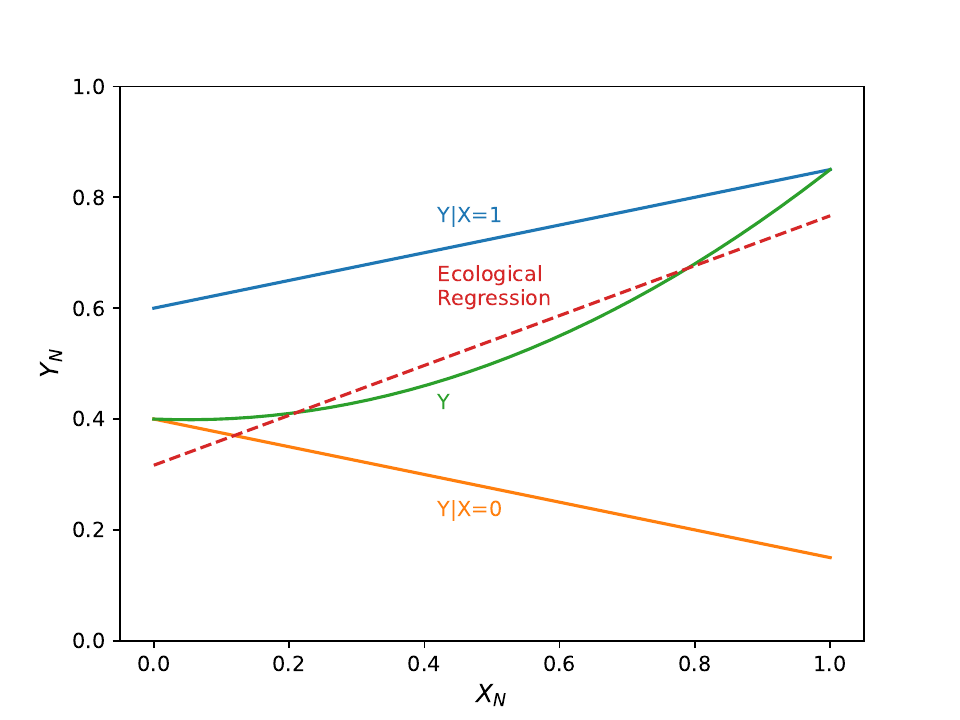}
    \vspace{-5pt} \\}
    {\footnotesize{Notes: The figure illustrates a setting in which $\text{Cov}(Y_N,X_N|X)\neq 0$ for each $X$, so that $\mathbb{E}[Y_N|X_N]$ is nonlinear. However, $\mathbb{E}[\text{Cov}(Y_N, X_N|X)] = \delta_W = 0$, so ecological regression is still unbiased. While the ecological regression line (red dashed) does not travel through $Y(0)$ or $Y(1)$ (green), the slope of the ecological regression line is equal to the difference between $Y(1)$ and $Y(0)$, so ecological regression is unbiased.}}
\label{fig:example_goodman}
\end{figure}
Our next result characterizes the identifying power of assumptions that sign the within-group (across neighborhood) association.

\begin{prop}[Bounds Based on the Sign of the Within-Group Association]\label{prop:cna_bounds}\hfill\break

(i) If $\delta_W\geq0$, then 
\begin{equation*} 
D\in\left[D_{MOB}^{-}\,,\, \text{Min}\left\{D_{MOB}^{+},\,D_{ER}\right\}\right]
\end{equation*}

(ii) If $\delta_W\leq0$, then 
\begin{equation*} 
D\in\left[\text{Max}\left\{D_{MOB}^{-},\,D_{ER}\right\}\,,\,D_{MOB}^{+} \right]
\end{equation*}

(iii) The bounds in (i) and (ii) are sharp in the absence of additional information.
\end{prop}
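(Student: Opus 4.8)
The plan is to reduce everything to two facts already established: the exact bias identity for ecological regression (Proposition \ref{prop:bias_er}) and the sharpness of the method-of-bounds interval (Proposition \ref{prop:mob_dif}). The first is the engine. Rearranging Proposition \ref{prop:bias_er} gives
\begin{equation*}
D = D_{ER} - \frac{\delta_W}{\Var(X_N)},
\end{equation*}
an identity that holds for \emph{every} joint distribution of $(X,Y)$ consistent with the observed $(p_n,X_n,Y_n)$, since both $D_{ER}$ and $\Var(X_N)>0$ are functions of the observed marginals alone. Thus, once the data are fixed, $D$ and $\delta_W$ are in affine bijection, and the sign of $\delta_W$ equals the sign of $D_{ER}-D$.

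For the containment direction (parts (i) and (ii)) I would argue as follows. In case (i), $\delta_W\geq 0$ together with $\Var(X_N)>0$ forces $D\leq D_{ER}$ through the identity above. Since any consistent distribution also satisfies the method-of-bounds constraints, Proposition \ref{prop:mob_dif} gives $D\in[D_{MOB}^-,D_{MOB}^+]$. Intersecting the two yields $D\in[D_{MOB}^-,\,\Min\{D_{MOB}^+,D_{ER}\}]$, which is the claim; case (ii) is identical with the inequality reversed, producing the lower truncation at $D_{ER}$.

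Sharpness (part (iii)) is the part that needs care, but the affine bijection makes it short. Fix any target value $d$ in the asserted interval for case (i); then $d\in[D_{MOB}^-,D_{MOB}^+]$ and $d\leq D_{ER}$. By the sharpness half of Proposition \ref{prop:mob_dif}, there exists a joint distribution consistent with $(p_n,X_n,Y_n)$ that realizes $D=d$. The key observation is that I need not build this distribution so as to \emph{also} control $\delta_W$: applying the identity above to that very distribution pins down $\delta_W=\Var(X_N)(D_{ER}-d)\geq 0$ automatically, so the distribution already satisfies the maintained sign restriction. Hence every point of the interval is attained under $\delta_W\geq 0$, establishing sharpness; case (ii) is symmetric, with $\delta_W=\Var(X_N)(D_{ER}-d)\leq 0$ whenever $d\geq D_{ER}$.

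The main obstacle here is conceptual rather than computational: recognizing that the sign constraint on $\delta_W$ translates, through Proposition \ref{prop:bias_er}, into a single one-sided linear constraint on $D$ relative to the fixed quantity $D_{ER}$, so that no construction beyond Proposition \ref{prop:mob_dif} is required to certify sharpness. I would also flag two degenerate edge cases and note they cause no trouble. First, the identity presumes $\Var(X_N)>0$, which I take as maintained since it is already needed to define $D_{ER}$. Second, when $D_{ER}$ falls outside $[D_{MOB}^-,D_{MOB}^+]$, the truncated interval in the relevant case collapses either to the full method-of-bounds interval or to the empty set; in the latter situation the sign restriction is simply unsatisfiable, so the statement holds vacuously.
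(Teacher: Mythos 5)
Your proof is correct and follows essentially the same route as the paper's: containment comes from combining the bias identity $D_{ER}-D=\delta_W/\Var(X_N)$ with the method-of-bounds interval, and sharpness comes from observing that any consistent distribution realizing $D=d\leq D_{ER}$ (which exists by Proposition \ref{prop:mob_dif}) automatically satisfies $\delta_W=(D_{ER}-d)\Var(X_N)\geq 0$, so no further construction is needed. Your explicit handling of the degenerate cases where $D_{ER}$ lies outside $[D_{MOB}^-,D_{MOB}^+]$ is a minor refinement of the paper's two-subcase split, not a different argument.
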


We now proceed analogously to derive bounds based on the sign of the between-group (within-neighborhood) association. First, Proposition \ref{prop:bias_nm} characterizes the bias of the neighborhood model:

\begin{prop}[Bias of the Neighborhood Model for the Difference in Group Means]\label{prop:bias_nm}
\begin{equation*}
    D_{NM} - D =  \frac{-\delta_B}{\Var(X)}
\end{equation*}
\end{prop}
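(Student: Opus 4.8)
The plan is to avoid recomputing anything from scratch and instead chain together the three results already in hand: Lemma \ref{lemma:nm_er} (relating the neighborhood model to ecological regression), Proposition \ref{prop:bias_er} (the bias of ecological regression), and Proposition \ref{prop:decomp_conditional_assoc} (the decomposition of $D$ into $\delta_B$ and $\delta_W$). Each of these is stated above, so the argument reduces to algebraic substitution plus one simplification that uses only the binary structure of $X$.

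Concretely, I would start from Lemma \ref{lemma:nm_er}, $D_{NM}=\gamma\,D_{ER}$, and substitute the bias expression from Proposition \ref{prop:bias_er}, $D_{ER}=D+\delta_W/\Var(X_N)$, to obtain $D_{NM}=\gamma D+\gamma\,\delta_W/\Var(X_N)$. Since $\gamma=\Var(X_N)/\Var(X)$, the second term collapses to $\delta_W/\Var(X)$, giving $D_{NM}=\gamma D+\delta_W/\Var(X)$ and hence $D_{NM}-D=(\gamma-1)D+\delta_W/\Var(X)$. The final step is to eliminate $(\gamma-1)D$ using Proposition \ref{prop:decomp_conditional_assoc}: rearranging $D=(\delta_B+\delta_W)/[(1-\gamma)\Var(X)]$ yields $(\gamma-1)D=-(\delta_B+\delta_W)/\Var(X)$, so that $D_{NM}-D=-(\delta_B+\delta_W)/\Var(X)+\delta_W/\Var(X)=-\delta_B/\Var(X)$, which is exactly the claim.

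As an independent verification (and the route I would take if the chained identities were not already available), I would compute each group-mean bias directly. Writing the true means via the law of total expectation as $Y^1=\E[X_N Y_N^1]/\E[X_N]$ and $Y^0=\E[(1-X_N)Y_N^0]/\E[1-X_N]$, and substituting the decomposition $Y_N=X_N Y_N^1+(1-X_N)Y_N^0$ into the neighborhood-model limits $Y^1_{NM}=\E[X_N Y_N]/\E[X_N]$ and $Y^0_{NM}=\E[(1-X_N)Y_N]/\E[1-X_N]$, the cross terms telescope to give $Y^1_{NM}-Y^1=-\delta_B/\E[X_N]$ and $Y^0_{NM}-Y^0=\delta_B/\E[1-X_N]$, where I use the footnote identity $\delta_B=\E[X_N(1-X_N)(Y_N^1-Y_N^0)]$. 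Differencing these and combining over a common denominator produces $D_{NM}-D=-\delta_B/(\E[X_N]\E[1-X_N])$.

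The only genuinely substantive step—and the one I would flag as the place where a slip is most likely—is recognizing that this denominator equals $\Var(X)$. This holds precisely because $X$ is binary, so $\E[X_N]=\E[X]$ and $\Var(X)=\E[X](1-\E[X])=\E[X_N]\E[1-X_N]$; the identity would fail for non-binary $X$. Beyond this, the main obstacle is purely bookkeeping: keeping the sign of $\delta_B$ straight through the cancellation of the $\delta_W$ terms in the chained argument, and tracking which cross terms survive in the direct argument. Both routes are short, so I would present the chained version as the main proof and relegate the direct computation to a remark if a self-contained check is desired.
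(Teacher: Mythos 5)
Your proof is correct, on both routes, but your primary (chained) argument takes a genuinely different path from the paper's. The paper's own treatment of the neighborhood model's bias works directly through covariances and the law of total covariance: one observes that $D_{NM}=\Cov(X_N,Y_N)/\Var(X)$ (a fact the paper records in Lemma \ref{lem:bound_relations}) and $D=\Cov(X,Y)/\Var(X)$ for binary $X$, and then applies $\Cov(X,Y)=\mathbb{E}[\Cov(X,Y|N)]+\Cov(X_N,Y_N)=\delta_B+\Cov(X_N,Y_N)$, which gives the result in one line; this is exactly the strategy visible in the paper's proof of the group-means analogue, Proposition \ref{prop:bias_nm_levels}. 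Your chained route instead treats Lemma \ref{lemma:nm_er}, Proposition \ref{prop:bias_er}, and Proposition \ref{prop:decomp_conditional_assoc} as a linear system and solves for the remaining bias; the algebra is right, and what it buys is the observation that these four identities are not logically independent---any three imply the fourth. Its cost is that it is only as self-contained as its inputs: you should note explicitly that no circularity arises, i.e., that the decomposition in Proposition \ref{prop:decomp_conditional_assoc} and the ecological-regression bias in Proposition \ref{prop:bias_er} are established without appeal to Proposition \ref{prop:bias_nm} (the paper's appendix ordering is consistent with this, as both are proven before the result in question). Your secondary (direct) route is essentially the paper's own approach: your cross-term cancellation via the footnote identity $\delta_B=\mathbb{E}[X_N(1-X_N)(Y_N^1-Y_N^0)]$ is the law of total covariance in disguise, it reproduces the level biases $-\delta_B/\mathbb{E}[X]$ and $\delta_B/(1-\mathbb{E}[X])$ of Proposition \ref{prop:bias_nm_levels}, and differencing them with the collapse $1/\mathbb{E}[X]+1/(1-\mathbb{E}[X])=1/\bigl(\mathbb{E}[X](1-\mathbb{E}[X])\bigr)=1/\Var(X)$ is precisely the step you correctly flag as the only place binariness of $X$ is used.
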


\begin{corr}\label{corr:unbiased_nm}

The neighborhood model estimator, $\widehat{D}_{NM}$, is unbiased if and only $\delta_B=0$.
\end{corr}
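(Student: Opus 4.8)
The plan is to obtain the corollary as an immediate consequence of Proposition \ref{prop:bias_nm}, which supplies the exact asymptotic bias of the neighborhood model, $D_{NM} - D = -\delta_B/\Var(X)$. Because the corollary is a biconditional about when this bias vanishes, the whole argument reduces to determining when the right-hand side equals zero. First I would fix the convention, established earlier in the paper, that $\widehat{D}_{NM}$ is assessed through its asymptotic limit $D_{NM}$ under i.i.d.\ sampling, so that ``$\widehat{D}_{NM}$ is unbiased'' means precisely $D_{NM} = D$, i.e., $D_{NM} - D = 0$.

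Substituting the expression from Proposition \ref{prop:bias_nm}, the statement $D_{NM} - D = 0$ becomes $-\delta_B/\Var(X) = 0$. The only point requiring care is that the denominator is strictly positive, so that the fraction vanishes exactly when its numerator does. For this I would invoke the maintained non-degeneracy assumption that some neighborhood $n$ has $p_n > 0$ and $X_n \in (0,1)$: since $X$ is binary, this forces $0 < \E[X] < 1$ and hence $\Var(X) = \E[X](1-\E[X]) > 0$. With $\Var(X) > 0$, I conclude that $-\delta_B/\Var(X) = 0$ if and only if $\delta_B = 0$, which establishes both directions of the claim.

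The main (and essentially only) obstacle is the bookkeeping needed to rule out a vanishing denominator, since without $\Var(X) > 0$ the biconditional could fail in degenerate cases; under the paper's standing assumptions this is immediate. An entirely parallel one-line argument yields Corollary \ref{corr:unbiased_er} from Proposition \ref{prop:bias_er}, with $\Var(X_N) > 0$ playing the role of $\Var(X)$.
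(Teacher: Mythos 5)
Your proposal is correct and matches the paper's approach: the paper presents Corollary~\ref{corr:unbiased_nm} as an immediate consequence of Proposition~\ref{prop:bias_nm}, exactly as you do. Your added care in verifying $\Var(X)>0$ from the non-degeneracy assumption (some $n$ with $p_n>0$ and $X_n\in(0,1)$ implies $0<\E[X]<1$) is a detail the paper leaves implicit but is the right justification for the biconditional.
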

\citet{freedman1991ecological} noted that the neighborhood model point-identifies $\mathbb{E}[Y\mid X,N]$ under a related but stronger condition, that there is no systematic difference in the outcome between groups within any neighborhood, i.e., $\mathbb{E}[Y|0,N]=\mathbb{E}[Y|1,N]$ for all $N$.

\vspace{6mm}

The following result uses Proposition \ref{prop:bias_nm} to derive sharp bounds for $D$ based on the sign of the between-group association.
 
\begin{prop}[Bounds Based on the Sign of the Between-Group Association]\label{prop:cga_bounds}\hfill \break

(i) If $\delta_B\geq0$, then 
\begin{equation*} 
D\in\left[D_{NM}\,,\,D_{MOB}^{+} \right]
\end{equation*}

(ii) If $\delta_B\leq0$, then 
\begin{equation*} 
D\in\left[D_{MOB}^{-}\,,\, D_{NM}\right]
\end{equation*}

(iii) The bounds in (i) and (ii) are sharp in the absence of additional information.
\end{prop}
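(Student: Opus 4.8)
The plan is to turn the exact bias formula of Proposition~\ref{prop:bias_nm} into an affine bijection between the target $D$ and the between-group association $\delta_B$ that holds across \emph{all} feasible joint distributions, and then read off both validity and sharpness from the method of bounds. Rewriting Proposition~\ref{prop:bias_nm} gives $\delta_B = \Var(X)\,(D - D_{NM})$. The crucial observation is that $\Var(X)$ and $D_{NM}$ are pinned down by the observed data alone: since $X$ is binary, $\Var(X) = \E[X](1-\E[X])$ with $\E[X] = \sum_n p_n X_n$, and $D_{NM}$ is defined purely in terms of $(p_n, X_n, Y_n)$. The non-degeneracy assumption forces $\E[X]\in(0,1)$ and hence $\Var(X)>0$, so over the joint distributions of $(X,Y)$ consistent with the observed marginals, $\delta_B$ is a strictly increasing affine function of $D$; in particular $\delta_B \geq 0 \iff D \geq D_{NM}$ and $\delta_B \leq 0 \iff D \leq D_{NM}$.

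With this equivalence, validity of (i) and (ii) is immediate. For (i), imposing $\delta_B \geq 0$ is the same as imposing $D \geq D_{NM}$; intersecting with the unconditional constraint $D \leq D_{MOB}^+$ from Proposition~\ref{prop:mob_dif} yields $D \in [D_{NM}, D_{MOB}^+]$. For (ii), $\delta_B \leq 0$ is the same as $D \leq D_{NM}$, which combined with $D \geq D_{MOB}^-$ gives $D \in [D_{MOB}^-, D_{NM}]$. I would note explicitly that $D_{NM}$ itself lies inside the method-of-bounds interval $[D_{MOB}^-, D_{MOB}^+]$: the assignment $Y_n^0 = Y_n^1 = Y_n$ for every $n$ is a feasible distribution (each $Y_n$ is a convex combination of values in $[\underline{Y}, \overline{Y}]$, hence lies in that interval), it satisfies constraint (iii) of the method of bounds by construction, and it has $\delta_B = 0$ and therefore difference exactly $D_{NM}$. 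This certifies that the stated endpoints are the correct intersection and that the intervals are nonempty.

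For sharpness (part (iii)) I would show that every point of the stated interval is attained by some feasible distribution respecting the sign restriction. Take case (i) and any $d \in [D_{NM}, D_{MOB}^+]$. Since $D_{NM} \in [D_{MOB}^-, D_{MOB}^+]$, we have $d \in [D_{MOB}^-, D_{MOB}^+]$, so the sharpness of the method of bounds (Proposition~\ref{prop:mob_dif}) supplies a feasible distribution consistent with $(p_n, X_n, Y_n)$ and with $D = d$. By the affine identity, this distribution automatically has $\delta_B = \Var(X)(d - D_{NM}) \geq 0$, so it satisfies the between-group sign restriction; hence $d$ is attainable under the maintained assumption. Case (ii) is symmetric, using $d \in [D_{MOB}^-, D_{NM}]$ and $\delta_B = \Var(X)(d - D_{NM}) \leq 0$.

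The argument is mostly bookkeeping once the affine reinterpretation is in place; the one step requiring genuine care is verifying that $D_{NM}$ falls within the method-of-bounds interval, since this is what collapses the naive intersections $\Max\{D_{MOB}^-, D_{NM}\}$ and $\Min\{D_{MOB}^+, D_{NM}\}$ to the clean endpoint $D_{NM}$ stated in the proposition. I would handle this by the explicit ``no within-neighborhood difference'' distribution above, which simultaneously certifies feasibility of $D_{NM}$ and pins its associated $\delta_B$ to zero. The sharpness direction then leans entirely on Proposition~\ref{prop:mob_dif} having already done the distribution-construction work, so no new constructions are needed beyond that single benchmark.
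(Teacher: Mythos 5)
Your proof is correct and follows essentially the same route as the paper: validity via the bias identity of Proposition~\ref{prop:bias_nm} intersected with the method-of-bounds interval, feasibility of $D_{NM}$ certified by the assignment $Y_n^1=Y_n^0=Y_n$ (the paper's Lemma on neighborhood-model feasibility), and sharpness by noting that any feasible distribution attaining $d$ in the stated interval must, by the same affine identity, have $\delta_B$ of the required sign. Your ``affine bijection'' framing is just a cleaner packaging of the paper's contradiction argument, so there is nothing substantive to distinguish the two.
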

\vspace{6mm}

It is interesting to compare the bounds in Proposition \ref{prop:cga_bounds} based on $\delta_B$ to the bounds in Proposition \ref{prop:cna_bounds} based on $\delta_W$. Both bounds make use of information about the sign of a conditional association to narrow the method of bounds interval in one direction. However, whereas the bounds based on $\delta_B$ are strictly within the interior of the method of bounds interval, the same is not true of $\delta_W$. The explanation for the difference is that the estimates from ecological regression are not guaranteed to be feasible, whereas the estimates from the neighborhood model are necessarily within the method of bounds interval. 

The bounds in Proposition \ref{prop:cga_bounds} are conceptually related to the ``bounded variation'' approach considered in \citet{manski2018credible}, which can entail imposing an assumption on the sign of $\mathbb{E}[Y|X=1,N]-\mathbb{E}[Y|X=0,N]$ to derive bounds for $\mathbb{E}[Y|X,N]$.\footnote{More generally, \citet{manski2018credible} considers identification of $p(Y,X,N)$ given assumptions that constrain the possible values of $p(Y|X=1,N)-p(Y|X=0,N)$. When $Y$ is binary, this term is equal to $\text{Cov}(Y,X|N)$, the analogue of $\delta_B$ for a specific $N$.} Because our focus is on $\mathbb{E}[Y|X]$ rather than $\mathbb{E}[Y|X,N]$, Proposition \ref{prop:cga_bounds} requires only that this sign restriction hold \emph{on average} across values of $N$. See Section \ref{sec:long_reg} for additional discussion.

Whereas the bounds in Propositions \ref{prop:cna_bounds} and \ref{prop:cga_bounds} are based on assumptions about the respective signs of $\delta_W$ and $\delta_B$, in some settings it may be more credible to assume that $\delta_W$ and $\delta_B$ share the same sign as one another, without taking a stance on what that shared sign is. For example, the researcher may have theoretical or institutional reasons to expect that contextual effects reinforce the individual-level (within-neighborhood) difference, such as through mutually reinforcing social norms. We refer to this condition as contextual reinforcement, and formally define it as follows:

\begin{definition}
   Contextual Reinforcement \emph{is satisfied if} $\delta_W\cdot\delta_B\geq0$. 
\end{definition}

\begin{prop}(Identification of Group Differences with Contextual Reinforcement)\label{prop:bounds_cr}

If $\delta_W\cdot\delta_B\geq0$, then:

(i) If $D_{ER}\geq D_{NM}$, then 
\begin{equation*}
0 \leq  D_{NM} \leq D \leq \text{Min}\left\{D_{MOB}^{+}\,,\, D_{ER}\right\}
\end{equation*}

(ii) If $D_{ER}\leq D_{NM}$, then 
\begin{equation*}
\text{Max}\left\{D_{MOB}^{-}\,,\, D_{ER}\right\} \leq D \leq D_{NM} \leq 0
\end{equation*}

(iii) The bounds in (i) and (ii) are sharp in the absence of additional information.
\end{prop}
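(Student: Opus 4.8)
The plan is to reduce the contextual-reinforcement condition—which on its face constrains the two-dimensional object $(\delta_W,\delta_B)$—to a one-dimensional constraint on $D$ alone, and then to read off the bounds by intersecting this constraint with the method-of-bounds interval. The starting point is that the bias identities of Propositions \ref{prop:bias_er} and \ref{prop:bias_nm} hold as unconditional algebraic identities, so for \emph{every} joint distribution consistent with the observed $(p_n,X_n,Y_n)$ we may write
\begin{equation*}
\delta_W=\Var(X_N)\,(D_{ER}-D), \qquad \delta_B=\Var(X)\,(D-D_{NM}),
\end{equation*}
where $D_{ER}$, $D_{NM}$, $\Var(X_N)$ and $\Var(X)$ are all fixed by the observed data. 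Since the two variances are positive, contextual reinforcement $\delta_W\cdot\delta_B\geq0$ is equivalent to $(D_{ER}-D)(D-D_{NM})\geq0$; as a downward-opening quadratic in $D$ with roots $D_{NM}$ and $D_{ER}$, this holds exactly when $D$ lies in the closed interval with endpoints $D_{NM}$ and $D_{ER}$.

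Next I would fix the orientation and sign of that interval using Lemma \ref{lemma:nm_er}. Because $D_{NM}=\gamma D_{ER}$ with $\gamma\in(0,1)$, we have $D_{ER}-D_{NM}=(1-\gamma)D_{ER}$, so $\sign(D_{ER}-D_{NM})=\sign(D_{ER})=\sign(D_{NM})$. In case (i), $D_{ER}\geq D_{NM}$ therefore forces $D_{ER}\geq0$ and $D_{NM}\geq0$, and the reinforcement constraint becomes $D_{NM}\leq D\leq D_{ER}$; case (ii) is the mirror image. It remains to intersect this with the method-of-bounds interval $[D_{MOB}^-,D_{MOB}^+]$, which constrains $D$ for every consistent distribution (Proposition \ref{prop:mob_dif}). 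Here I would use that $D_{NM}$ is itself a feasible value of $D$—attained by the distribution with $Y_n^1=Y_n^0=Y_n$ for every $n$, which is feasible since each $Y_n\in[\underline{Y},\overline{Y}]$—so that $D_{MOB}^-\leq D_{NM}\leq D_{MOB}^+$. In case (i) the lower endpoint $D_{NM}$ then dominates $D_{MOB}^-$, while the upper endpoint $D_{ER}$ may or may not exceed $D_{MOB}^+$, giving $0\leq D_{NM}\leq D\leq \min\{D_{MOB}^+,D_{ER}\}$; case (ii) yields the symmetric bounds. Non-emptiness of the intervals follows from $D_{NM}\leq D_{ER}$ (case (i)) and $D_{NM}\leq D_{MOB}^+$.

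Sharpness then follows almost immediately from the reduction together with the sharpness of the method of bounds. Because reinforcement is equivalent to the scalar condition $D\in[D_{NM},D_{ER}]$ (case (i)), any target value $d$ in the stated interval $[D_{NM},\min\{D_{MOB}^+,D_{ER}\}]$ automatically lies in $[D_{MOB}^-,D_{MOB}^+]$, so by the sharpness asserted in Proposition \ref{prop:mob_dif} there exists a consistent joint distribution with $D=d$; and since $d$ also lies in $[D_{NM},D_{ER}]$, that same distribution satisfies $(D_{ER}-d)(d-D_{NM})\geq0$, i.e.\ contextual reinforcement. No separate construction is required, and case (ii) is handled identically.

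I expect the main conceptual obstacle to be exactly the first step: recognizing that the bias identities pin down \emph{both} $\delta_W$ and $\delta_B$ as affine functions of the single unknown $D$, so that the ostensibly two-dimensional sign restriction $\delta_W\cdot\delta_B\geq0$ collapses to an interval constraint on $D$. Without this observation, sharpness would appear to demand a delicate construction that jointly controls $\delta_W$ and $\delta_B$; with it, sharpness piggybacks entirely on Proposition \ref{prop:mob_dif}. The only remaining care is the bookkeeping about which endpoint binds after intersecting with the method-of-bounds interval, which is resolved by the feasibility of $D_{NM}$ and by the sign facts from Lemma \ref{lemma:nm_er}.
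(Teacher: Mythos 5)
Your proof is correct, and its forward direction coincides with the paper's: both arguments combine the bias identities of Propositions \ref{prop:bias_er} and \ref{prop:bias_nm} with the relation $D_{NM}=\gamma D_{ER}$ (which the paper packages as a lemma showing $\sign(D_{ER})=\sign(D_{NM})$ and $|D_{ER}|\geq|D_{NM}|$) and with the feasibility of $D_{NM}$ to locate $D$ between $D_{NM}$ and $\min\{D_{MOB}^{+},D_{ER}\}$. Where you genuinely depart is sharpness. The paper proves it constructively: it writes any $D'$ in the interval as a mixture $\alpha D_{NM}+(1-\alpha)\min\{D_{MOB}^{+},D_{ER}\}$, invokes its feasibility-of-intermediates machinery (Lemma \ref{lem:feasb_intermediates}, Corollary \ref{lem:feasb_intermediate_disp}), and asserts that the mixture inherits the signs of $\delta_W$ and $\delta_B$ from the two extremes. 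You instead make explicit that contextual reinforcement is \emph{equivalent} to the scalar constraint $(D_{ER}-D)(D-D_{NM})\geq0$ --- valid because the bias identities hold for every joint distribution consistent with the aggregates, with $D_{ER}$, $D_{NM}$, $\Var(X)$, $\Var(X_N)$ all fixed by the observed data --- so the distribution delivered by the sharpness of Proposition \ref{prop:mob_dif} for any target $d$ in the stated interval automatically satisfies the assumption, and no new construction is needed. This packaging is cleaner (it is, in fact, the same style of argument the paper itself deploys for sharpness in Proposition \ref{prop:cna_bounds}(iii) and Theorem \ref{thm:identification_D}), and it makes Corollary \ref{lem:sign_D_contextual_reinf} an immediate by-product. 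The only caveat, shared implicitly by both arguments, is that $\Var(X_N)>0$ must hold for the division in your identities to be legitimate; this is anyway required for $D_{ER}$ to be defined at all.
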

\vspace{6mm}

\begin{corr}\label{lem:sign_D_contextual_reinf}If contextual reinforcement holds, then 
\begin{equation*}
    D\geq 0 \iff D_{ER}\geq D_{NM}
\end{equation*}
\end{corr}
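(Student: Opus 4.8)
The plan is to reduce the corollary to two facts: (a) the condition $D_{ER}\geq D_{NM}$ is equivalent to $D_{ER}\geq 0$, which follows mechanically from Lemma \ref{lemma:nm_er}; and (b) under contextual reinforcement, $D$ and $D_{ER}$ share the same sign. Combining (a) and (b) yields $D\geq 0 \iff D_{ER}\geq 0 \iff D_{ER}\geq D_{NM}$.

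First I would establish (a). By Lemma \ref{lemma:nm_er}, $D_{ER}-D_{NM} = D_{ER}-\gamma D_{ER} = (1-\gamma)D_{ER}$, and since $\gamma<1$ the factor $1-\gamma$ is strictly positive. Hence $D_{ER}-D_{NM}$ and $D_{ER}$ have the same sign, so $D_{ER}\geq D_{NM}\iff D_{ER}\geq 0$ (and likewise in the strict versions). The same relation shows $D_{NM}=\gamma D_{ER}$ inherits the sign of $D_{ER}$, a fact I will reuse for the boundary cases.

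Next, for (b), I would translate the sign assumptions into orderings using the bias results. Proposition \ref{prop:bias_er} gives $D_{ER}-D=\delta_W/\Var(X_N)$ and Proposition \ref{prop:bias_nm} gives $D-D_{NM}=\delta_B/\Var(X)$; since both variances are positive, $\sign(D_{ER}-D)=\sign(\delta_W)$ and $\sign(D-D_{NM})=\sign(\delta_B)$. Under contextual reinforcement $\delta_W$ and $\delta_B$ have a common (weak) sign, so I would split into the case $\delta_W,\delta_B\geq 0$, which gives $D_{NM}\leq D\leq D_{ER}$, and the symmetric case $\delta_W,\delta_B\leq 0$, which gives $D_{ER}\leq D\leq D_{NM}$. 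In the first case, if $D_{ER}\geq 0$ then $D\geq D_{NM}=\gamma D_{ER}\geq 0$, whereas if $D_{ER}<0$ then $D\leq D_{ER}<0$; the second case is handled symmetrically. Either way $D\geq 0\iff D_{ER}\geq 0$.

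The main obstacle is purely the handling of the equality/boundary cases, so as to deliver a genuine \emph{if and only if} rather than a one-sided implication: one must rule out the possibility that $D_{ER}<D_{NM}$ while $D=0$. This is exactly where the exact relation $D_{NM}=\gamma D_{ER}$ with $0<\gamma<1$ is essential, since it forces $D_{NM}$ to be strictly negative whenever $D_{ER}<D_{NM}$, which through the ordering $D\leq D_{NM}$ pins down $D<0$. (Implicitly this relies on $\Var(X_N)>0$, i.e.\ $\gamma>0$, which holds whenever $D_{ER}$ is well defined.) Alternatively, the statement can be read off directly from Proposition \ref{prop:bounds_cr}: its case (i) gives $D_{ER}\geq D_{NM}\Rightarrow D\geq 0$, and its case (ii) combined with the strictness observation above gives the contrapositive of the reverse implication.
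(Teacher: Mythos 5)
Your proof is correct and follows essentially the same route as the paper, which obtains this corollary directly from Proposition \ref{prop:bounds_cr}, whose proof rests on the same two ingredients you use: the exact relation $D_{NM}=\gamma D_{ER}$ with $0<\gamma<1$ (Lemma \ref{lemma:nm_er}, restated in the appendix as Lemma \ref{lem:bound_relations}) and the sandwiching of $D$ between $D_{NM}$ and $D_{ER}$ that follows from Propositions \ref{prop:bias_er} and \ref{prop:bias_nm} under contextual reinforcement. Your explicit handling of the boundary case (showing $D_{ER}<D_{NM}$ forces $D_{NM}=\gamma D_{ER}<0$ strictly, hence $D<0$, so the reverse implication is a genuine equivalence) is in fact slightly more careful than the paper, which leaves that step implicit.
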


The following Theorem summarizes our  results so far and provides corresponding results for cases in which both conditional associations have known signs but contextual reinforcement does not hold.

\begin{thm}[Identification of $D$ with Monotone EI]
\label{thm:identification_D} Table \ref{tab:summary_d} provides sharp bounds for $D$ under the specified information. If it is known that contextual reinforcement holds, the bounds on $D$ in one of the shaded table cells will apply. 
\end{thm}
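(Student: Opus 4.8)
The plan is to treat Theorem \ref{thm:identification_D} as the bookkeeping that assembles the single-assumption bounds (Propositions \ref{prop:cna_bounds} and \ref{prop:cga_bounds}) and the contextual-reinforcement bounds (Proposition \ref{prop:bounds_cr}) into one table, and then to fill in the remaining cells in which both $\delta_B$ and $\delta_W$ are signed but with \emph{opposite} signs. The organizing observation is that, given the observed $(p_n,X_n,Y_n)$, the quantities $D_{ER}$, $D_{NM}$, $D_{MOB}^{+}$, and $D_{MOB}^{-}$ are fixed constants, while Propositions \ref{prop:bias_er} and \ref{prop:bias_nm} give the identities $\delta_W=(D_{ER}-D)\Var(X_N)$ and $\delta_B=(D-D_{NM})\Var(X)$ for \emph{every} joint distribution consistent with the data. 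Since $\Var(X_N)>0$ and $\Var(X)>0$, the sign of each conditional association is then a deterministic function of the scalar $D$: $\delta_W\geq 0\iff D\leq D_{ER}$ and $\delta_B\geq 0\iff D\geq D_{NM}$. Signing a conditional association is therefore equivalent to restricting $D$ to one side of the corresponding constant.

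First I would record that, by Proposition \ref{prop:mob_dif}, the set of values of $D$ attainable by some data-consistent distribution is exactly $[D_{MOB}^{-},D_{MOB}^{+}]$. Combining this with the reduction above, each sign assumption intersects this interval with a half-line through $D_{ER}$ or $D_{NM}$, and the key point is that sharpness is preserved under such intersections. For any target $D$ in the intersected interval, Proposition \ref{prop:mob_dif} already supplies a data-consistent distribution realizing $D$; because the relevant sign is pinned down by $D$ alone, that distribution \emph{automatically} satisfies the imposed sign restriction(s). Hence no new constructions are needed, and it suffices to verify that the tabulated endpoints are the correct infimum and supremum of $[D_{MOB}^{-},D_{MOB}^{+}]$ under the active constraints.

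Next I would enumerate the cells. With only $\delta_W$ signed I recover Proposition \ref{prop:cna_bounds}; with only $\delta_B$ signed I recover Proposition \ref{prop:cga_bounds}, using $D_{NM}\in[D_{MOB}^-,D_{MOB}^+]$ to drop the redundant method-of-bounds endpoint. For the shaded contextual-reinforcement cells, both associations share a sign, forcing $D$ to lie between $D_{NM}$ and $D_{ER}$; invoking Lemma \ref{lemma:nm_er} to order these constants reproduces the two branches of Proposition \ref{prop:bounds_cr} and Corollary \ref{lem:sign_D_contextual_reinf}. The genuinely new cells are the opposite-sign cases: when $\delta_W\geq 0$ and $\delta_B\leq 0$ the constraints become $D\leq D_{ER}$ and $D\leq D_{NM}$, giving $D\in[D_{MOB}^-,\min\{D_{MOB}^+,D_{ER},D_{NM}\}]$, and symmetrically $D\in[\max\{D_{MOB}^-,D_{ER},D_{NM}\},D_{MOB}^+]$ when the signs are reversed; here $D_{NM}=\gamma D_{ER}$ with $\gamma\in(0,1)$ lets me simplify $\min\{D_{ER},D_{NM}\}$ and $\max\{D_{ER},D_{NM}\}$ according to the sign of $D_{ER}$.

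The step I expect to be the main obstacle is the opposite-sign cells, on two counts. First, I must confirm that intersecting two separately sharp one-sided bounds remains sharp; this is exactly where the deterministic-sign reduction does the work, since it guarantees that a single distribution realizing a given $D$ respects both sign constraints at once, so intersection cannot create unattainable interior points. Second, I must address non-emptiness: unlike the single-sign and contextual-reinforcement cells, an opposite-sign pair can be incompatible with the data (for instance when $\min\{D_{ER},D_{NM}\}<D_{MOB}^-$), in which case the identified set is empty and the sign pattern is refuted. I would note that whenever the maintained assumption actually holds the true $D$ witnesses non-emptiness, so an empty cell is informative rather than pathological, and then verify that each tabulated $\min$/$\max$ endpoint is attained inside $[D_{MOB}^-,D_{MOB}^+]$.
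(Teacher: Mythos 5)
Your proposal is correct and is essentially the paper's own argument: the paper likewise assembles Propositions \ref{prop:mob_dif}, \ref{prop:cna_bounds}, \ref{prop:cga_bounds}, and \ref{prop:bounds_cr} for the previously covered cells, handles the two opposite-sign cells by combining the bias identities of Propositions \ref{prop:bias_er} and \ref{prop:bias_nm} with the method-of-bounds constraints, and proves sharpness by exactly your ``deterministic-sign'' observation --- that given the aggregate data, $\delta_W=(D_{ER}-D)\Var(X_N)$ and $\delta_B=(D-D_{NM})\Var(X)$ are pinned down by $D$, so any data-consistent distribution realizing a $D$ in the derived range automatically satisfies the sign restrictions. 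Your presentation merely makes this reduction the explicit organizing principle for all cells at once (and adds a harmless remark on possible emptiness of the opposite-sign cells), rather than invoking it piecemeal as the paper does.
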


\section{Identification of Group Means}

This section provides results relating to the use of monotone ecological inference for identifying the group means, $Y^0$ and $Y^1$. 

The ecological regression estimates for the group means are biased under the same condition, and in a related manner, as the ecological regression estimate for the difference in group means described in Proposition \ref{prop:bias_er}. A similar parallel exists for the neighborhood model estimates of the group means compared to the neighborhood level estimate for the difference in group means. The nature of these relationships are formalized in the following two propositions.

\begin{prop}[Bias of Ecological Regression for Group Means]\label{prop:bias_er_levels}
\begin{equation*}
    Y^1_{ER} - Y^1=   \frac{ \delta_W \, (1-\mathbb{E}[X])}{\Var(X_N)}
\end{equation*}
\begin{equation*}
    Y^0_{ER} - Y^0=   -\frac{ \delta_W \, \mathbb{E}[X]}{\Var(X_N)}
\end{equation*}
\end{prop}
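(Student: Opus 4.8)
The plan is to reduce the two-means statement to the already-proved difference result (Proposition \ref{prop:bias_er}) together with a single ``adding-up'' identity, and then solve a $2\times 2$ linear system. Writing $a := Y^1_{ER}-Y^1$ and $b := Y^0_{ER}-Y^0$ for the two bias terms, Proposition \ref{prop:bias_er} immediately supplies one equation: since $Y^1_{ER}-Y^0_{ER}=D_{ER}$ and $Y^1-Y^0=D$, we have $a-b = D_{ER}-D = \delta_W/\Var(X_N)$. I only need one more independent equation relating $a$ and $b$.

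The second equation comes from observing that both the true group means and their ecological-regression counterparts recover the overall mean $\E[Y]$ when averaged with the population group weights $\E[X]$ and $1-\E[X]$. First I would record the law-of-iterated-expectations facts $\E[X_N]=\E[X]$ and $\E[Y_N]=\E[Y]$. For the true means, the law of total expectation gives $\E[X]\,Y^1+(1-\E[X])\,Y^0=\E[Y]$. For the ecological-regression means, I would substitute the definitions $Y^1_{ER}=\E[Y_N]+D_{ER}(1-\E[X_N])$ and $Y^0_{ER}=\E[Y_N]-D_{ER}\,\E[X_N]$ and compute the same weighted average; the coefficient multiplying $D_{ER}$ is $\E[X](1-\E[X_N])-(1-\E[X])\E[X_N]$, which vanishes once $\E[X_N]=\E[X]$ is used, leaving $\E[X]\,Y^1_{ER}+(1-\E[X])\,Y^0_{ER}=\E[Y]$. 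Subtracting the two identities yields the desired second equation, $\E[X]\,a+(1-\E[X])\,b=0$.

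With the system $a-b=\delta_W/\Var(X_N)$ and $\E[X]\,a+(1-\E[X])\,b=0$ in hand, the final step is routine elimination: solving for $a$ gives $a=\delta_W(1-\E[X])/\Var(X_N)$, and back-substitution gives $b=-\delta_W\,\E[X]/\Var(X_N)$, which are exactly the two claimed expressions. The only step requiring any care is the cancellation of the $D_{ER}$ term in the weighted average of the ecological-regression means; this is where the equality $\E[X_N]=\E[X]$ is essential, and it is the one place where a weighting slip would propagate into both final formulas. Everything else is linear bookkeeping, so I expect no substantive obstacle beyond that verification.
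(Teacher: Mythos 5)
Your proof is correct, and it takes a genuinely different route from the paper's. The paper proves each bias formula by direct computation: it rescales $Y^1_{ER}-Y^1$ by $\Var(X_N)/(1-\E[X])=\gamma\,\E[X]$, expands both terms into covariances ($\gamma\E[X]Y^1_{ER}=\gamma\E[X]\E[Y]+\Cov(X_N,Y_N)$ and $\gamma\E[X]Y^1=\gamma\E[XY]$), and then simplifies $\Cov(X_N,Y_N)-\gamma\Cov(X,Y)$ to $\delta_W$ by invoking Proposition \ref{prop:bias_er}; it then repeats an analogous computation for $Y^0$. You instead use Proposition \ref{prop:bias_er} only once, to get the difference of the two biases, and close the system with the adding-up identity $\E[X]\,a+(1-\E[X])\,b=0$, which follows from the fact that both the true group means and the ER group means aggregate (with population weights) to $\E[Y]$ — the former by the law of total expectation, the latter by your cancellation argument using $\E[X_N]=\E[X]$. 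Your verification of that cancellation is the one step needing care, and you carry it out correctly. What your approach buys is economy and structural insight: it avoids duplicating the covariance algebra for each group and makes explicit the reason the two biases must have opposite signs with weights $1-\E[X]$ and $\E[X]$ — namely, that ecological regression is always consistent for the overall mean, so its group-level errors must offset exactly in the population-weighted average. The paper's approach, by contrast, verifies each formula self-containedly by expansion, at the cost of doing essentially the same computation twice.
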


\begin{prop}[Bias of Neighborhood Model for Group Means]\label{prop:bias_nm_levels}
\begin{equation*}
    Y^1_{NM} - Y^1=   -\frac{ \delta_B}{\mathbb{E}[X]}
\end{equation*}
\begin{equation*}
    Y^0_{NM} - Y^0=   \frac{ \delta_B}{1-\mathbb{E}[X]}
\end{equation*}
\end{prop}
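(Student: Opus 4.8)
The plan is to reduce each group mean to a ratio with a common denominator ($\mathbb{E}[X]$ for the $Y^1$ quantities, $1-\mathbb{E}[X]$ for the $Y^0$ quantities) and then isolate the numerator difference, which will turn out to be $\mp\delta_B$. The key preliminary facts are that by iterated expectations $\mathbb{E}[X_N]=\mathbb{E}[X]$ and $\mathbb{E}[Y_N]=\mathbb{E}[Y]$, and that, because $X$ is binary, the group means themselves can be written as ratios of unconditional expectations: $Y^1=\mathbb{E}[Y\mid X=1]=\mathbb{E}[XY]/\mathbb{E}[X]$ and $Y^0=\mathbb{E}[(1-X)Y]/(1-\mathbb{E}[X])$.

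First I would establish the central identity $\delta_B=\mathbb{E}[XY]-\mathbb{E}[X_N Y_N]$. This follows directly from the definition $\delta_B=\mathbb{E}[\Cov(Y,X\mid N)]$: writing the conditional covariance as $\Cov(Y,X\mid N)=\mathbb{E}[XY\mid N]-X_N Y_N$ and taking expectations, the first term collapses to $\mathbb{E}[XY]$ by the law of total expectation, leaving $\delta_B=\mathbb{E}[XY]-\mathbb{E}[X_N Y_N]$.

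Next, for the $Y^1$ claim, I would write $Y^1_{NM}=\mathbb{E}[X_N Y_N]/\mathbb{E}[X]$ (using $\mathbb{E}[X_N]=\mathbb{E}[X]$) together with $Y^1=\mathbb{E}[XY]/\mathbb{E}[X]$, so that $Y^1_{NM}-Y^1=(\mathbb{E}[X_N Y_N]-\mathbb{E}[XY])/\mathbb{E}[X]=-\delta_B/\mathbb{E}[X]$ by the central identity. For the $Y^0$ claim I would proceed symmetrically: $Y^0_{NM}-Y^0=(\mathbb{E}[(1-X_N)Y_N]-\mathbb{E}[(1-X)Y])/(1-\mathbb{E}[X])$, and expanding each numerator term as $\mathbb{E}[Y]-\mathbb{E}[X_N Y_N]$ and $\mathbb{E}[Y]-\mathbb{E}[XY]$ respectively, the $\mathbb{E}[Y]$ terms cancel and the difference becomes $\mathbb{E}[XY]-\mathbb{E}[X_N Y_N]=+\delta_B$, yielding $Y^0_{NM}-Y^0=\delta_B/(1-\mathbb{E}[X])$.

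There is no serious obstacle here; the result is essentially a bookkeeping exercise in iterated expectations. The only point requiring care is the step that uses the binary structure of $X$ to rewrite $Y^x$ as a ratio of unconditional expectations, since this is what places the individual-level group means and the neighborhood-model aggregates on a common footing, with their gap reducing to the single quantity $\delta_B$. As an internal consistency check I would note agreement with Proposition \ref{prop:bias_nm}: subtracting the two displayed equations gives $D_{NM}-D=-\delta_B\left(\tfrac{1}{\mathbb{E}[X]}+\tfrac{1}{1-\mathbb{E}[X]}\right)=-\delta_B/\big(\mathbb{E}[X](1-\mathbb{E}[X])\big)=-\delta_B/\Var(X)$, using $\Var(X)=\mathbb{E}[X](1-\mathbb{E}[X])$ for binary $X$.
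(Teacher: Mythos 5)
Your proof is correct and follows essentially the same route as the paper's: both write $Y^1=\mathbb{E}[XY]/\mathbb{E}[X]$ and $Y^0=\mathbb{E}[(1-X)Y]/(1-\mathbb{E}[X])$ using the binary structure of $X$, reduce each bias to the common numerator $\mathbb{E}[X_N Y_N]-\mathbb{E}[XY]$, and identify that quantity as $\mp\delta_B$. The only cosmetic difference is that the paper makes this identification by mean-centering and invoking the law of total covariance, $\Cov(X,Y)=\delta_B+\Cov(X_N,Y_N)$, whereas you derive the equivalent identity $\delta_B=\mathbb{E}[XY]-\mathbb{E}[X_N Y_N]$ directly from the definition of the conditional covariance via iterated expectations.
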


Applying these results in the manner employed in the prior subsection yields the following identification results on $Y^1$ and $Y^0$.

\begin{thm}[Identification of $Y^1$ with Monotone EI]\label{thm:identification_Y1}

 Table \ref{tab:summary_y1} provides sharp bounds for $Y^1$ under the specified information. If it is known that contextual reinforcement holds, the bounds on $Y^1$ in one of the shaded table cells will apply. 
\end{thm}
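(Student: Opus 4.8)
The plan is to deduce the sharp bounds for $Y^1$ directly from the already-established sharp bounds for $D$ in Theorem \ref{thm:identification_D}, via a single strictly increasing affine change of variables, so that no new feasibility constructions are required. The starting point is the deterministic identity linking the two target parameters: since $\E[Y]=\E[X]\,Y^1+(1-\E[X])\,Y^0$ and $D=Y^1-Y^0$, eliminating $Y^0$ gives
\begin{equation*}
Y^1=\E[Y]+(1-\E[X])\,D .
\end{equation*}
Writing $\phi(t):=\E[Y]+(1-\E[X])\,t$, this says $Y^1=\phi(D)$ for the true parameters, where $\phi$ is affine with slope $1-\E[X]>0$ (the group is nondegenerate), hence a strictly increasing bijection.

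First I would verify that the same map $\phi$ carries each difference-estimator appearing in Theorem \ref{thm:identification_D} to its group-mean counterpart in Table \ref{tab:summary_y1}. For ecological regression this is immediate from $\E[X_N]=\E[X]$ and $\E[Y_N]=\E[Y]$, which give $\phi(D_{ER})=\E[Y]+(1-\E[X])D_{ER}=Y^1_{ER}$; and the neighborhood-model estimators satisfy the same aggregation identity $\E[Y]=\E[X]\,Y^1_{NM}+(1-\E[X])\,Y^0_{NM}$, so $\phi(D_{NM})=Y^1_{NM}$. For the method-of-bounds endpoints the key computational fact is that $X$ is binary, so $\Var(X)=\E[X](1-\E[X])$ and therefore $\tfrac{1-\E[X]}{\Var(X)}=\tfrac{1}{\E[X]}$. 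Substituting this into $\phi(D_{MOB}^{+})$ makes it equal $\tfrac{1}{\E[X]}\E[\min\{Y_n-\underline{Y}(1-X_n),\overline{Y}X_n\}]=Y^{1+}_{MOB}$, and the analogous reduction for $\phi(D_{MOB}^{-})$—using the elementary identity $c-\min\{a,b\}=\max\{c-a,c-b\}$ to rewrite the numerator—gives $Y^{1-}_{MOB}$.

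Next I would transport the bounds. Because $\phi$ is strictly increasing it commutes with $\text{Min}$ and $\text{Max}$ and maps intervals to intervals, so applying $\phi$ to each sharp interval for $D$ in Theorem \ref{thm:identification_D} produces precisely the interval for $Y^1$ claimed in Table \ref{tab:summary_y1}; for instance, the $\delta_W\le 0$ interval $[\text{Max}\{D_{MOB}^{-},D_{ER}\},\,D_{MOB}^{+}]$ maps to $[\text{Max}\{Y^{1-}_{MOB},Y^1_{ER}\},\,Y^{1+}_{MOB}]$. Crucially, the sign restrictions on $\delta_W$ and $\delta_B$, and the contextual-reinforcement hypothesis $\delta_W\cdot\delta_B\ge 0$, are properties of the underlying joint distribution and are unaffected by the reparametrization; likewise, the observable comparison that selects the relevant shaded cell is still $D_{ER}$ versus $D_{NM}$ (equivalently $Y^1_{ER}$ versus $Y^1_{NM}$, since $\phi$ is increasing), matching Corollary \ref{lem:sign_D_contextual_reinf}.

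Finally, sharpness transfers automatically: every feasible joint distribution consistent with $(p_n,X_n,Y_n)$ realizes its value of $D$ and of $Y^1=\phi(D)$ simultaneously, and has fixed values of $\delta_W$ and $\delta_B$, so a value $y$ is attainable for $Y^1$ under a given information cell if and only if $\phi^{-1}(y)$ is attainable for $D$ under the same information. Thus the $\phi$-image of each sharp $D$-interval is itself sharp. The main obstacle is the bookkeeping in the second step—confirming that $\phi$ maps every method-of-bounds endpoint to the stated group-mean endpoint, which rests entirely on the binary identity $\Var(X)=\E[X](1-\E[X])$; once that is checked, the monotonicity of $\phi$ makes the remaining claims, sharpness included, follow with no further constructions. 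An alternative route would bypass the reduction and rerun the arguments of Section 3 using Proposition \ref{prop:mob_levels} together with the bias formulas in Propositions \ref{prop:bias_er_levels} and \ref{prop:bias_nm_levels}, but that would require reproving sharpness from scratch.
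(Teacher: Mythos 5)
Your proposal is correct, but it takes a genuinely different route from the paper's proof. The paper proves Theorem \ref{thm:identification_Y1} by rerunning the Section 3 machinery at the level of group means: it invokes the level-bias formulas of Propositions \ref{prop:bias_er_levels} and \ref{prop:bias_nm_levels} to obtain one-sided bounds such as $Y^1 \le Y^1_{ER}$ when $\delta_W \ge 0$ and $Y^1 \ge Y^1_{NM}$ when $\delta_B \ge 0$, intersects these with the method-of-bounds interval, and then re-establishes sharpness with the same mixture construction (Lemma \ref{lem:feasb_intermediates}) used for $D$ --- i.e., precisely the ``alternative route'' you mention in your last sentence. Your reduction instead observes that $Y^1=\phi(D)$ with $\phi(t)=\E[Y]+(1-\E[X])\,t$ a strictly increasing affine map determined entirely by observables, verifies that $\phi$ carries $D_{ER}$, $D_{NM}$, $D_{MOB}^{\pm}$ to $Y^1_{ER}$, $Y^1_{NM}$, $Y^{1\pm}_{MOB}$ (your checks are right: the ER case follows from $\E[X_N]=\E[X]$ and $\E[Y_N]=\E[Y]$, the NM case from the aggregation identity $\E[Y]=\E[X]Y^1_{NM}+(1-\E[X])Y^0_{NM}$, and the MOB cases from $\Var(X)=\E[X](1-\E[X])$ together with $c-\min\{a,b\}=\max\{c-a,c-b\}$), and then transfers both the bounds and their sharpness, since any feasible joint distribution realizes $D$ and $Y^1=\phi(D)$ simultaneously and the hypotheses on $\delta_W,\delta_B$ are properties of the distribution rather than of the parametrization. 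This buys you something concrete: sharpness is inherited from Theorem \ref{thm:identification_D} with no new feasibility constructions, whereas the paper's sharpness argument here is only sketched; and the table becomes mechanical to generate. Indeed, your method flags an apparent typo in the paper: for the cell with $\delta_W$ unrestricted and $\delta_B\le 0$, the transfer gives $\phi\left(\left[D_{MOB}^{-},D_{NM}\right]\right)=\left[Y_{MOB}^{1-},Y_{NM}^{1}\right]$, which is sharper than the stated $\left[Y_{MOB}^{1-},Y_{MOB}^{1+}\right]$ and matches the symmetric cell of Table \ref{tab:summary_y0}. The one hypothesis your argument leans on --- $\E[X]<1$ so that $\phi$ is strictly increasing --- is guaranteed by the paper's maintained assumption that some neighborhood has $p_n>0$ and $X_n\in(0,1)$. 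What you give up relative to the paper is only the standalone derivation of the level-bias formulas, which your route never needs.
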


\begin{thm}[Identification of $Y^0$ with Monotone EI]
\label{thm:identification_Y0}

 Table \ref{tab:summary_y0} provides sharp bounds for $Y^0$ under the specified information. If it is known that contextual reinforcement holds, the bounds on $Y^0$ in one of the shaded table cells will apply. 
\end{thm}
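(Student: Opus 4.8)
The plan is to prove Theorem~\ref{thm:identification_Y0} by replaying the argument behind Theorem~\ref{thm:identification_Y1}, substituting the $Y^0$ bias formulas for the $Y^1$ ones and tracking the induced sign changes. The three inputs are the method-of-bounds interval $[Y^{0-}_{MOB},\,Y^{0+}_{MOB}]$ from Proposition~\ref{prop:mob_levels}, the ecological-regression bias $Y^0_{ER}-Y^0 = -\delta_W\,\E[X]/\Var(X_N)$ from Proposition~\ref{prop:bias_er_levels}, and the neighborhood-model bias $Y^0_{NM}-Y^0 = \delta_B/(1-\E[X])$ from Proposition~\ref{prop:bias_nm_levels}. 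Since $\E[X]$, $1-\E[X]$, and $\Var(X_N)$ are all strictly positive, each identity converts a sign assumption on a conditional association into a one-sided bound, and assembling these one-sided bounds, intersected with the MOB interval, fills in every cell of Table~\ref{tab:summary_y0}.

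First I would establish the cells in which only one association is signed. The ER-bias identity shows $Y^0_{ER}$ is a lower bound for $Y^0$ when $\delta_W\geq 0$ and an upper bound when $\delta_W\leq 0$, giving $[\max\{Y^{0-}_{MOB},\,Y^0_{ER}\},\,Y^{0+}_{MOB}]$ and $[Y^{0-}_{MOB},\,\min\{Y^{0+}_{MOB},\,Y^0_{ER}\}]$ respectively, the $Y^0$-analogue of Proposition~\ref{prop:cna_bounds}. Symmetrically, the NM-bias identity makes $Y^0_{NM}$ an upper bound when $\delta_B\geq 0$ and a lower bound when $\delta_B\leq 0$; because the neighborhood-model estimate always lies inside the MOB interval, these collapse to $[Y^{0-}_{MOB},\,Y^0_{NM}]$ and $[Y^0_{NM},\,Y^{0+}_{MOB}]$, the analogue of Proposition~\ref{prop:cga_bounds}. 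For the cells in which both signs are known I would intersect the two corresponding one-sided bounds.

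For the shaded contextual-reinforcement cells, the key is that $\delta_W\cdot\delta_B\geq 0$ forces the two associations to share a sign, and that the shared sign is revealed by the data: subtracting the two bias identities gives $Y^0_{ER}-Y^0_{NM} = -\delta_W\,\E[X]/\Var(X_N) - \delta_B/(1-\E[X])$, so both associations are nonnegative exactly when $Y^0_{ER}\leq Y^0_{NM}$ and nonpositive exactly when $Y^0_{ER}\geq Y^0_{NM}$ (the $Y^0$ counterpart of Corollary~\ref{lem:sign_D_contextual_reinf}). In the first case $Y^0_{ER}$ is a lower bound and $Y^0_{NM}$ an upper bound, yielding $Y^0\in[\max\{Y^{0-}_{MOB},\,Y^0_{ER}\},\,Y^0_{NM}]$; in the second the roles reverse, yielding $Y^0\in[Y^0_{NM},\,\min\{Y^{0+}_{MOB},\,Y^0_{ER}\}]$. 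Thus exactly one shaded cell is selected by the observed ordering of $Y^0_{ER}$ and $Y^0_{NM}$, matching the structure of Proposition~\ref{prop:bounds_cr}.

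The main obstacle is sharpness. The outer direction, that no value outside each interval is attainable, follows immediately from the bias identities together with the sharpness of the MOB bounds in Proposition~\ref{prop:mob_levels}. The inner direction requires exhibiting, for every target $y^0$ in a claimed interval, a joint law of $(X,Y,N)$ that matches the observed $(X_n,Y_n,p_n)$, respects $Y_n^x\in[\underline{Y},\overline{Y}]$ with $X_nY_n^1+(1-X_n)Y_n^0=Y_n$, carries the assumed sign of $\delta_W$ and/or $\delta_B$, and yields $Y^0=y^0$. I would build such a law by continuous deformation: interpolate the neighborhood-level assignments $\{Y_n^0(t),Y_n^1(t)\}$ between the configuration achieving the relevant MOB endpoint and the configuration realizing the ER or NM value (where the governing association vanishes), check that $\delta_W(t)$, $\delta_B(t)$, and $Y^0(t)$ vary continuously while the sign constraint is preserved along the path, and invoke the intermediate value theorem to hit $y^0$. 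The delicate point is maintaining feasibility and the sign restriction simultaneously, but since this is precisely where the $Y^0$ argument differs from those for $D$ and $Y^1$ only in relabeling of signs, I expect the constructions used for Theorems~\ref{thm:identification_D} and~\ref{thm:identification_Y1} to transfer directly.
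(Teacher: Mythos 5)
Your proposal is correct and takes essentially the same route as the paper: the bias identities (Propositions \ref{prop:bias_er_levels} and \ref{prop:bias_nm_levels}) convert each sign assumption into a one-sided bound anchored at $Y^0_{ER}$ or $Y^0_{NM}$, these are intersected with the method-of-bounds interval, the contextual-reinforcement cells follow because the shared sign is revealed by the data, and sharpness comes from interpolating feasible neighborhood-level configurations, which is exactly the paper's mixture argument (Lemma \ref{lem:feasb_intermediates} and Corollary \ref{lem:feasb_intermediate_disp}). One remark: the step you call delicate --- preserving the sign restrictions along the interpolation path --- is automatic, because given the aggregate data the identities pin down $\delta_W = \left(Y^0-Y^0_{ER}\right)\Var(X_N)/\E[X]$ and $\delta_B = \left(Y^0_{NM}-Y^0\right)\left(1-\E[X]\right)$ as functions of $Y^0$ alone, so any feasible configuration attaining a value $y^0$ in the claimed interval necessarily satisfies them.
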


Finally, our discussion has so far assumed that group membership is binary. Suppose instead that there are $G+1$ discrete groups, indexed by $g\in\{0,1,...,G\}.$ To identify $\mathbb{E}[Y|G=g]$, we can define $X^g\in\{0,1\}$ to indicate whether or not an individual belongs to group $g$, $$X^g=1 \iff G=g$$ and then apply the bounds in Table 2 with respect to $X^g$ instead of $X$. Note that unlike the binary case, the relevant conditional associations will differ for each group. 
For example, the within-group association that determines the bias of the ecological regression estimator for $\mathbb{E}[Y|G=g]$ is given by $$\delta_W^g = \mathbb{E}[\text{Cov}(Y,X_N^g|X^g)],$$
where $X_N^g=\mathbb{E}[X^g \mid N]$. Similarly, the between-group association that determines the bias of the neighborhood model estimator for $\mathbb{E}[Y|G=g]$ is given by $$\delta_B^g = \mathbb{E}[\text{Cov}(Y,X^g|N)].$$ 

\pagebreak
\begin{landscape}
\renewcommand*\arraystretch{1.5}
\begin{table}
  \caption{Identification of the Difference in Group Means%
  }
  \label{tab:summary_d}
    \centering
    \begin{adjustbox}{width=1.2\textwidth,center=\textwidth}
    \begin{tabular}{c|c|c|c|c}         \backslashbox{Within}{Between}
&$\delta_B=\,? %
$    & $\delta_B\geq0$ &$\delta_B \leq0$
 &$\delta_B=0$\\

         \hline
     
        $\delta_W =\,?%
        $  &$D \in \left[D_{MOB}^{-},D_{MOB}^+\right]$  
        & $D\in \left[D_{NM},D_{MOB}^+\right]$ & $D\in\left[D_{MOB}^-,D_{NM}\right]$ & $D=D_{NM}$\\
     $\delta_W \geq 0$  & $D\in\left[D_{MOB}^{-}\,,\, \Min\left\{D_{MOB}^{+},\,D_{ER}\right\}\right]$ &   \cellcolor{blue!25}$0 \leq  D_{NM} \leq D \leq \Min \left\{D_{MOB}^{+}\,,\, D_{ER}\right\}$ &$D \in \left[ D_{MOB}^{-} , \min\left\{D_{NM},D_{ER},D_{MOB}^{+}\right\} \right]$&$D=D_{NM} $  \\
    $\delta_W \leq 0 $  &$D\in\left[\Max\left\{D_{MOB}^{-},\,D_{ER}\right\}\,,\,D_{MOB}^{+} \right] $ & $D 
     \in \left[ \Max \left\{D_{MOB}^{-},D_{ER},D_{NM}\right\}, D_{MOB}^+\right]$  &\cellcolor{blue!25}$\Max\left\{D_{MOB}^{-}\,,\, D_{ER}\right\} \leq D \leq D_{NM} \leq 0 $&$D=D_{NM}$\\
    $\delta_W=0$&    $D=D_{ER}$ & $D=D_{ER}$ & $D=D_{ER}$&$D=0$ \\
    \end{tabular}
        \end{adjustbox}
    \label{tab:theorem}
\end{table}

\renewcommand*\arraystretch{1.5}
\begin{table}
  \caption{Identification of the Group 1 Mean%
  }
  \label{tab:summary_y1}
    \centering
    \begin{adjustbox}{width=1.2\textwidth,center=\textwidth}
    \begin{tabular}{c|c|c|c|c}         \backslashbox{Within}{Between}
&$\delta_B =\,?%
$    & $\delta_B\geq0$ &$\delta_B \leq0$
 &$\delta_B=0$\\

         \hline
     
        $\delta_W =\,? %
        $ 
        & $Y^1 \in \left[Y_{MOB}^{1-} , Y_{MOB}^{1+}\right]$ 
        & $Y^1 \in \left[Y_{NM}^{1}, Y_{MOB}^{1+}\right]$
        & $Y^1 \in \left[Y_{MOB}^{1-}, Y_{MOB}^{1+}\right]$
        &$Y^1=Y^{NM}$\\
      $\delta_W\geq0$
      & $Y^1 \in \left[Y_{MOB}^{1-}, \Min\left\{Y_{ER}^{1},Y_{MOB}^{1+}\right\}\right]$
      &\cellcolor{blue!25}$Y_{NM}^{1}\leq Y^{1} \leq \Min\left\{Y_{ER}^{1},Y_{MOB}^{1+}\right\}$
      &$Y^1 \in \left[Y_{MOB}^{1-}, \Min\left\{Y_{NM}^{1},Y_{ER}^1,Y_{MOB}^{1+}\right\}\right]$
      &$Y^1=Y^{NM}$  \\
    $\delta_W \leq 0 $  
    &$Y^1 \in \left[\Max\left\{Y_{MOB}^{1-},Y_{ER}^{1}\right\}, Y_{MOB}^{1+} \right]$
    &$Y^1 \in \left[\Max\left\{Y_{ER}^{1},Y_{NM}^1,Y_{MOB}^{1-}\right\}, Y_{MOB}^{1+}\right]$
    &\cellcolor{blue!25}$\Max\left\{Y_{ER}^{1},Y_{MOB}^{1-}\right\} \leq Y^1\leq Y_{NM}^1$
    &$Y^1=Y^{NM}$\\
    $\delta_W=0$
    &  $Y^{1}=Y^{ER}$  
    &$Y^{1}=Y^{ER}$
    &$Y^{1}=Y^{ER}$
    &$Y^{1}=Y^{0}$ \\
    \end{tabular}
        \end{adjustbox}
    \label{tab:theorem}
\end{table}

\renewcommand*\arraystretch{1.5}
\begin{table}
  \caption{Identification of the Group 0 Mean%
  }
  \label{tab:summary_y0}
    \centering
    \begin{adjustbox}{width=1.2\textwidth,center=\textwidth}
    \begin{tabular}{c|c|c|c|c}         \backslashbox{Within}{Between}
&$\delta_B = \,?%
$    & $\delta_B\geq0$ &$\delta_B \leq0$
 &$\delta_B=0$\\

         \hline
     
        $\delta_W = \,?%
        $ &$Y^0 \in \left[Y_{MOB}^{0-}, Y_{MOB}^{0+}\right]$ 
        &$Y^0 \in \left[Y_{MOB}^{0-}, Y_{NM}^{0}]\right]$
        &$Y^0 \in \left[Y_{NM}^0, Y_{MOB}^{0+}\right]$
        &$Y^0=Y_{NM}^{0}$\\
      $\delta_W\geq0$
      &$Y^0 \in \left[\Max\left\{Y_{ER}^0,Y_{MOB}^{0-}\right\}, Y_{MOB}^{0+}\right]$ 
      & \cellcolor{blue!25} $\Max\left\{Y_{ER}^{0}, Y_{MOB}^{0-} \right\} \leq Y^{0} \leq Y_{NM}^{0}$ 
      &$Y^0 \in \left[\max\left\{Y_{MOB}^{0-},Y_{ER}^,Y_{NM}^0\right\} Y_{MOB}^{0+}\right]$
      &$Y^0=Y_{NM}^{0}$  \\
    $\delta_W \leq 0 $  
    & $Y^0 \in \left[Y_{MOB}^{0-}, \Min\left\{Y_{ER}^0,Y_{MOB}^{0+}\right\}\right]$
    & $Y^0 \in \left[Y_{MOB}^{0-}, \Min\left\{Y_{NM}^{0},Y_{ER}^0,Y_{MOB}^{0+}\right\}\right]$
    &\cellcolor{blue!25} $Y_{NM}^0\leq Y^{0}\leq \Min\left\{Y_{ER}^{0},Y_{MOB}^{0+}\right\}$
    &$Y^0=Y_{NM}^{0}$\\
    $\delta_W=0$
    &  $Y^{0}=Y_{ER}^0$  
    &$Y^{0}=Y_{ER}^0$
    &$Y^{0}=Y_{ER}^0$
    &$Y^{0}=Y^{1}$ \\
    \end{tabular}
        \end{adjustbox}
    \label{tab:theorem}
\end{table}

\end{landscape}

\pagebreak

\section{Identification of Neighborhood-Specific Group Means}\label{sec:long_reg}

Our focus so far has been on identification of the overall group means, $Y^x=\mathbb{E}[Y|X=x]$, as well as the difference in overall group means, $D=Y^1-Y^0$. In some settings, a researcher may seek to learn a group mean within a specific neighborhood, $Y_n^x=\mathbb{E}[Y\mid N=n,X=x]$, or the difference in group means within that neighborhood, $D_n=Y^1_n-Y^0_n$.\footnote{Using the terminology of \citet{cross2002regressions}, the overall group mean corresponds to the ``short regression'' of $Y$ on $X$ whereas the neighborhood-specific group mean corresponds to the ``long regression'' of $Y$ on $X$ and $N$.} In this section, we consider the identifying power of monotone ecological inference for studying neighborhood-specific group means.

As a starting point, note that the method of bounds provides sharp bounds on $Y_n^x$ and $D_n$ for each $n$ and $x$ under the same assumptions imposed by Propositions \ref{prop:mob_levels} and \ref{prop:mob_dif}. We will refer to the endpoints of these intervals for a specific neighborhood $n$ using the following notation:
\begin{equation*}
    Y_n^x \in \left[ Y_{MOB,n}^{x-} \,,\, Y_{MOB,n}^{x+}  \right]
\end{equation*}
and
\begin{equation*}
    D_n \in \left[ D_{MOB,n}^{-} \,,\, D_{MOB,n}^{+}  \right]
\end{equation*}

where $Y_{MOB,n}^{1+} = \min\left\{\frac{Y_n}{X_n},\overline{Y}\right\}$, $Y_{MOB,n}^{0-} = \frac{Y_n-\min\{Y_n,X_n\overline{Y}\}}{1-X_n}$, $Y_{MOB,n}^{0+}= \min\left\{\frac{Y_n}{1-X_n},\overline{Y}\right\}$, $Y_{MOB,n}^{1-} = \frac{Y_n - \min\{Y_n, \overline{Y}(1-X_n)\}}{X_n}$, $D_{MOB,n}^{+} = Y_{MOB,n}^{1+} - Y_{MOB,n}^{0-}$, and $D_{MOB,n}^{-} = Y_{MOB,n}^{1-}-Y_{MOB,n}^{0+}$. 

Monotone ecological inference entails sharpening these bounds at the neighborhood-level by imposing sign restrictions on various aspects of the unobserved individual-level relationship between $Y$, $X$, and $N$. Consider first the neighborhood-level analog to the between-group association,
\begin{align*}
    \delta_{B,n}:= \Cov(Y,X|N=n)
\end{align*}
the identifying power of which was also considered by \citet{manski2018credible}. Whereas $\delta_B$ describes the \emph{average} between-group variation within neighborhoods, $\delta_{B,n}$ depends on the between-group variation for a specific neighborhood.

Signing $\delta_{B,n}$ for a specific neighborhood $n$ yields sharp bounds on $D_n$ and $Y_n^X$.

\begin{prop}[Bounds Based on the Sign of the Neighborhood-Specific Between-Group Association]\label{prop:cga_long_bounds} \hfill \break
(i) If $\delta_{B,n}\geq0$ for some $n\in\mathcal{N}$ then:
\begin{align*}
    D_{n} \in [0, D_{MOB,n}^+],
\\
Y_n^1 \in [Y_n, Y_{MOB,n}^{1+}]\text{,}
\\   \text{ and }  Y_{n}^0 \in [Y_{MOB,n}^{0-},Y_n].
\end{align*}

(ii) If $\delta_{B,n}\leq0$ for all $n\in\mathcal{N}$ then:
\begin{align*}
    D_n \in [D_{MOB,n}^{-},0],
\\
Y_n^1 \in [Y_{MOB,n}^{1-}, Y_n],
\\ \text{ and }    Y_n^0 \in [Y_n, Y_{MOB,n}^{0+}].
\end{align*}

(iii) The bounds in (i) and (ii) are sharp in the absence of additional information.
\end{prop}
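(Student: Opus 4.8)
The plan is to reduce everything to the single algebraic identity linking the neighborhood-specific between-group association to the neighborhood-specific mean difference, and then to read off the bounds geometrically from the method-of-bounds segment. First I would record that, because $X$ is binary, the conditional covariance factors as $\delta_{B,n} = \Cov(Y,X\mid N=n) = \Var(X\mid N=n)\,(Y_n^1-Y_n^0) = X_n(1-X_n)\,D_n$. Since the non-degeneracy assumption gives $X_n\in(0,1)$, the factor $X_n(1-X_n)$ is strictly positive, so $\sign(\delta_{B,n})=\sign(D_n)$; in particular $\delta_{B,n}\ge 0 \iff D_n\ge 0 \iff Y_n^1\ge Y_n^0$.

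Second, I would combine this sign restriction with the accounting identity $Y_n = X_n Y_n^1 + (1-X_n) Y_n^0$, which exhibits $Y_n$ as a convex combination of $Y_n^0$ and $Y_n^1$. Under $\delta_{B,n}\ge 0$ we have $Y_n^0\le Y_n^1$, so $Y_n$ lies between them: $Y_n^0 \le Y_n \le Y_n^1$. This pins down the ``inner'' endpoints --- the lower bound $Y_n$ for $Y_n^1$, the upper bound $Y_n$ for $Y_n^0$, and the lower bound $0$ for $D_n$. The complementary ``outer'' endpoints $Y_{MOB,n}^{1+}$, $Y_{MOB,n}^{0-}$, $D_{MOB,n}^+$ are exactly the method-of-bounds limits already established (the neighborhood-level specializations of Propositions \ref{prop:mob_levels}--\ref{prop:mob_dif}), which hold with or without the sign assumption. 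Intersecting the two yields the containments in (i); the argument for (ii) is identical after replacing ``$\ge$'' by ``$\le$'' throughout and swapping the roles of the upper and lower method-of-bounds endpoints.

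Third, for sharpness I would describe the set of $(Y_n^0,Y_n^1)$ consistent with the data as the line segment obtained by intersecting the accounting line $X_n y^1 + (1-X_n) y^0 = Y_n$ with the box $[\underline{Y},\overline{Y}]^2$. Moving along this segment, $y^1$ increases exactly as $y^0$ decreases, and the two coincide only at the single ``constancy'' point $(Y_n,Y_n)$, which is feasible because $Y_n$ is a convex combination of two points of $[\underline{Y},\overline{Y}]$. The constraint $\delta_{B,n}\ge 0$, i.e.\ $y^1\ge y^0$, therefore retains precisely the sub-segment running from $(Y_n,Y_n)$ to the endpoint on the $y^1>y^0$ side, and at that endpoint $(y^1,y^0)=(Y_{MOB,n}^{1+},Y_{MOB,n}^{0-})$ with $y^1-y^0 = D_{MOB,n}^+$. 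Since $y^1$, $y^0$, and $D_n=y^1-y^0$ are affine (hence continuous) along this connected sub-segment, each attains every value in its claimed interval; to realize a chosen feasible pair I would take $Y$ degenerate at $Y_n^x$ conditional on $\{X=x,\,N=n\}$ (and leave the remaining neighborhoods at their constancy points, where $\delta_{B,m}=0$ and so are compatible even with the ``for all $n$'' hypothesis of part (ii)), giving a genuine joint law consistent with the observed $(p_n,X_n,Y_n)$.

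The step I expect to be the main obstacle is the endpoint bookkeeping in the sharpness argument: one must verify that the stated expressions $Y_{MOB,n}^{1+}=\Min\{Y_n/X_n,\overline{Y}\}$ and $Y_{MOB,n}^{0-}$ (and their mirror images) really are the coordinates of the relevant segment endpoint, which requires tracking which of the two constraints $y^1\le\overline{Y}$ or $y^0\ge\underline{Y}$ binds first --- this is what produces the $\Min$/$\Max$ structure in the method-of-bounds formulas. Everything else is either the one-line covariance identity or the convexity observation, so the substantive work concentrates in matching these case-split endpoints to the already-established method-of-bounds limits.
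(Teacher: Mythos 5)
Your proof is correct and takes essentially the same route as the paper's: the sign identity $\delta_{B,n}=X_n(1-X_n)\,D_n$, the convex-combination observation $Y_n^0\le Y_n\le Y_n^1$, and sharpness by moving along the feasible segment between the constancy point $(Y_n,Y_n)$ and the method-of-bounds endpoint $(Y_{MOB,n}^{1+},Y_{MOB,n}^{0-})$ is precisely the paper's mixture argument (its Lemma on feasibility of intermediates and the ensuing Corollary) phrased geometrically. The endpoint bookkeeping you flag as the main obstacle is already dispatched by the paper's appendix lemma characterizing the neighborhood-level method-of-bounds values, and your added care in leaving the other neighborhoods at their constancy points (so $\delta_{B,m}=0$) is a nice explicit handling of the ``for all $n$'' hypothesis in part (ii) that the paper leaves implicit.
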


We next consider the identifying power of assumptions on the neighborhood-specific analog to the within-neighborhood association. Let the function $\mu_x(x_n)$ denote the conditional expectation of $Y$ for a member of group $x$ in a neighborhood with group-prevalence $x_n$
\begin{equation*}
    \mu_{x}(x_n)= \mathbb{E}[Y \mid X=x,\,X_n=x_n]
\end{equation*}
We restrict our focus to neighborhoods with group-prevalence values for which the first derivative of $\mu_x(\cdot)$ exists for $x=0$ and $x=1$, and we denote those derivatives by $\mu'_x(\cdot)$
\begin{equation*}
    \mu'_{x}(x_n)= \lim_{u\rightarrow0} \, \frac{1}{u} \, \big(\mathbb{E}[Y \mid X=x,\,X_n=x_n + u] - \mathbb{E}[Y \mid X=x,\,X_n=x_n] \big)
\end{equation*}

For neighborhood $n$, define the local within-neighborhood association $\delta_{W,n}$ as
\begin{equation*}
    \delta_{W,n} = X_n \,\mu'_1(X_n) + (1-X_n)\,\mu'_0(X_n)
\end{equation*}
where $X_n=\mathbb{E}[X|N=n]$. Conceptually, $\delta_{W,n}$ captures the within-group association between the mean of the outcome, $Y$, and the group-prevalence of the neighborhood, $X_N$, among neighborhoods with similar levels of group-prevalence. It differs from $\delta_W$ in that $\delta_W$ depends on a summary measure of the relationship between $Y$ and $X_N$ across all neighborhoods, whereas $\delta_{W,n}$ reflects the ``local'' relationship between $Y$ and $X_N$ among a set of neighborhoods with similar levels of group prevalence.

Although $\mu_0$ and $\mu_1$ are unobserved, we do observe the overall regression function, $\mu(x_n):=\mathbb{E}[Y \mid X_n=x_n]$, which is the mixture of the two group-specific regression functions
\begin{equation}\label{eq:yb_accounting}
    \mu(x_n) = x_n \,\mu_1(x_n) + (1-x_n) \, \mu_0(x_n).
\end{equation}
We also observe the derivative of the overall regression function, $\mu'(x_n)$, which, by construction, is guaranteed to exist.

Differentiating \eqref{eq:yb_accounting} yields
\begin{align*}
    \mu'(x_n) &=\mu_1(x_n) - \mu_0(x_n) + x_n \,\mu'_1(x_n) + (1-x_n) \, \mu'_0(x_n)\\
    &= D_n + \delta_{W,n}
\end{align*}
Thus, given knowledge of $\delta_{W,n}$, the derivative of the conditional expectation function, $\mu'(x_n)$, provides information about the group means in neighborhoods with group-prevalence $x_n$. 

In addition to assuming knowledge of the sign of $\delta_{W,n}$, it will be convenient to assume that each neighborhood with the same prevalence $X_n$ has the same average outcome - i.e. $\mathbb{E}[Y|X=x,N=n] = \mathbb{E}[Y|X=x, X_{N}=X_n]$. 
\begin{assmp}\label{assmp:exclusion}
    For every neighborhood $n$, \begin{align*}
        \mathbb{E}[Y|X=x,N=n] = \mathbb{E}[Y|X=x, X_{N}=X_n]
    \end{align*}
\end{assmp}
In words, Assumption \ref{assmp:exclusion} requires that any two neighborhoods that have the same group-prevalence will also share the same mean outcomes by group (at least in expectation). Under this assumption, knowledge of the sign of $\delta_{W,n}$ facilitates identification as follows:
\begin{prop}\label{prop:cna_long_bounds} \hfill \break
(i) Suppose that $\delta_{W,n} \geq0$. Then:\\
\begin{align*}
        D_n &\in\left[ D_{MOB,n}^{-} \min\left\{\mu'(X_n), D_{MOB,n}^+\right\}\right],
\\     Y_n^1 &\in [Y_{MOB,n}^{1-}, \min \left\{Y_{MOB}^{1+}, Y_n + (1-X_n)\cdot \mu'(X_n)\right\}],   
\\ \text{ and }   
Y_n^0 &\in [\max\left\{Y_{MOB,n}^{0-}, Y_n-X_n\cdot \mu'(X_n)\right\}, Y_{MOB,n}^{0+}].
    \end{align*}\\
(ii) Suppose that $\delta_{W,n}\leq 0$. Then:\\
\begin{align*}
    D_{n} &\in \left[ \max\left\{\mu'(X_n),D_{MOB,n}^{-}\right\} , D_{MOB,n}^+
    \right],
\\  Y_n^1 &\in [\max\left\{Y_{MOB,n}^{1-},Y_n+(1-X_n) \cdot \mu'(X_n)\right\}, Y_{MOB,n}^{1+}],  
\\ \text{ and }    Y_n^0 &\in [Y_{MOB,n}^{0-},\min\left\{Y_{MOB,n}^{0+},Y_n-X_n\cdot \mu'(X_n)\right\}].
\end{align*}\\
(iii) The bounds in (i) and (ii) are sharp in the absence of further information.
\end{prop}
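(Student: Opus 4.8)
The plan is to leverage the identity derived just above the statement, $\mu'(X_n) = D_n + \delta_{W,n}$, together with Assumption \ref{assmp:exclusion}, which guarantees $Y_n^x = \mu_x(X_n)$ so that $D_n = \mu_1(X_n) - \mu_0(X_n)$ is exactly the neighborhood-specific difference in group means. The derivative $\mu'(X_n)$ of the \emph{observed} overall regression function is identified from the data. Hence the sign of the unobserved quantity $\delta_{W,n}$ translates directly into a one-sided comparison between $D_n$ and the observable $\mu'(X_n)$: rearranging the identity gives $D_n = \mu'(X_n) - \delta_{W,n}$, so that $\delta_{W,n}\geq 0$ forces $D_n \leq \mu'(X_n)$ while $\delta_{W,n}\leq 0$ forces $D_n \geq \mu'(X_n)$.

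First I would establish the bounds on $D_n$. In case (i), combining $D_n \leq \mu'(X_n)$ with the neighborhood-level method of bounds $D_n \in [D_{MOB,n}^{-}, D_{MOB,n}^{+}]$ (Propositions \ref{prop:mob_levels} and \ref{prop:mob_dif} applied to neighborhood $n$) yields $D_n \in [D_{MOB,n}^{-}, \min\{\mu'(X_n), D_{MOB,n}^{+}\}]$, and case (ii) is symmetric. I would then propagate these to the levels via the accounting identities $Y_n^1 = Y_n + (1-X_n)D_n$ and $Y_n^0 = Y_n - X_n D_n$, which follow from $Y_n = X_n Y_n^1 + (1-X_n)Y_n^0$ and $D_n = Y_n^1 - Y_n^0$. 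Since $1-X_n \geq 0$ and $X_n \geq 0$, a one-sided bound on $D_n$ maps to a one-sided bound on each level with the expected orientation (e.g.\ $D_n \leq \mu'(X_n)$ gives $Y_n^1 \leq Y_n + (1-X_n)\mu'(X_n)$ and $Y_n^0 \geq Y_n - X_n\mu'(X_n)$); intersecting with the level method-of-bounds intervals $[Y_{MOB,n}^{x-}, Y_{MOB,n}^{x+}]$ produces the stated expressions.

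For sharpness I would exhibit, for an arbitrary target in each stated interval, a joint distribution of $(X,Y,N)$ consistent with the observables and with the required sign of $\delta_{W,n}$. The construction parametrizes admissible decompositions by a single free function $g$: set $\mu_1(x) = \mu(x) + (1-x)g(x)$ and $\mu_0(x) = \mu(x) - x g(x)$. A direct check shows the mixture identity $\mu(x) = x\mu_1(x) + (1-x)\mu_0(x)$ holds for every $g$, that $\mu_1(x) - \mu_0(x) = g(x)$, and---crucially---that $\delta_{W,n} = X_n\mu_1'(X_n) + (1-X_n)\mu_0'(X_n) = \mu'(X_n) - g(X_n)$ independently of $g'$. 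Given a target value $d$ for $D_n$ in the stated interval, I take $g \equiv d$ locally constant near $X_n$; then $\mu_x(X_n)$ equals the implied $Y_n^x$, which lies in $[\underline{Y},\overline{Y}]$ precisely because the target satisfies the method-of-bounds constraints, and $\delta_{W,n} = \mu'(X_n) - d$ carries the sign required by the relevant case. This realizes the target while respecting all constraints; an analogous argument handles the level parameters.

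The main obstacle is the sharpness step, specifically constructing the group-specific regression functions $\mu_0,\mu_1$ so that they are differentiable at $X_n$, deliver the prescribed value and sign of $\delta_{W,n}$, and remain within $[\underline{Y},\overline{Y}]$ on a neighborhood of $X_n$ rather than merely at the point itself. The free-function parametrization resolves this cleanly because $\delta_{W,n}$ turns out to depend only on $g(X_n)$ and not on $g'(X_n)$, decoupling the level target from the derivative constraint; the remaining directional bookkeeping (which $\min$ or $\max$ attaches to which endpoint) is routine once the $D_n$ bound is in hand.
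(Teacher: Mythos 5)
Your derivation of the bounds themselves---the identity $D_n = \mu'(X_n) - \delta_{W,n}$, the resulting one-sided comparison of $D_n$ with $\mu'(X_n)$, intersection with the neighborhood-level method of bounds, and propagation to the levels via $Y_n^1 = Y_n + (1-X_n)D_n$ and $Y_n^0 = Y_n - X_n D_n$---is exactly the paper's argument for parts (i) and (ii). Where you genuinely depart from the paper is sharpness. The paper invokes its mixing machinery (Lemma \ref{lem:feasb_intermediates} and Corollary \ref{lem:feasb_intermediate_disp}): it exhibits two feasible profiles with neighborhood-$n$ disparities $D_{MOB,n}^{-}$ and $\mu'(X_n)$ (the latter via $Y^{1\mu} = Y_n + (1-X_n)\mu'(X_n)$, $Y^{0\mu} = Y_n - X_n\mu'(X_n)$), takes convex combinations to attain every intermediate disparity, and reads the sign of $\delta_{W,n}$ off the identity. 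You instead parametrize all admissible decompositions by a free function $g$, setting $\mu_1(x) = \mu(x) + (1-x)g(x)$ and $\mu_0(x) = \mu(x) - x\,g(x)$, and observe that the $g'$ terms cancel, so $\delta_{W,n} = \mu'(X_n) - g(X_n)$ regardless of how $g$ varies. This is a genuine addition: it explicitly constructs differentiable group-specific regression functions realizing any target $d$ and proves that the sign restriction on $\delta_{W,n}$ constrains nothing beyond the value $g(X_n) = D_n$---a point the paper asserts through the identity without ever exhibiting functions whose derivatives make $\delta_{W,n}$ well defined, so your treatment is arguably the more rigorous one. What the paper's route buys in exchange is global feasibility for free: a mixture of two feasible profiles is automatically feasible at every neighborhood, whereas your ``locally constant $g$ near $X_n$'' still owes a specification of $g$ at the other support points of $X_N$ (it must lie inside each neighborhood's own MOB interval while keeping $g$ differentiable at $X_n$; with finitely many neighborhoods, shrinking the bump so it excludes the other support points and setting $g=0$ there closes this, and you should say so). With that patch both proofs are correct, and the realizing marginals at neighborhood $n$ are in fact identical.
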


When Assumption 1 does not hold, the bounds in Proposition \ref{prop:cna_long_bounds} identify the group means or differences for the set of neighborhoods with the same group prevalence as the target neighborhood $n$. We formalize this claim and provide a proof in Appendix Proposition \ref{prop:cna_long_bounds_gen}.

Finally, consider the possibility that the researcher has information not about the individual signs of $\delta_{W,n}$ and $\delta_{B,n}$ but rather that these two (local) associations share the same sign as one another. This local analog to contextual reinforcement can substantially aid in identification:
\begin{prop}\label{prop:cr_local} Suppose that $\delta_{B,n} \cdot\delta_{W,n}\geq0$. Then either:\hfill\break

(i) $\mu'(X)\geq0$, \text{and } 
        \begin{align*} D_n &\in \left[
            0, \min\left\{\mu'(X_n),D_{MOB,n}^{+}\right\}\right],
\\
Y_n^1 &\in \left[Y_n, \min\left\{Y_n + (1-X_n)\mu'(X_n), Y_{MOB}^{1+}\right\}\right],
\\
\text{ and } Y_n^0 &\in \left[\max\left\{Y_{MOB,0}^{-},Y_n- X_n \mu'(X_n) \right\}, Y_n \right];
        \end{align*}
        or\\
        
(ii) $\mu'(X_n)\leq 0$, and 
        \begin{align*}
            D_n &\in [\max\left\{\mu'(X_n),D_{MOB,n}^-\right\},0],
\\
Y_n^1 &\in [\max\left\{Y_{MOB,n}^{1-},Y_n+(1-X_n)\mu'(X_n)\right\},Y_n],
\\
\text{ and }
Y_n^0 &\in [Y_n,\min\left\{Y_n-X_n\mu'(X_n), Y_{MOB,n}^{0+}\right\}].
        \end{align*}
        
(iii) The bounds in (i) and (ii) are sharp.
\end{prop}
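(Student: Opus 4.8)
The plan is to reduce contextual reinforcement to a joint sign restriction on $D_n$ and $\delta_{W,n}$, and then read the bounds off the two affine identities that link $D_n$, $Y_n^1$, and $Y_n^0$ to the observable derivative $\mu'(X_n)$. First I would record two facts already available in the excerpt. By the footnote computation, $\delta_{B,n}=\Cov(Y,X\mid N=n)=\Var[X\mid n]\,(Y_n^1-Y_n^0)=X_n(1-X_n)D_n$, so for an interior neighborhood ($X_n\in(0,1)$) we have $\sign(\delta_{B,n})=\sign(D_n)$. Second, differentiating the accounting identity \eqref{eq:yb_accounting} gives $\mu'(X_n)=D_n+\delta_{W,n}$, where the step identifying $\mu_1(X_n)-\mu_0(X_n)$ with $D_n$ is exactly Assumption \ref{assmp:exclusion}, which I would maintain as in Proposition \ref{prop:cna_long_bounds}. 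Substituting the first fact, the hypothesis $\delta_{B,n}\cdot\delta_{W,n}\geq0$ becomes $D_n\cdot\delta_{W,n}\geq0$, i.e.\ $D_n$ and $\delta_{W,n}$ are both nonnegative or both nonpositive.

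Next I would establish the dichotomy, which is governed by the sign of the \emph{observable} $\mu'(X_n)$. If $D_n,\delta_{W,n}\geq0$, then $\mu'(X_n)=D_n+\delta_{W,n}\geq0$ and, since $\delta_{W,n}\geq0$, we get $0\leq D_n\leq\mu'(X_n)$; symmetrically, if $D_n,\delta_{W,n}\leq0$ then $\mu'(X_n)\leq0$ and $\mu'(X_n)\leq D_n\leq0$. This produces the split into cases (i) and (ii) according to $\sign\mu'(X_n)$, the local counterpart of Corollary \ref{lem:sign_D_contextual_reinf}. I would then intersect the interval just derived with the method-of-bounds interval $[D_{MOB,n}^{-},D_{MOB,n}^{+}]$. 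A convenient observation here is that $D_n=0$ is always data-consistent (take $Y_n^1=Y_n^0=Y_n\in[\underline{Y},\overline{Y}]$), so $D_{MOB,n}^{-}\leq0\leq D_{MOB,n}^{+}$; hence in case (i) the endpoint $0$ dominates $D_{MOB,n}^{-}$ and the interval collapses to $[0,\min\{\mu'(X_n),D_{MOB,n}^{+}\}]$, and symmetrically in case (ii).

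The bounds on the levels then follow mechanically. Solving the accounting identity $X_nY_n^1+(1-X_n)Y_n^0=Y_n$ together with $D_n=Y_n^1-Y_n^0$ yields the monotone affine maps $Y_n^1=Y_n+(1-X_n)D_n$ and $Y_n^0=Y_n-X_nD_n$. Pushing the $D_n$-endpoints through these maps, and using the algebraic identities $Y_n+(1-X_n)D_{MOB,n}^{+}=Y_{MOB,n}^{1+}$ and $Y_n-X_nD_{MOB,n}^{+}=Y_{MOB,n}^{0-}$ (which hold because the method-of-bounds extremes satisfy the accounting identity), reproduces the stated intervals for $Y_n^1$ and $Y_n^0$ in both cases.

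The main obstacle is sharpness, part (iii). For each $d$ in the claimed interval I would construct a joint law of $(X,Y,N)$ that agrees with the observed $(p_n,X_n,Y_n)$, matches the observed $\mu'$ at $X_n$, keeps all conditional outcomes in $[\underline{Y},\overline{Y}]$, and realizes $D_n=d$; the only new requirement relative to the sharpness arguments behind Propositions \ref{prop:cga_long_bounds} and \ref{prop:cna_long_bounds} is that the construction respect contextual reinforcement. Given the observed $\mu'(X_n)$ and target $d$, one sets $\delta_{W,n}=\mu'(X_n)-d$ and checks that $\sign(\delta_{W,n})=\sign(d)$: in case (i), $d\in[0,\mu'(X_n)]$ forces $\delta_{W,n}\geq0$ and $d\geq0$, so $\delta_{B,n}\delta_{W,n}=X_n(1-X_n)\,d\,\delta_{W,n}\geq0$, and symmetrically in case (ii), so the assumption is never violated inside the interval. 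The residual freedom---splitting $\delta_{W,n}$ into $\mu_0'(X_n),\mu_1'(X_n)$ and choosing feasible level functions on a neighborhood of $X_n$---is precisely what those earlier proofs exploit, so the construction carries over. The delicate point, where I expect to spend the most care, is simultaneously matching the observed derivative $\mu'(X_n)$ and holding every neighborhood-level group mean inside $[\underline{Y},\overline{Y}]$ while attaining the method-of-bounds endpoints, i.e.\ verifying that the sign constraint never binds before the method-of-bounds constraint does.
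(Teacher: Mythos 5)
Your proposal is correct and follows essentially the same route as the paper: the paper proves this result by intersecting Propositions \ref{prop:cga_long_bounds} and \ref{prop:cna_long_bounds}, which rest on exactly the two identities you derive directly ($\delta_{B,n}=\Var[X\mid N=n]\,D_n$ and $\mu'(X_n)=D_n+\delta_{W,n}$), and its sharpness argument likewise anchors the feasible distributions $D_n=0$ and $D_n=\mu'(X_n)$ (via $Y^{1\mu}=Y_n+(1-X_n)\mu'(X_n)$, $Y^{0\mu}=Y_n-X_n\mu'(X_n)$) and mixes them using Lemma \ref{lem:feasb_intermediates} and Corollary \ref{lem:feasb_intermediate_disp}, which is precisely your construction with $\delta_{W,n}=\mu'(X_n)-d$.
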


\section{Empirical Application}

To illustrate the benefits of monotone ecological inference in a concrete setting, we investigate the question of partisan polarization in COVID-19 vaccine uptake. This question has been of acute interest to  policymakers, academics, and the media, both for COVID-19 response specifically and broader questions of trust in public health authorities \citep{milligan, collins}.\footnote{\citet{jones2022partisanship} provide a helpful conceptual discussion and review of the literature.}  Unfortunately, most research in the U.S. context has had to rely on ecological inference, as joint individual data on vaccination status and partisan membership is not available for a large nationally representative set of individuals \citep[e.g.,][]{albrecht2022vaccination, ye2023exploring}.\footnote{An alternative methodological approach is to collect individual-level survey data on vaccine uptake and political affiliation; such research designs avoid the need for ecological inference but often face limitations based on sample size, representativeness, and response bias.}

Our primary data set consists of county-level data from 3,115 counties on COVID-19 vaccination uptake and partisanship. We measure vaccination uptake as the share of county residents who had received one or more COVID-19 vaccination as of December 31, 2021, obtained from the \citet{CDC_DATA}. We measure partisanship as the fraction of voters in the county who cast their ballot for the Republican candidate in the 2020 presidential election, obtained from the \citet{DVN/VOQCHQ_2020}.\footnote{We focus on this measure of partisan ideology rather than Republican party voter registration to be consistent with the prior EI research on the topic. A second reason is that a meaningful share of voters are registered as independent, and the ideologies associated with voters in this group is likely to vary widely across different counties around the country.} 

Using our previous notation, $Y$ indicates whether an individual is vaccinated, $X$ indicates whether an individual is Republican, and $N$ indicates the county in which the individual resides. Our goal is to use the county-level data to estimate the partisan vaccination gap, which we define as the difference in the mean vaccination rate of Republicans relative to Democrats and third party voters, $D=\mathbb{E}[Y|X=1] - \mathbb{E}[Y|X=0]$. 

Figure \ref{fig:republican_vaccination_county} plots the binned county-level data. The figure shows a clear downward trend: counties with more Republicans tend to have lower vaccination rates. The pattern is consistent with the possibility that Republicans are vaccinated at lower rates than Democrats. However, this interpretation is potentially subject to the ecological fallacy; counties with more Republicans may have lower vaccination rates for reasons unrelated to partisan composition \citep{ye2023exploring}. Indeed, the method of bounds interval for the partisan vaccination gap ranges from -77.4 to 52.8 percentage points. The width of the MOB interval implies that the county-level data does not provide much information about the magnitude or even direction of differences in vaccination rates between Republicans and Democrats, at least without further assumptions. Point identification approaches also diverge sharply: the neighborhood model and ecological regression imply respective partisan vaccination gaps of -5.5 and -47.9 percentage points.

\begin{figure}[ht]
    \caption{County Vaccination Rate by County Republican Vote Share}
{\centering      
\includegraphics[width=0.9\textwidth]{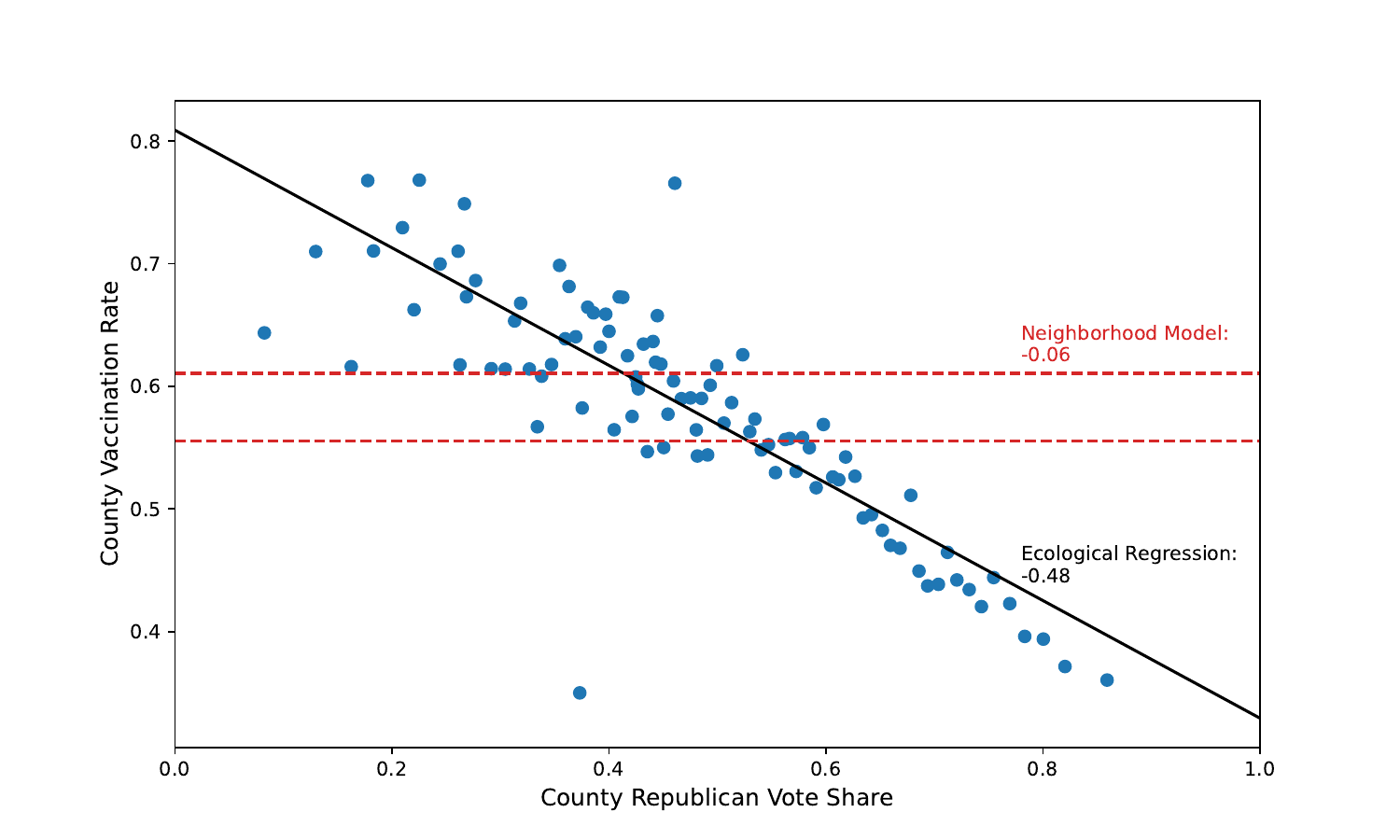}
    \vspace{-5pt} \\
    }
{\footnotesize{Notes: The figure reports county-level COVID-19 vaccination rates by the share of voters in the county who voted for the Republican candidate in the 2020 presidential election. Counties are grouped into 100 equal-population bins. The neighborhood model estimates for the mean vaccination rates among Republicans and non-Republicans are respectively denoted by the lower and upper red dotted lines. The ecological regression line is in black. 
    }}
\label{fig:republican_vaccination_county}
\end{figure}

Sharpening identification through monotone ecological inference involves making assumptions about two conditional associations. First, the between-group association, $\mathbb{E}\left[\text{Cov}(Y,X|N)\right]$, refers to differences in vaccination uptake between Republicans and Democrats living in the same county. Prior research provides some basis for expecting the between-group association to be negative; there are well-documented partisan differences in information sources that are not fully mediated through neighborhood \citep{iyengar2009red, peterson2021partisan}, and the prominent Republican politicians featured on more conservative media outlets were more likely to espouse anti-vaccination beliefs and/or downplay the health risks associated with the COVID-19 virus \citep{hornsey2020donald, gollwitzer2020partisan, albrecht2022vaccination}. Second, the within-group association, $\mathbb{E}\left[\text{Cov}(Y,X_N|X)\right]$, refers to differences in vaccination uptake among individuals of the same political party who live in neighborhoods with differing concentrations of Republican voters.\footnote{That is, the within-group association refers to vaccination uptake gaps between Republicans living in counties with more Republicans compared to Republicans living in counties with fewer Republicans, and between Democrats living in counties with more Republicans compared to Democrats living in counties with fewer Republicans.} Like the between-group association, there is some reason to believe that the within-group association is negative. For example, Republican counties tend to be lower income and more rural (Figure \ref{fig:rural_income_republican}), which are factors associated with lower access to public health services like vaccinations \citep{sun2022rural, hernandez2022disparities, parolin2022role}.
 
More generally, there are reasons to expect that the two conditional associations share the same sign as one another, whether that sign is positive or negative. One mechanism through which such contextual reinforcement may operate is network effects, such as social norms or peer effects. In particular, people's health behaviors are known to be influenced by the people around them \citep{sato2019peer, klaesson2023social}, and in Republican counties, a larger share of the people with whom one interacts are likely to be Republican. Thus if the Republicans in a neighborhood tend to be more skeptical of COVID-19 vaccinations (i.e., the between-group association is negative), that is likely to reduce the vaccination rate among both Democrats and Republicans living in that neighborhood. In the words of one author, ``In many communities, wearing a mask or getting a [COVID-19] vaccine became a political statement, with many Republicans arguing that these actions violated their individual freedoms and were unnecessary anyway'' \citep{albrecht2022vaccination}. Along similar lines, for many people, vaccine uptake may depend in part on local policies, such as whether vaccines are mandated for public sector employees \citep{howard2022association}. Thus, if Republicans exhibit more vaccine hesitancy, we would expect that counties in which more Republicans live would be more likely to elect leaders that do not adopt pro-vaccine policies, leading to lower  vaccine rates for county residents, whether Democrat or Republican.\footnote{Analogously, \citet{patterson2022politics} finds that states with Republican governors were less likely to quickly adopt stay-at-home orders during the pandemic, and that individuals in those states exhibited less social distancing as a result.}

The foregoing discussion provides a theoretical basis for the contextual reinforcement assumption in this setting. We empirically validate the assumption by drawing on an auxiliary dataset that contains individual-level data on vaccination status, political party registration, and neighborhood. We construct this dataset by matching a national dataset of voter registration records \citep{l2data2024}, which contain individual-level data on political party, to a large dataset of electronic health care records \citep{balraj2023american}, which contain individual-level data on COVID-19 vaccination status. Appendix \ref{sec:app_application} provides a further description of the underlying data sources and of our matching procedure. The final matched dataset contains approximately 1.3 million registered Republicans and Democrats in 2,576 counties and 49 states, plus the District of Columbia.\footnote{While this is a large linked dataset, most of the literature has not been able to secure such individual-level data due to data restrictions, hence relying primarily on aggregate data \citep[see][]{albrecht2022vaccination, ye2023exploring}.}

Figure \ref{fig:context_reinf} uses the auxiliary data to plot vaccination rates by (binned) county-level Republican vote share, separately for Republicans and Democrats. The figure provides visual support for contextual reinforcement: the Republican bins tend to lie below the Democratic bins with similar partisan makeup (so that the between-group association is negative) and both the Republican bins and Democratic bins exhibit a  downward sloping trend (so that the within-group association is also negative). Formal statistical tests regarding the sign of these quantities yield the same conclusion (see Table \ref{tab:zero_tie_unitwt}). Based on these results, we adopt the contextual reinforcement assumption to interpret the county-level analyses.\footnote{A limitation of this analysis for assessing contextual reinforcement is that the set of individual included in the auxiliary dataset may not be representative of the overall population due to the nature of selection into our electronic health records data or non-random match rates with the voter registration data. We obtain similar results when we replicate the analysis using alternative matching criteria to construct the auxiliary sample (Table \ref{tab:random_tie_unitwt}) and when we re-weight the auxiliary data based on individuals' observable characteristics to more closely match the national population (Table \ref{tab:zero_tie_rewt}). A different limitation is the potential for discordance between the partisanship measure contained in the auxiliary data (party registration) and the one employed in our main county-level analysis (presidential vote share).}

\begin{figure}[ht]
\caption{Vaccination Rate by Vote Share and Political Party Membership}
    \begin{centering}
    \includegraphics[width=0.8\textwidth]{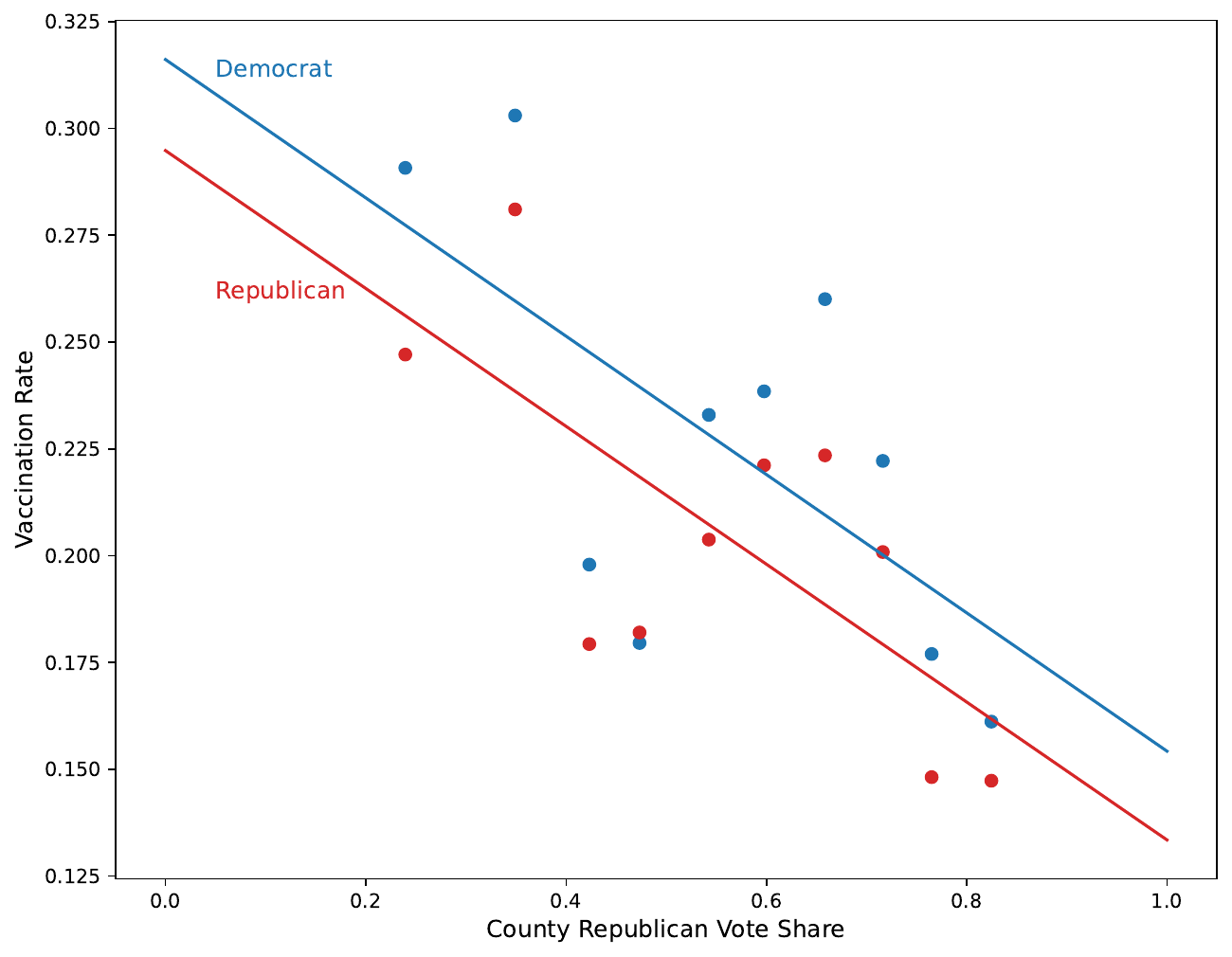}
        \vspace{-5pt} \\
    \end{centering}
{\footnotesize{Notes: The figure uses the matched auxiliary data set to report Covid-19 vaccination rates by county-level Republican vote share for individuals registered as Republicans (red) and individuals registered as Democrats (blue). Individuals are grouped into ten equal-sized bins based on the Republican vote share for the county in which they live. The average vertical difference between the blue and red points reflects the estimated sign of $\delta_B$. The average slope of the linear best fit lines reflects the estimated sign of $\delta_W$.  }}
    \label{fig:context_reinf}
\end{figure}

When contextual reinforcement holds, Corollary \ref{lem:sign_D_contextual_reinf} establishes that we can identify the sign of the difference in group means based on the sign of the difference between the ecological regression and neighborhood model estimators. As shown in Figure \ref{fig:party_vaccination_bounds}, we find that this difference is positive ($p<.01$), implying that Democrats are vaccinated at higher rates than Republicans. In turn, the contextual reinforcement bounds from Proposition \ref{prop:bounds_cr} imply that the partisan vaccination gap is between -47.9 and -5.5 percentage points (95\% CI: -51.8 to -5.1).

Figure \ref{fig:party_vaccination_bounds} summarizes our results under various identifying assumptions.\footnote{Figure \ref{fig:bounds_by_group} provides corresponding results for mean vaccination rates among Democrats and Republicans.} If $\delta_W$ or $\delta_B$ was assumed to be zero, the partisan vaccination gap would be point-identified at -47.9 or -5.5 percentage points, respectively. Conversely, without any assumptions beyond the observed data, the method of bounds interval for the partisan vaccination gap ranges from -77.4 to 52.8 percentage points. Turning to monotone EI, as discussed above there are plausible reasons to expect $\delta_B\leq0$ as well as $\delta_W\leq0$. Imposing these sign restrictions individually respectively tightens the method of bounds interval by 45\% and 22\%. Finally, under contextual reinforcement, our preferred identifying assumption, we can conclude that the partisan vaccination gap is between -47.9 and -5.5 percentage points, an interval that is 67\% smaller than the one obtained from the method of bounds. 

\begin{figure}[ht]      \caption{Identification of Partisan Vaccination Gap Using Monotone EI}   \label{fig:party_vaccination_bounds}
    \begin{centering} 
\includegraphics[width=0.9\textwidth]{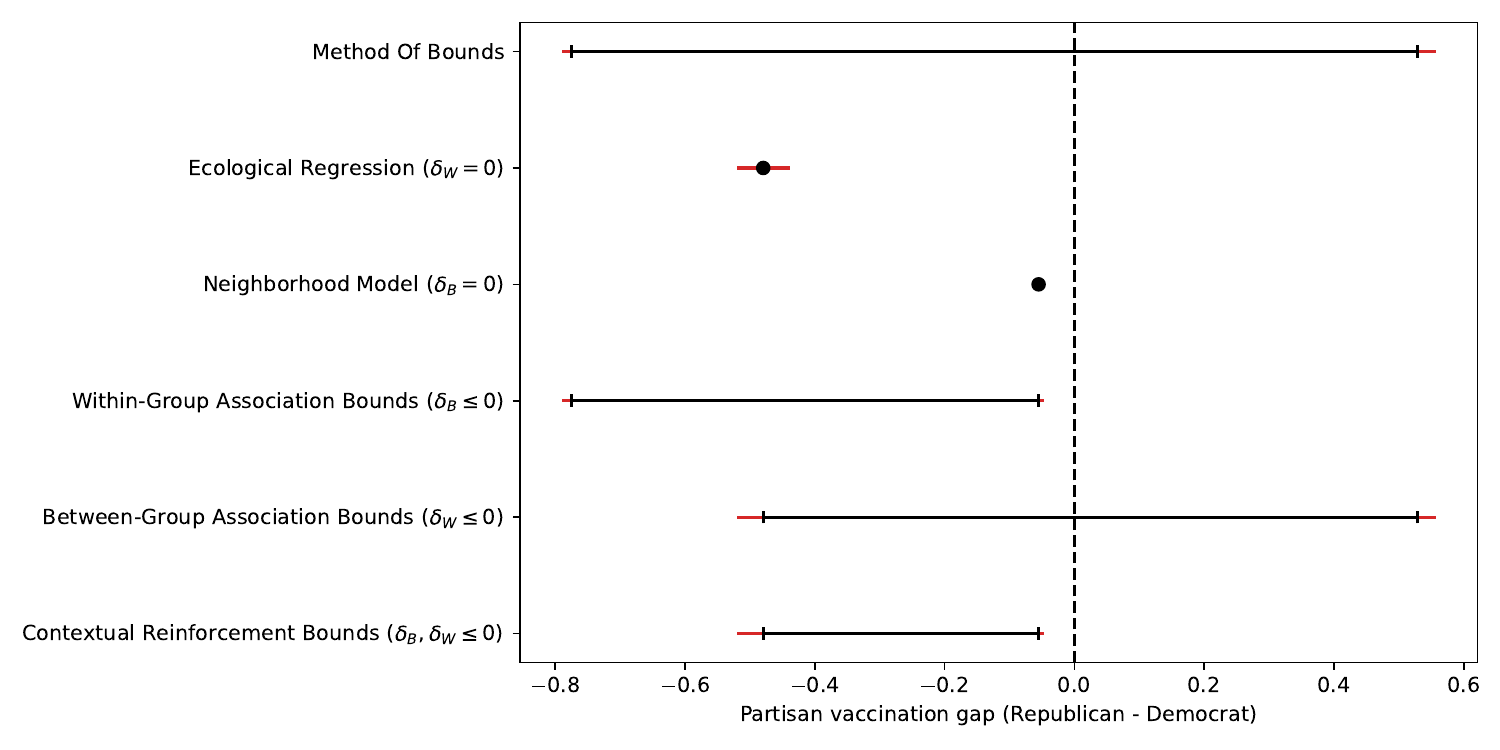}
    \vspace{-5pt} \\
    \end{centering}
{\footnotesize{Notes: The partisan vaccination gap is defined as the proportion of vaccinated Democrats subtracted from the proportion of vaccinated Republicans. A negative gap indicates that a higher proportion of Democrats than Republicans are vaccinated. Red bars are 95\% confidence intervals following \citep{imbens2004confidence}; the confidence intervals are based on standard errors from a county-level bootstrap with 1000 bootstrap replicates.}}
\end{figure}

We also demonstrate how our method may be used to more precisely bound county-specific vaccine disparities. For purposes of this exercise, we assume that the sign assumptions discussed above hold not only for $\delta_B$ and $\delta_W$ but locally for each county as well. To calculate the implied bounds, we estimate $\mu'(X_n)$ using population-weighted local linear approximation with an Epanechnikov kernel, with bandwidth chosen through 10-fold cross validation (Figure \ref{fig:local_linear}). Based on this estimated derivative, we use Proposition \ref{prop:cr_local}(ii) to calculate county-level bounds on the Democrat and Republican vaccination rates. For Contra Costa County, which leans left-of-center politically, we estimate that the Republican vaccination rate is between 0.50 and 0.77 and that the Democrat vaccination rate is between 0.77 and 0.87. For Galveston county, which leans right-of-center politically, the respective bounds for Republicans and Democrats are from 0.36 to 0.57 and from 0.57 to 0.89. Each of these intervals is substantially narrower than the corresponding interval derived from the method of bounds (see Figure \ref{fig:individual_disparities}).

\begin{figure}[ht]     
\caption{County-Specific Bounds}   
\label{fig:individual_disparities}
\begin{centering} 
\includegraphics[width=0.8\textwidth]{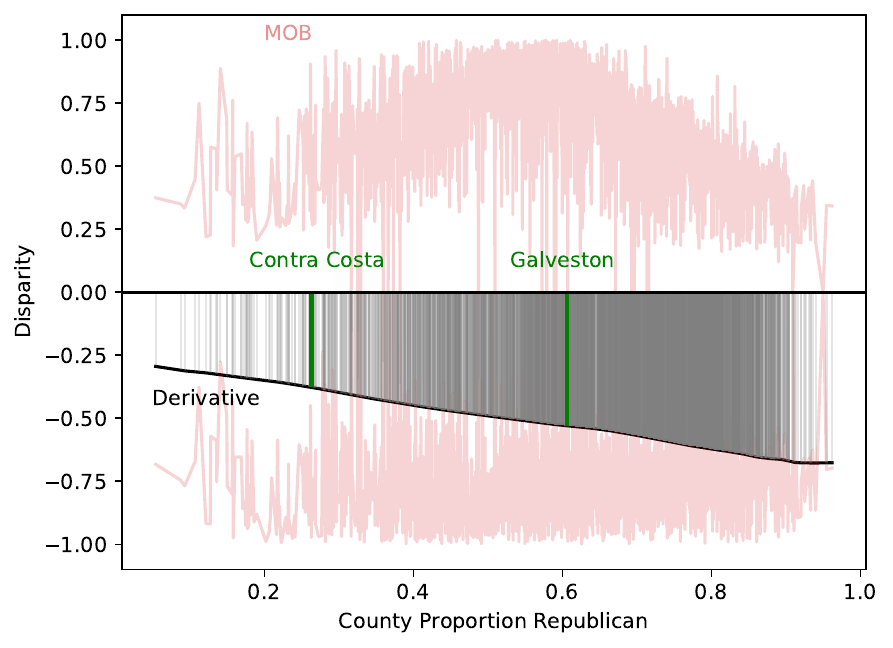}
    \vspace{-5pt} \\
    \end{centering}
{\footnotesize{Notes: This figure shows the Method of Bounds for the partisan vaccination gap for each county (red). The county-level bounds, based on the assumptions in Proposition \ref{prop:cr_local}, are shown in gray, and the estimated derivative of the conditional expectation function of the vaccination rate by county partisan makeup is shown in black. The implied bounds for Contra Costa County, California, and Galveston County, Texas, are highlighted in green.}}
\end{figure}

\section{Conclusion}

We study the classic statistical challenge of ecological inference (EI). Our results clarify the nature of the biases associated with canonical methods for point identification in EI settings. We use those results to derive novel partial identification results based on assumptions about the sign of the conditional associations between the outcome of interest and group membership or neighborhood. Although our approach requires additional structure relative to assumption-free tools like the method of bounds, the payoff to that additional structure can be substantially tighter bounds for the parameter of interest. In our empirical application, we illustrate how one can reason about the sign of the relevant conditional associations based on institutional knowledge and/or auxiliary individual-level data. Under plausible assumptions, the county-level data we rely on allows us to conclude that the Covid-19 vaccination rate among Republicans is between 5.5 and 47.9 (95\% CI: 5.1 and 51.8) percentage points lower than the corresponding rate among Democrats.

\bibliographystyle{apsr}
\bibliography{references.bib}

\appendix

\section*{Appendix: Proofs of Propositions}
\label{sec:app_proofs}

\begin{proof}[Proof of \textbf{Lemma \ref{lemma:nm_er}}]
Let $\gamma = \frac{Var(X_N)}{Var(X)}$. Then:

\begin{align*}
D_{NM} &= \frac{\mathbb{E}[X_NY_N]}{\mathbb{E}[X_N]}- \frac{\mathbb{E}[(1-X_N)Y_N]}{\mathbb{E}[1-X_N]}\\
&=\frac{\mathbb{E}[X_NY_N]-\mathbb{E}[X_N]\mathbb{E}[X_NY_N]-\mathbb{E}[Y_N]\mathbb{E}[X_N]+\mathbb{E}[X_N]\mathbb{E}[X_NY_N]}{\mathbb{E}[X_N](1-\mathbb{E}[X_N])}\\
&=\frac{\mathbb{E}[X_NY_N]-\mathbb{E}[Y_N]\mathbb{E}[X_N]}{\mathbb{E}[X_N](1-\mathbb{E}[X_N])}\\
&=\frac{Cov(X_N,Y_N)}{\mathbb{E}[X_N](1-\mathbb{E}[X_N])}\\
&=\frac{Var(X_N)}{\mathbb{E}[X_N](1-\mathbb{E}[X_N])} \frac{Cov(X_N,Y_N)}{Var(X_N)}\\
&= \frac{Var(X_N)}{\mathbb{E}[X](1-\mathbb{E}[X])} D_{ER} \\
&= \frac{Var(X_N)}{Var(X)} D_{ER} \\
&= \gamma D_{ER}\\
\end{align*}

as required. 

\end{proof}

We begin by formalizing an argument that we will repeatedly invoke in our proofs when we claim that identified bounds are \emph{sharp}:
\begin{lemma}[Feasibility of Intermediates]\label{lem:feasb_intermediates}
Suppose that we have two feasible marginal distributions $(Y^1,Y^0)=(A^1,A^0)$ and $(Y^1,Y^0)=(B^1,B^0)$, where feasible means that $Y_n^j \in [\underline{Y},\overline{Y}]$ for $j$ in $\{0,1\}$ and $Y_n = X_n Y_n^1+(1-X_n)Y_n^0$. Define also the respective disparities $D_A = A^1-A^0$ and $D_B=B^1-B^0$. Then for any $\alpha \in [0,1]$, the pair of marginals $(Y^1,Y^0) = (\alpha A^1+(1-\alpha)B^1,\alpha A^0+(1-\alpha)B^0)$ is feasible in the same sense and has disparity $\alpha D_A + (1-\alpha) D_B$. 
\end{lemma}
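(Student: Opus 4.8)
The plan is to observe that ``feasibility'' of a marginal pair $(Y^1,Y^0)$ is witnessed by an underlying assignment of neighborhood-level conditional means, and that the set of such assignments is convex while the map sending an assignment to its induced $(Y^1,Y^0)$ is linear; the neighborhood-wise convex combination of two witnesses is then a witness for the intermediate pair.

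First I would make the witnessing configurations explicit. Since $(A^1,A^0)$ is feasible, fix values $\{a_n^1,a_n^0\}_{n\in\mathcal{N}}$ with $a_n^j\in[\underline{Y},\overline{Y}]$ and $X_n a_n^1 + (1-X_n)a_n^0 = Y_n$ for every $n$, which reproduce the group means through the Bayes weights $\Pr(N=n\mid X=1)=p_n X_n/\E[X]$, i.e.\ $A^1 = \frac{\sum_n p_n X_n a_n^1}{\E[X]}$ and $A^0 = \frac{\sum_n p_n (1-X_n) a_n^0}{1-\E[X]}$. Fix $\{b_n^1,b_n^0\}$ analogously for $(B^1,B^0)$. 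The point to flag here is that both witnesses must reproduce the \emph{same} observed $Y_n$ in each neighborhood, since $(p_n,X_n,Y_n)$ is part of the fixed data.

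Next I would form the neighborhood-wise convex combination $c_n^j := \alpha a_n^j + (1-\alpha) b_n^j$ and verify it is itself a witness. The range constraint $c_n^j \in [\underline{Y},\overline{Y}]$ holds because the interval is convex; the accounting identity holds because $X_n c_n^1 + (1-X_n) c_n^0 = \alpha\big(X_n a_n^1 + (1-X_n)a_n^0\big) + (1-\alpha)\big(X_n b_n^1 + (1-X_n) b_n^0\big) = \alpha Y_n + (1-\alpha) Y_n = Y_n$. Applying the same Bayes-weighted averaging and using linearity of the summations then gives induced group means $\frac{\sum_n p_n X_n c_n^1}{\E[X]} = \alpha A^1 + (1-\alpha) B^1$ and $\frac{\sum_n p_n (1-X_n) c_n^0}{1-\E[X]} = \alpha A^0 + (1-\alpha) B^0$, so the intermediate pair is feasible; subtracting yields disparity $\alpha D_A + (1-\alpha) D_B$.

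I do not anticipate a deep obstacle: every feasibility constraint is affine or convex in the configuration, and each group mean is a fixed linear functional of it, so the statement is essentially that a linear image of a convex set is convex. The only point requiring care is conceptual rather than computational, namely distinguishing the marginal pair $(Y^1,Y^0)$ from the latent neighborhood-level assignment realizing it, and combining the latter rather than the former so that the common observed $Y_n$ is preserved in the accounting identity.
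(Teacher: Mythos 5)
Your proof is correct and follows essentially the same route as the paper's: both form the neighborhood-wise convex combination of the two witnessing assignments, then use convexity of $[\underline{Y},\overline{Y}]$ for the range constraint and linearity for the accounting identity and the disparity. Your version is in fact slightly more careful than the paper's, since you explicitly verify via the Bayes-weighted sums that the combined assignment aggregates to $(\alpha A^1+(1-\alpha)B^1,\,\alpha A^0+(1-\alpha)B^0)$, a linearity step the paper's proof uses implicitly.
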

\begin{proof}
    To show feasibility, we need that for each $n \in \mathcal{N}$: 
    \begin{align*}
        X_n Y_n^1 + (1-X_n) Y_n^0 = Y_n\\
        Y_n^0, Y_n^1 \in [\underline{Y},\overline{Y}]
    \end{align*}

    So let: 
    \begin{align*}
        Y^1 = \alpha A^1 +(1-\alpha) B^1, \\ 
        Y^0 = \alpha A^0 +(1-\alpha)B^0.
    \end{align*}
    We designate the neighborhood-level distributions by  $Y_{n}^{1A}$ and $Y_{n}^{0A}$ for $(A^1,A^0)$ and $Y_{n}^{1B}$ and $Y_{n}^{0B}$ for $(B^1,B^0)$. Define 
    \begin{align*}
        Y_n^{1\alpha} = \alpha Y_n^{1A}+(1-\alpha) Y_{n}^{1B} \\
        Y_n^{0\alpha} = \alpha  Y_n^{0A}+(1-\alpha)Y_n^{0B}.
    \end{align*}
    We must have that $Y_n^{1\alpha},Y_n^{0\alpha} \in [\underline{Y},\overline{Y}]$, since $Y_n^{1\alpha}$ sits between $Y_n^{1A}$ and $Y_n^{1B}$, and similarly for $Y_n^{0\alpha}$, and we have that 
    \begin{align*}
        X_n Y_n^{1\alpha} +(1-X_n)Y_n^{0\alpha} &= X_n (\alpha Y_n^{1A}+(1-\alpha)Y_n^{1B})+(1-X_n)(\alpha Y_n^{0A}+(1-\alpha)Y_{n^{0B}}) \\&=\alpha (X_n Y_n^{1A}+(1-X_n) Y_n^{0A}) + (1-\alpha) (X_n Y_n^{1B}+(1-\alpha) Y_n^{0B})\\& = \alpha Y_n + (1-\alpha)Y_n\\&=Y_n.
    \end{align*}
    Thus, the choice of $Y_n^{1\alpha},Y_n^{0\alpha}$ is feasible. 

    For disparity, note that:
    \begin{align*}
        Y^{1\alpha}-Y^{0\alpha} &= \alpha A^1 + (1-\alpha) A^0 - (\alpha B^1  +(1-\alpha) B^0)
        \\&= \alpha D^A + (1-\alpha) D^B.
    \end{align*}
\end{proof}
\begin{corr}\label{lem:feasb_intermediate_disp}
    Suppose that $\underline{D},\overline{D}$ are two feasible bounds on disparity. Then for any $D \in [\underline{D},\overline{D}]$, $D$ is feasible as well, and in particular if $D = \alpha \underline{D}+(1-\alpha)\overline{D}$ for $\alpha \in [0,1]$, then a feasible distribution which gives rise to $D$ is the $\alpha$-mixture over the respective (possibly non-unique) distributions which give rise to $\underline{D}$ and $\overline{D}$. 
\end{corr}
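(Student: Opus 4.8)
The plan is to obtain this as an essentially immediate consequence of Lemma \ref{lem:feasb_intermediates}. The only thing to unpack first is the meaning of a \emph{feasible bound on disparity}: to say that $\underline{D}$ and $\overline{D}$ are feasible is precisely to say that there exist feasible pairs of marginals realizing them. Fix one such pair $(A^1,A^0)$ with disparity $D_A = \underline{D}$ and one such pair $(B^1,B^0)$ with disparity $D_B = \overline{D}$, where feasibility of a pair is in exactly the sense used in Lemma \ref{lem:feasb_intermediates}, namely that each neighborhood-level $Y_n^j \in [\underline{Y},\overline{Y}]$ and $X_n Y_n^1 + (1-X_n) Y_n^0 = Y_n$ for every $n$. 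These realizing distributions need not be unique, but we only need to pin down one of each.

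Next I would reduce an arbitrary $D \in [\underline{D},\overline{D}]$ to a convex combination of the endpoints. If $\underline{D} = \overline{D}$ the interval degenerates to a point and the claim is trivial, so assume $\underline{D} < \overline{D}$ and set $\alpha = \frac{\overline{D}-D}{\overline{D}-\underline{D}} \in [0,1]$, so that $D = \alpha \underline{D} + (1-\alpha)\overline{D}$.

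Finally, I would invoke Lemma \ref{lem:feasb_intermediates} directly on the two realizing distributions with this value of $\alpha$. The lemma guarantees that the $\alpha$-mixture $\bigl(\alpha A^1 + (1-\alpha)B^1,\ \alpha A^0 + (1-\alpha)B^0\bigr)$ is itself feasible and has disparity $\alpha D_A + (1-\alpha) D_B = \alpha \underline{D} + (1-\alpha)\overline{D} = D$. This simultaneously shows that $D$ is feasible and exhibits the explicit mixture distribution named in the statement, which completes the argument.

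There is no genuine obstacle here: essentially all of the content is carried by Lemma \ref{lem:feasb_intermediates}, and the corollary merely repackages it by identifying the endpoints of a disparity interval with the two feasible distributions fed into the lemma. The only points requiring any care are the bookkeeping around the degenerate case $\underline{D}=\overline{D}$ and making explicit that the phrase ``feasible bound on disparity'' is what supplies the two realizing distributions to which the lemma is applied.
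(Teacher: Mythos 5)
Your proof is correct and follows essentially the same route as the paper's: both pick realizing feasible marginals for $\underline{D}$ and $\overline{D}$, write $D$ as the convex combination $\alpha\underline{D}+(1-\alpha)\overline{D}$, and apply Lemma \ref{lem:feasb_intermediates} to conclude the $\alpha$-mixture is feasible with disparity $D$. Your added care about the degenerate case $\underline{D}=\overline{D}$ and the explicit formula for $\alpha$ are minor refinements of the paper's argument, not a different approach.
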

\begin{proof}
 For any $D \in [\underline{D},\overline{D}]$ there is some $\alpha$ in $[0,1]$ such that $D=\alpha \underline{D}+(1-\alpha)\overline{D}$. Call the pair $(Y^1,Y^0) = (Y^{1A},Y^{0A})$ corresponding to $\underline{D}$ and $(Y^1,Y^0) = (Y^{1B},Y^{0B})$ for $\overline{D}$. By Lemma \ref{lem:feasb_intermediates}, the choice   $(Y^{1\alpha},Y^{0\alpha})=\alpha Y^{1A}+(1-\alpha) Y^{1B}, \alpha Y^{0A}+(1-\alpha)Y^{0B}$ is feasible and results in disparity $D$. Hence, $D$ is feasible. 
\end{proof}

\begin{lemma}[Neighborhood Values Characterized by the Method of Bounds]\label{lemma:gen_mob_char} 
For any neighborhood $n \in \mathcal{N}$ with  $\Pr[X=1|N=n]=X_n$ and $\underline{Y}\leq \E[Y|N=n]=Y_n \leq \overline{Y}$, the following are the most extreme values that $Y_n^1$ and $Y_n^0$ could take:
    \begin{align*}
    Y_{MOB,n}^{1+} :&=  \min\left\{\frac{Y_n-\underline{Y}(1-X_n)}{X_n},\overline{Y}\right\}, \\
    Y_{MOB,n}^{0-}:&=
    \max\left\{\frac{Y_n-\overline{Y} X_n}{1-X_n},\underline{Y}\right\},\\
    Y_{MOB,n}^{1-} :&= \max\left\{\frac{Y_n-\overline{Y}(1-X_n)}{X_n}, \underline{Y}\right\} , \\
    Y_{MOB,n}^{0+} :&=  \min\left\{\frac{Y_n-\underline{Y}X_n}{1-X_n},\overline{Y}\right\} .
\end{align*}
\end{lemma}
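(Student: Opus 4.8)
The plan is to treat the four quantities as the solutions to a one-dimensional constrained extremization problem determined by the single neighborhood-level accounting identity together with the range restriction on each group mean. Fix a neighborhood $n$ with $X_n \in (0,1)$; the degenerate cases $X_n \in \{0,1\}$ are handled separately, since then the accounting identity places no constraint on one of the two group means and the answer reduces to the raw box constraint $[\underline{Y},\overline{Y}]$ (the stated formulas recover this under the natural limiting convention). The feasibility conditions are exactly $Y_n^1, Y_n^0 \in [\underline{Y},\overline{Y}]$ together with $X_n Y_n^1 + (1-X_n)Y_n^0 = Y_n$. Solving the identity for one coordinate, $Y_n^1 = \frac{Y_n - (1-X_n)Y_n^0}{X_n}$, exhibits $Y_n^1$ as a strictly decreasing affine function of $Y_n^0$, so the feasible set is a line segment in the $(Y_n^0, Y_n^1)$-plane, and extremizing either coordinate amounts to locating the endpoints of that segment.

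First I would establish that the feasible segment is nonempty by noting that the point $(Y_n^0, Y_n^1) = (Y_n, Y_n)$ satisfies both the identity and the box constraint, using $\underline{Y} \le Y_n \le \overline{Y}$. Next, to compute $Y_{MOB,n}^{1+}$, I would push $Y_n^0$ toward its smallest admissible value: since $Y_n^1$ decreases in $Y_n^0$, the maximum of $Y_n^1$ is attained either when $Y_n^0$ reaches the floor $\underline{Y}$ or when $Y_n^1$ reaches the ceiling $\overline{Y}$, whichever binds first. Evaluating the identity at $Y_n^0 = \underline{Y}$ gives the candidate $\frac{Y_n - \underline{Y}(1-X_n)}{X_n}$, and intersecting with the ceiling yields the claimed $\min\{\cdot,\overline{Y}\}$. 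The remaining three formulas follow by the identical argument with the roles of floor/ceiling and of the two coordinates interchanged: $Y_{MOB,n}^{1-}$ by pushing $Y_n^0$ up to $\overline{Y}$, and $Y_{MOB,n}^{0+}, Y_{MOB,n}^{0-}$ by instead solving the identity for $Y_n^0$ and pushing $Y_n^1$ to its two extremes.

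To certify that each extreme value is genuinely attained — and hence that these are the exact, not merely one-sided, bounds — I would exhibit the explicit pair realizing it. For $Y_{MOB,n}^{1+}$, for instance, this is either $(Y_n^1, Y_n^0) = \bigl(\frac{Y_n - \underline{Y}(1-X_n)}{X_n}, \underline{Y}\bigr)$ when the first term of the $\min$ is active, or $\bigl(\overline{Y}, \frac{Y_n - \overline{Y}X_n}{1-X_n}\bigr)$ when the ceiling is active, and in each case one verifies that both coordinates lie in $[\underline{Y},\overline{Y}]$ and satisfy the identity. The only mildly delicate point, and where I would be most careful, is confirming that when the box constraint on the non-extremized coordinate is the binding one, the extremized coordinate still lands in $[\underline{Y},\overline{Y}]$ — that is, that the two candidate expressions inside each $\min$ or $\max$ are correctly ordered so the formula returns a feasible point. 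After clearing denominators this reduces to the elementary inequalities $\underline{Y} \le Y_n \le \overline{Y}$, so no genuine obstacle arises; the entire argument is a short case analysis of where the accounting line exits the box $[\underline{Y},\overline{Y}]^2$.
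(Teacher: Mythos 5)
Your proposal is correct and follows essentially the same route as the paper's proof: solve the accounting identity $X_n Y_n^1 + (1-X_n)Y_n^0 = Y_n$ for one group mean, push the other to its box constraint, and determine which of the floor/ceiling constraints binds, with the redundancy checks reducing to $\underline{Y} \le Y_n \le \overline{Y}$. Your geometric framing (endpoints of a line segment in the $(Y_n^0, Y_n^1)$-plane) and explicit treatment of the degenerate cases $X_n \in \{0,1\}$ are presentational refinements of the same argument, not a different method.
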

\begin{proof}
In any feasible distribution, we must have that:
\begin{align*}
    Y_n = X_n \cdot Y_n^1 + (1-X_n)\cdot Y_n^0,
\end{align*}
which can be rearranged to say that:
\begin{align*}
    Y_n^0 = \frac{Y_n-X_n\cdot Y_n^1}{1-X_n}
\end{align*} and
\begin{align*}
    Y_n^1 = \frac{Y_n-(1-X_n)\cdot Y_n^0}{X_n};
\end{align*}
and that
\begin{align*}
    Y_n^1, Y_n^0 \in [\underline{Y},\overline{Y}].
\end{align*}
Disparity is \emph{maximized} when as much of $Y_n$ as possible is coming from Group 1; it is \emph{minimized} when as much as possible is coming from Group 0. Focus on maximization for now. In addition to ensuring $Y_n^1 \in[\underline{Y},\overline{Y}]$, the maximal allocation to Group 1 must also leave $Y_n^0 \in [\underline{Y},\overline{Y}]$; that means that we must have
\begin{align*}
    Y_n^1 \geq \frac{Y_n-\overline{Y}(1-X_n)}{X_n} ;
    Y_n^1 \leq \frac{Y_n-\underline{Y}(1-X_n)}{X_n}; Y_n^1\leq \overline{Y};Y_n^1\geq\underline{Y}
\end{align*}
Notice that:
\begin{align*}
    \frac{Y_n-\underline{Y}(1-X_n)}{X_n}\geq \underline{Y} \iff Y_n - \underline{Y}(1-X_n)\geq \underline{Y}X_n \iff Y_n \geq \underline{Y}
\end{align*} which is always true by assumption, so we can eliminate the redundant inequality which is equivalent to  $Y_n^1 \geq \underline{Y}$; on the other hand, 
\begin{align*}
    \frac{Y_n-\underline{Y}(1-X_n)}{X_n} \leq  \overline{Y} \iff Y_n \leq \overline{Y}X_n+\underline{Y}(1-X_n),
\end{align*} which may or may not be true depending on where $Y_n$ falls, so the upper constraints are not redundant. Since the MOB then corresponds to the largest $Y_n^1$ that satisfies all these inequalities, we can write that
\begin{align*}
    Y_{MOB,n}^{1+} = \max\left\{\frac{Y_n-\overline{Y}(1-X_n)}{X_n}, \min\left\{\frac{Y_n-\underline{Y}(1-X_n)}{X_n},\overline{Y}\right\}\right\}.
\end{align*}
Notice that this can be further simplifed because
\begin{align*}
    \frac{Y_n -\overline{Y}(1-X_n)}{X_n} \leq \frac{Y_n-\underline{Y}(1-X_n)}{X_n},
\end{align*}
and
\begin{align*}
    \frac{Y_n-\overline{Y}(1-X_n)}{X_n} \leq \frac{\overline{Y} -\overline{Y}(1-X_n)}{X_n} = \frac{\overline{Y} X_n}{X_n}=\overline{Y},
\end{align*}
so the minimum of those two is always larger than the first argument. Then:
\begin{align*}
    Y_{MOB,n}^{1+} = \min\left\{\frac{Y_n-\underline{Y}(1-X_n)}{X_n},\overline{Y}\right\}.
\end{align*}

$Y_{MOB,n}^{0-}$ is the remainder, i.e.:

\begin{align*}
    Y_{MOB,n}^{0-}:&=\frac{Y_n-Y_{MOB,n}^{1+}X_n}{1-X_n}\\
    &= \frac{Y_n- \min\left\{Y_n-\underline{Y}(1-X_n),X_n\overline{Y}\right\}}{1-X_n}\\
    &= \frac{\max\left\{ \underline{Y}(1-X_n), Y_n-X_n \overline{Y} \right\} }{1-X_n}\\
    &= \max\left\{ \underline{Y}, \frac{Y_n - X_n \overline{Y}}{1-X_n}\right\}
\end{align*}

which the form in which the result is stated.

Turning to minimization of disparity, the logic is much the same, but in the opposite direction. We must have that:
\begin{align*}
     Y_n^0 \geq \frac{Y_n-\overline{Y}X_n}{1-X_n}; Y_n^0 \leq \frac{Y_n -\underline{Y}X_n}{1-X_n};Y_{n}^0 \leq \overline{Y}; Y_{n}^0\geq \underline{Y};
\end{align*}
again removing the first redundant constraints gives that:
\begin{align*}
    Y_{MOB,n}^{0+} := \max\left\{\frac{Y_n-\overline{Y}X_n}{1-X_n}, \min\left\{\frac{Y_n-\underline{Y}X_n}{1-X_n},\overline{Y}\right\}  \right\} = \min\left\{\frac{Y_n-\underline{Y}X_n}{1-X_n},\overline{Y}\right\}  
\end{align*}
and:
\begin{align*}
    Y_{MOB,n}^{1-} :&= \frac{Y_n-Y_{MOB,n}^{0+}(1-X_n)}{X_n}
    \\&= \frac{Y_n-\min\left\{\frac{Y_n-\underline{Y}X_n}{1-X_n},\overline{Y}\right\} (1-X_n)}{X_n} \\
    &= \frac{Y_n-\min\left\{Y_n-\underline{Y}X_n,\overline{Y}  (1-X_n)\right\}}{X_n}\\
    &= \frac{\max \left\{\underline{Y}X_n,Y_n-\overline{Y}  (1-X_n)\right\}}{X_n}\\
    &= \max \left\{\underline{Y},\frac{Y_n-\overline{Y}  (1-X_n)}{X_n}\right\}
\end{align*}

where we have again distributed the max and min to simplify the expression, re-arranged terms for consistency, and eliminated the redundant constraint.
\end{proof}

Now we turn to proving Proposition 1. 
\begin{proof}[Proof of Proposition \ref{prop:mob_levels}]
Note that Lemma \ref{lemma:gen_mob_char} gives the neighborhood-level characterization of the most extreme values of $Y_n^1$ and $Y_n^0$ that are still consistent with the data and bounds. 
First, we characterize the values that can be achieved. Define 
    \begin{align*}
        Y_{MOB}^{g\sigma} = \E[Y_{MOB,N}^{g\sigma}]
    \end{align*}
    for $g \in\{0,1\}$ and $\sigma \in \{+,-\}$. This corresponds to the expected value of $Y$ for group $G$ under the neighborhood-level extremes obtained via the method of bounds as described in Lemma \ref{lemma:gen_mob_char}.
    We can write that:
    \begin{align*}
        \E[Y_{MOB,N}^{g\sigma}] = \sum_{n\in \mathcal{N}} \Pr[N=n|X=g]Y_{MOB,n}^{g\sigma}
    \end{align*}
    Then noting that by Bayes' Rule:
    \begin{align*}
    \Pr[N=n|X=g] = \frac{\Pr[X=g|N=n] \Pr[N=n]}{\Pr[X=g]}
    \end{align*}
    which is $X_n p_n/\E[X]$ for $X=1$ and $(1-X_n)p_n/(1-\E[X])$ for $X=0$. So we can write:
    \begin{align*}
        Y_{MOB}^{1+} &= \sum_{n \in \mathcal{N}}\frac{X_n p_n}{\E[X]}\cdot  \min\left\{\frac{Y_n-\underline{Y}(1-X_n)}{X_n},\overline{Y}\right\}
        \\&=\frac{1}{\E[X]}\sum_{n \in \mathcal{N}} p_n \min\left\{Y_n-\underline{Y}(1-X_n),\overline{Y}X_n\right\}
        \\& = 
        \frac{\E\left[\min\left\{Y_n-\underline{Y}(1-X_n),\overline{Y}X_n\right\}\right]}{\E[X]}.
    \end{align*}
    Similarly,
    \begin{align*}
        Y_{MOB}^{0-}&=\sum_{n \in \mathcal{N}}\frac{p_n (1-X_n)}{1-\E[X]} \max\left\{ \underline{Y}, \frac{Y_n - X_n \overline{Y}}{1-X_n}\right\}
    \\&=\frac{\E[\max\left\{ \underline{Y}(1-X_n), Y_n - X_n \overline{Y}\right\}]}{1-\E[X]}
    \end{align*}

    The remaining two expressions follow similarly.

Now we turn to sharpness. First recall that by Lemma \ref{lemma:gen_mob_char}, we know that for each $n\in\mathcal{N}$, $Y_{MOB,n}^{1+}$ is the maximum that $Y_{n}^{1}$ could be; since each neighborhood is independent, the maximum that $Y^1$ could be is obtained by maximizing $Y_n^1$ for each $n$, and so $\E[Y_{MOB,N}^{1+}]=Y_{MOB}^{1+}$ is indeed the maximum that $Y^1$ could be. Since maximizing the value of $\E[Y|N=n,X=1]$ is tantamount to minimizing $\E[Y|N=n,X=0]$, this also shows that $Y^0$ cannot be less than $\E[Y^{0-}_{MOB,N}] = Y_{MOB}^{0-}$. If $Y_n=Y_{MOB,n}^{0-}$ for all $n$, then $Y^0=Y_{MOB}^{0-}$. Similarly, $Y^0$ can be at most $Y_{MOB}^{0+}$ and $Y^1$ must be at least $Y_{MOB}^{1-}$. We thus have that:
\begin{align*}
    Y^1 &\in [Y_{MOB}^{1-},Y_{MOB}^{1+}]\\
    Y^0 &\in [Y_{MOB}^{0-}, Y_{MOB}^{0+}]
\end{align*}
To see that these bounds are sharp - i.e. that there exists some joint distribution consistent with them - simply note the feasibility of $(Y_{MOB,n}^{1+},Y_{MOB,n}^{0-})$ and $(Y_{MOB,n}^{1-},Y_{MOB,n}^{0+})$ for all $n \in \mathcal{N}$ and shows that $(Y^1,Y^0) = (Y_{MOB}^{1+},Y_{MOB}^{0-})$   and $(Y^1,Y^0) = (Y_{MOB}^{1-},Y_{MOB}^{0+})$  are feasible. Then, for any $Y^1 \in [Y_{MOB}^{1-},Y_{MOB}^{1+}]$, pick $\alpha$ so that $Y^1 = \alpha Y_{MOB}^{1+}+(1-\alpha) Y_{MOB}^{1-}$. Lemma \ref{lem:feasb_intermediates} shows that there exists a marginal distribution that makes this feasible - in particular, $(Y_{n}^{1},Y_n^0 )=\left( \alpha Y_{MOB,n}^{1+}+(1-\alpha)Y_{MOB,n}^{1-},\alpha Y_{MOB,n}^{0-}+(1-\alpha)Y_{MOB,n}^{0+}\right)$.  The same argument can be made for $Y^0 \in [Y_{MOB}^{0-}, Y_{MOB}^{0+}]$. Thus, the bounds are sharp.  

\end{proof}

\begin{proof}[Proof of \textbf{Proposition \ref{prop:mob_dif}}]
The formulas for $D_{MOB}^+$ and $D_{MOB}^-$ follow by direct calculation from Proposition \ref{prop:mob_levels}; sharpness follows from Corollary \ref{lem:feasb_intermediate_disp} and Proposition \ref{prop:mob_levels}. We show the calculation for $D_{MOB}^+$ below; the calculation for $D_{MOB}^-$ is similar. 
\begin{align*}
    D_{MOB}^+ &=Y_{MOB}^{1+}-Y_{MOB}^{0-}
    \\ &= \frac{\E[\min\left\{Y_n-\underline{Y}(1-X_n),\overline{Y}X_n\right\}]}{\E[X]} - \frac{\E[\max\left\{\underline{Y}(1-X_n),Y_n-\overline{Y}X_n\right\}]}{1-\E[X]}\\
    &= \frac{\E[\min\left\{Y_n-\underline{Y}(1-X_n),\overline{Y}X_n\right\}]}{\E[X]} - \frac{\E[Y_n - \min\left\{Y_n-\underline{Y}(1-X_n),\overline{Y}X_n\right\}]}{1-\E[X]}\\
    &= \frac{\E[\min\left\{Y_n-\underline{Y}(1-X_n),\overline{Y}X_n\right\}]}{\E[X](1-\E[X])}-\frac{\E[Y_n]}{1-\E[X]}\\
    &= \frac{\E[\min\left\{Y_n-\underline{Y}(1-X_n),\overline{Y}X_n\right\}]-\E[Y]\E[X]}{\E[X](1-\E[X])}\\
    &= \frac{\E[\min\left\{Y_n-\underline{Y}(1-X_n),\overline{Y}X_n\right\}]-\E[Y]\E[X]}{\Var(X)}\\
\end{align*}
Where the third equality follows from the fact that $A - \min\{B,C\} = \max \{A-B,A-c\}$, the fifth from the fact that $\E[Y_n]=\E[Y]$, and the final from the fact that $ \Var(X) = \E[X](1-\E[X])$ for any binary random variable $X$.

\end{proof}

\begin{proof}[Proof of \textbf{Proposition \ref{prop:decomp_conditional_assoc}}]
    We have $D_{NM} = \gamma D_{ER}$. Substituting our expressions in propositions \ref{prop:bias_er} and \ref{prop:bias_nm}, we get
\begin{align*}
    D-\frac{\delta_B}{\Var(X)} &= \gamma (D+\frac{\delta_W}{\Var(X_N)})\\
    D(1-\gamma) &= \frac{\delta_B}{\Var(X)} + \frac{\gamma\delta_W}{\Var(X_N)}\\
    D(1-\gamma) &= \frac{\delta_B}{\Var(X)} + \frac{\delta_W}{\Var(X)}\\
    D &= \frac{\delta_B+\delta_W}{(1-\gamma)\Var(X)}
\end{align*}
    as required.
\end{proof}

Before proving Proposition \ref{prop:bias_er}, we first establish a technical lemma:

\begin{lemma}\label{lem:cov_equality}
       \begin{align*} \Cov(\mathbb{E}[Y|X],\mathbb{E}[X_N|X]) = D\cdot \Var[X_N]
       \end{align*}

\end{lemma}
\begin{proof}
         First note that:
    \begin{align*}
    \mathbb{E}[X_N|X=1] = \sum_{n \in \mathcal{N}} \Pr[N=n|X=1] \cdot X_n = \sum_{n \in \mathcal{N}} \frac{X_n \cdot p_n}{\Pr[X=1]} X_n = \frac{\mathbb{E}[X_N^2]}{\mathbb{E}[X]} 
    \end{align*}
    where the first equality follows by the definition of conditional expectation and the second by Bayes' Rule. 
    Similarly,
    \begin{align*}
        \mathbb{E}[X_N|X=0] = \sum_{n\in\mathcal{N}} \Pr[N=n|X=0] \cdot X_n = \sum_{n\in\mathcal{N}} \frac{(1-X_n)p_n}{\Pr[X=0]}\cdot X_n = \frac{\mathbb{E}[X]-\mathbb{E}[X_N^2]}{1-\mathbb{E}[X]}
    \end{align*}

    Now, turning to the main issue:
    \begin{align*}
    \Cov(\mathbb{E}[Y|X],\mathbb{E}[X_N|X])& = \E\left[\mathbb{E}[Y|X]\cdot \mathbb{E}[X_N|X]] - \mathbb{E}[\mathbb{E}[Y|X]\right] \cdot \E \left[\mathbb{E}[X_N|X]\right]
    \\& = \E\left[\mathbb{E}[Y|X]\cdot \mathbb{E}[X_N|X]\right] - \mathbb{E}[Y] \cdot \mathbb{E}[X]
    \end{align*}
    where the second equality follows by the law of iterated expectations.
    Now note that:
    \begin{align*}
        \E\left[\mathbb{E}[Y|X]\cdot \mathbb{E}[X_N|X]\right] 
        &=\Pr[X=1]\cdot\left(\mathbb{E}[Y|X=1]\cdot \mathbb{E}[X_N|X=1]\right)\\& +(1-\Pr[X=1])\cdot \left(\mathbb{E}[Y|X=0]\cdot \mathbb{E}[X_N|X=0]\right)
        \\&= \mathbb{E}[X] \cdot \left(\mathbb{E}[Y|X=1] \frac{\mathbb{E}[X_N^2]}{\mathbb{E}[X]}\right) + (1-\mathbb{E}[X])\cdot \left(\mathbb{E}[Y|X=0] \cdot \frac{\mathbb{E}[X]-\mathbb{E}[X_N^2]}{1-\mathbb{E}[X]}\right)
        \\&= \mathbb{E}[Y|X=1] \mathbb{E}[X_N^2] +E[Y|X=0] \cdot \mathbb{E}[X]-\mathbb{E}[X_N^2]\cdot \mathbb{E}[Y|X=0]
    \end{align*}
    Now collecting terms, we have this equals:
    \begin{align*}
        \mathbb{E}[X_N^2]\left(\mathbb{E}[Y|X=1]-\mathbb{E}[Y|X=0]\right) + \mathbb{E}[Y|X=0]\cdot \mathbb{E}[X]= \mathbb{E}[X_N^2]\cdot D + \mathbb{E}[X]\cdot \mathbb{E}[Y|X=0]
    \end{align*}

Substituting this result back into $\Cov(E[Y|X],\mathbb{E}[X_N|X])$, we have:
\begin{align*}
    \Cov(\mathbb{E}[Y|X],\mathbb{E}[X_N|X]) &= \mathbb{E}[X_N^2]\cdot D + \mathbb{E}[X]\cdot \mathbb{E}[Y|X=0]- \mathbb{E}[X]\cdot \mathbb{E}[Y]
    \\ & = \mathbb{E}[X_N^2] \cdot D + \mathbb{E}[X]\cdot \left(\mathbb{E}[Y|X=0]-\mathbb{E}[Y]\right)
\end{align*}
But note that:
\begin{align*}
    \mathbb{E}[Y|X=0]-\mathbb{E}[Y] &= \mathbb{E}[Y|X=0] - \mathbb{E}[X]\cdot \mathbb{E}[Y|X=1]-(1-\mathbb{E}[X])\cdot\mathbb{E}[Y|X=0]
    \\&= \mathbb{E}[X]\cdot\mathbb{E}[Y|X=0]-\mathbb{E}[X]\cdot \mathbb{E}[Y|X=1]
    \\&=-\mathbb{E}[X]\cdot D
\end{align*}
Substituting this back into the previous equation, we have:
\begin{align*}
    \Cov(\mathbb{E}[Y|X],\mathbb{E}[X_N|X]) &= \mathbb{E}[X_N^2] \cdot D + \mathbb{E}[X]\cdot(-\mathbb{E}[X]\cdot D)
    \\&= D\cdot\left(\mathbb{E}[X_N^2] -\mathbb{E}[X]^2\right)
\end{align*}
Since $\mathbb{E}[X_N]=\mathbb{E}[X]$, the latter term is $\mathbb{E}[X_N^2]-\mathbb{E}[X_N]^2=\Var[X_N]$. Then we have that:
\begin{align*}
    \Cov(\mathbb{E}[Y|X],\mathbb{E}[X_N|X]) = D\cdot \Var(X_N)
\end{align*}

\end{proof}

\begin{proof}[Proof of \textbf{Proposition \ref{prop:bias_er}}]
    Recall that:
    \begin{align*}
        D_{ER} = \frac{\Cov(Y,X_N)}{\Var[X_N]}= \frac{\mathbb{E}[\Cov(Y,X_N|X)]+\Cov(\mathbb{E}[Y|X], \mathbb{E}[X_N|X])}{\Var[X_N]},
    \end{align*}
    where the last equality is by the law of total covariance.
    Since $\delta_{W}$ is defined as $\mathbb{E}[\Cov(Y,X_N|X)]$, we have already shown that:
    \begin{align*}
        D_{ER} = \frac{\Cov(\mathbb{E}[Y|X],\mathbb{E}[X_N|X]) + \delta_{W}}{\Var[X_N]}. 
    \end{align*}
    It thus suffices to show that: $\Cov(\mathbb{E}[Y|X],\mathbb{E}[X_N|X]) = D\cdot \Var[X_N]$. But that is exactly what is shown in Lemma \ref{lem:cov_equality}.

Thus:
\begin{align*}
    D_{ER} = \frac{\Cov(\mathbb{E}[Y|X],\mathbb{E}[X_N|X]) + \delta_{W}}{\Var(X_N)} = \frac{ D\cdot \Var(X_N)+\delta_{W}}{\Var(X_N)} = D+ \frac{\delta_{W}}{\Var(X_N)}
\end{align*}
as desired.

\end{proof}

\begin{proof}[Proof of \textbf{Proposition \ref{prop:cna_bounds}}]
Note that by definition of the method of bounds, $D \in [D_{MOB}^{-},D_{MOB}^{+}]$. 

\begin{itemize}
\item[(i)] Proposition \ref{prop:bias_er} implies that if $\delta_W \geq0$, $D\leq D_{ER}$. Since we always have that $D\leq D_{MOB}^{+}$ and  $D \leq D_{ER}$, it is less than their minimum; combining that with the fact that $D_{MOB}^{-} \leq D$ yields (i) .

\item[(ii)] follows similarly: By Proposition \ref{prop:bias_er}, $\delta_W \leq 0 \implies D_{ER}\leq D $, and since $D \geq D_{MOB}^{-}$, $D$ is thus larger than their maximum, and also $D_{MOB}^{+}\geq D$. 

\item[(iii)] 
By definition of the method of bounds, \emph{no} value of $D$ outside $[D_{MOB}^{-},D_{MOB}^{+}]$ can be attained by any distribution consistent with the information known. On the other hand, for each choice of $D$ in $[D_{MOB}^{-},D_{MOB}^{+}]$, Proposition \ref{prop:mob_dif} shows that there is a joint distribution resulting in $D$ that is consistent with $Y_n$ for all $n$. The question of whether each proposed range is sharp given the information about $\delta_W$ thus reduces to whether knowing the sign of $\delta_W$ allows us to eliminate any more possibilities for $D$ that are contained in the range. We prove sharpness of (i), in which $\delta_W\geq0$; the proof of (ii) is similar, mutatis mutandis.

\paragraph{Subcase (a)} Suppose $D_{ER}\geq D_{MOB}^+$. 
Then our bounds on $D$ are $[D_{MOB}^{-},D_{MOB}^{+}]$. By definition of the MOB, all values of $D$ outside this range are infeasible. We must now show that all values inside of this range must have $\delta_W \geq 0$.  

Choose any $D \in [D_{MOB}^{-},D_{MOB}^{+}]$. 
By Proposition \ref{prop:bias_er}, we know that 
\begin{align*}
    D_{ER}= D+ \frac{\delta_W}{\Var[X]} \implies \delta_W = (D_{ER}-D)\Var[X]
\end{align*}
Since $\Var[X]$ is always nonnegative, and $D_{ER} \geq D_{MOB}^{+} \geq D$, we must have $\delta_W \geq 0$, as required. 

\paragraph{Subcase (b)} Suppose $D_{ER}<D_{MOB}^+$. Then our bounds on $D$ are $[D_{MOB}^-, D_{ER}]$. By definition of the MOB, all values of $D$ outside of $[D_{MOB}^{-},D_{MOB}^{+}]$ are infeasible, as are values in $(D_{ER},D_{MOB}^{+}$ as show in part (i). We must now show that all the values inside $[D_{MOB}^{-},D_{ER}]$ are feasible.

So suppose $D' \in [D_{MOB}^{-},D_{ER}]$. Then we know that $D' \leq D_{ER}$, and then by the same logic as subcase a, $\delta_W\geq0$, as required. 

In both subcases, we have shown that all choices of $D'$ between $D_{MOB}^-$ and the smaller of $\min\left\{D_{ER},D_{MOB}^+\right\}$ are consistent with the known information and are feasible, and that these are the only such feasible choices. Thus, the bound is sharp.

\end{itemize}
\end{proof}

\begin{proof}[Proof of \textbf{Proposition \ref{prop:bias_nm}}]
    By the law of iterated expectations, we have that:
    \begin{align*}
        \Cov(X,Y) &= \mathbb{E}[\Cov(X,Y|N)]+\Cov(\mathbb{E}[X|N], \mathbb{E}[Y|N]) \\&= \mathbb{E}[\Cov(X,Y|N)] + \Cov(X_N,Y_N)
    \end{align*}
    Thus 
    \begin{align*}
        D_{NM}= \frac{\Cov(X_N,Y_N)}{\Var (X)}= \frac{\Cov(X,Y)-\mathbb{E}[\Cov(X,Y|N)]}{\Var(X)} = D - \frac{\delta_{B}}{\Var(X)}
    \end{align*}
as desired.
\end{proof}

Before proving Proposition \ref{prop:cga_bounds}, we first establish a technical lemma:

\begin{lemma}[Neighborhood model is always feasible]\label{lem:nm_feasb}
\begin{align*}
    D_{NM} \in [D_{MOB}^{-},D_{MOB}^{+}]
\end{align*}
\end{lemma}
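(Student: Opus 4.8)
The plan is to exhibit a single feasible joint distribution whose induced difference in group means equals exactly $D_{NM}$, and then invoke the sharpness of the Method of Bounds (Proposition \ref{prop:mob_dif}) to conclude that $D_{NM}$ must lie in $[D_{MOB}^{-},D_{MOB}^{+}]$. The natural candidate is the \emph{constant-within-neighborhood} imputation that assigns each group the neighborhood average, namely $Y_n^1 = Y_n^0 = Y_n$ for every $n \in \mathcal{N}$. This is precisely the allocation that the neighborhood model implicitly assumes, so it is the obvious witness.

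First I would verify that this imputation is feasible in the sense used throughout. The accounting constraint $X_n Y_n^1 + (1-X_n) Y_n^0 = Y_n$ holds trivially because both group means equal $Y_n$, and the range constraint $Y_n^1, Y_n^0 \in [\underline{Y},\overline{Y}]$ holds because $\underline{Y} \le Y_n \le \overline{Y}$ by the maintained boundedness assumption. Hence the imputation defines a valid feasible distribution consistent with the observed $(p_n, X_n, Y_n)$.

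Next I would compute the group means induced by this distribution. Using Bayes' rule, $\Pr[N=n \mid X=1] = X_n p_n / \mathbb{E}[X]$ and $\Pr[N=n \mid X=0] = (1-X_n) p_n / (1-\mathbb{E}[X])$, together with $\mathbb{E}[X] = \mathbb{E}[X_N]$ by iterated expectations, the induced group-$1$ mean is
$$ Y^1 = \sum_{n \in \mathcal{N}} \Pr[N=n \mid X=1]\, Y_n^1 = \frac{\mathbb{E}[X_N Y_N]}{\mathbb{E}[X_N]} = Y^1_{NM}, $$
and symmetrically $Y^0 = Y^0_{NM}$, so the induced difference is exactly $D = Y^1_{NM} - Y^0_{NM} = D_{NM}$.

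Finally, since Proposition \ref{prop:mob_dif} establishes that $[D_{MOB}^{-},D_{MOB}^{+}]$ contains every value of $D$ attainable by a feasible distribution, and the constant-within-neighborhood imputation is a feasible distribution attaining $D = D_{NM}$, I conclude that $D_{NM} \in [D_{MOB}^{-},D_{MOB}^{+}]$. The only real content of the argument is recognizing the right witnessing distribution; once that choice is made, feasibility and the value computation are both immediate, so I do not expect any step to present a genuine obstacle.
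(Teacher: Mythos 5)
Your proposal is correct and takes essentially the same approach as the paper: the paper's proof also uses the constant-within-neighborhood witness $Y_n^1 = Y_n^0 = Y_n$, notes its feasibility, and concludes that since the method of bounds captures all feasible distributions, it captures $D_{NM}$. Your version merely makes explicit the Bayes'-rule computation showing this imputation induces exactly $Y^1_{NM}$, $Y^0_{NM}$, a step the paper asserts without detail.
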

\begin{proof}
    Let $Y_{n,NM}^1=Y_{n,NM}^0=Y_{n}$. Then $X_n\cdot Y_{n,NM}^1 + (1-X_n)\cdot Y_{n,NM}^0 = Y_{NM}$; in other words, the joint distribution corresponding to equal values of $E[Y|X,N=n]$, which is always possible, generates the neighborhood model's disparity. Since the MOB captures all feasible distributions, it captures the neighborhood model as well. 
\end{proof}

\begin{proof}[Proof of \textbf{Proposition \ref{prop:cga_bounds}}]
Note that by definition of the method of bounds, $D \in [D_{MOB}^{-},D_{MOB}^{+}]$. Also, by Lemma \ref{lem:nm_feasb}, $D_{NM}$ is feasible, and thus $D_{NM} \in [D_{MOB}^{-},D_{MOB}^{+}]$.

\begin{itemize}
\item[(i)] Proposition \ref{prop:bias_nm} implies that if $\delta_B \geq 0$, $D\geq D_{NM}$. Since $D \geq D_{NM} \geq D_{MOB}^{-}$, (i) follows. 

\item[(ii)] follows similarly: By Proposition \ref{prop:bias_nm}, $\delta_B \leq 0 \implies D_{NM}\geq D $, and since $D \leq D_{NM} \leq D_{MOB}^{+}$, (ii) follows.

\item[(iii)] Proof is similar to that  of Proposition \ref{prop:cna_bounds} (iii). All that is necessary to show is that the bounds cannot be further narrowed by finding points that conflict with information about $\delta_B$. For $i$, any $D'$ in $[D_{NM}, D_{MOB}^+]$, if it were to conflict with $\delta_B\geq0$, it would have to be that $D_{NM}+\frac{\delta_B}{\Var[X_n]} = D' \implies \delta_B <0$, but that would require $D'<D_{NM}$, a contradiction. A similar proof, mutatis mutandis, works for (ii).

\end{itemize}
\end{proof}

Before proving Proposition \ref{prop:bounds_cr}, we first establish a technical lemma:

\begin{lemma}[Bound Relationships]\label{lem:bound_relations}
It is always the case that:
\begin{align*}
    \sign(D_{ER})=\sign(D_{NM}).
\end{align*}
However: 
\begin{align*}
|D_{ER}|\geq |D_{NM}|
\end{align*}
\end{lemma}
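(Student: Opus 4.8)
The plan is to derive both statements as immediate consequences of Lemma \ref{lemma:nm_er}, which already establishes the exact mechanical link $D_{NM} = \gamma\, D_{ER}$ with $\gamma = \frac{\Var(X_N)}{\Var(X)}$. The entire argument reduces to pinning down the range of the scalar $\gamma$ and then reading off how multiplication by $\gamma$ acts on sign and magnitude. First I would record that $\gamma$ is a ratio of two variances, hence $\gamma \geq 0$, and that Lemma \ref{lemma:nm_er} already supplies $\gamma < 1$; so $\gamma \in [0,1)$.

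For the sign claim, I would argue that $\gamma$ is in fact strictly positive. This matters because the statement $\sign(D_{ER}) = \sign(D_{NM})$ is an equality of signs, and it could fail if $\gamma = 0$ forced $D_{NM}=0$ while $D_{ER}\neq 0$. The strict inequality $\gamma > 0$ follows from $\Var(X_N) > 0$, which is exactly the nondegeneracy condition required for $D_{ER} = \frac{\Cov(Y_N,X_N)}{\Var(X_N)}$ to be well defined in the first place (and is guaranteed by the maintained assumption that there is some neighborhood with $p_n>0$ and $X_n\in(0,1)$ together with cross-neighborhood variation in $X_N$). Given $\gamma > 0$, multiplication by $\gamma$ preserves sign exactly:
\begin{equation*}
\sign(D_{NM}) = \sign(\gamma\, D_{ER}) = \sign(\gamma)\,\sign(D_{ER}) = \sign(D_{ER}),
\end{equation*}
where the degenerate subcase $D_{ER}=0$ is handled trivially since it forces $D_{NM}=0$ as well.

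For the magnitude claim, I would take absolute values in $D_{NM} = \gamma\, D_{ER}$ and use $0 \le \gamma < 1$:
\begin{equation*}
|D_{NM}| = |\gamma\, D_{ER}| = \gamma\,|D_{ER}| \leq |D_{ER}|.
\end{equation*}
The only even mildly delicate point — the ``hard part,'' such as it is — is justifying $\gamma>0$ rather than merely $\gamma\geq 0$, since that strict positivity is what upgrades a one-directional sign containment into the claimed sign \emph{equality}; everything else is a one-line consequence of Lemma \ref{lemma:nm_er}.
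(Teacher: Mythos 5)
Your proposal is correct and takes essentially the same approach as the paper: both arguments reduce the claim to the identity $D_{NM}=\gamma\,D_{ER}$ with $\gamma=\frac{\Var(X_N)}{\Var(X)}\in(0,1)$, from which sign preservation and $|D_{NM}|\leq|D_{ER}|$ follow immediately. The only difference is presentational — you invoke Lemma \ref{lemma:nm_er} and explicitly justify $\gamma>0$ via the nondegeneracy needed for $D_{ER}$ to be well defined, whereas the paper re-derives the identity inline and simply asserts the scalar is positive.
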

\begin{proof}
    Recall that:
    \begin{align*}
        D_{NM}= \frac{\Cov(Y,X_N)}{\Var[X]} = \frac{\Cov(Y,X_N)}{\Var[X_N]}\cdot \frac{\Var[X_N]}{\Var[X]}=D_{ER}\cdot\frac{\Var[X_N]}{\Var[X]} .
    \end{align*}
    In other words, $D_{NM}$ is a scalar multiple of $D_{ER}$ (with a positive scalar), so they must share the same sign. On the other hand, $\Var[X_N]<\Var[X]$, so the scalar multiple is smaller than 1. 
\end{proof}

\begin{proof}[Proof of \textbf{Proposition \ref{prop:bounds_cr}}]
By Lemma \ref{lem:bound_relations}, $D_{ER}$ and $D_{NM}$ have the same sign, i.e. either both are positive or both are negative, yet $D_{ER}$ is always of larger magnitude that $D_{NM}$. In other words, either $0\leq D_{NM} \leq D_{ER}$ or $D_{ER} \leq D_{NM}\leq 0$. 

Thus, under (i), we must have that $0 \leq D_{NM} \leq D_{ER}$. By combining Proposition \ref{prop:bias_er} and Proposition \ref{prop:bias_nm} with the fact that $\delta_W\delta_B\geq0$, we must have that $D$ is in between $D_{NM}$ and $D_{ER}$. Thus $0 \leq D_{NM} \leq D \leq D_{ER}$. But since the MOB's upper estimate is an upper bound on disparity,  we also have $D \leq D_{MOB}^{+}$. Thus we can write that:
\begin{align*}
    0 \leq D_{NM} \leq D \leq \min\left\{D_{MOB}^{+}, D_{ER}\right\}.
\end{align*}

For (ii), similar reasoning around Lemma \ref{lem:bound_relations} and Propositions \ref{prop:bias_er} and \ref{prop:bias_nm}, mutatis mutandis, gives that $D_{ER} \leq D \leq 0$. Again, since the MOB lower bound must lower bound the disparity, we also must have that $D_{MOB}^{-} \leq D$. Thus we can write that:
\begin{align*}
    \max\left\{D_{MOB}^{-}, D_{ER}\right\} \leq D \leq D_{NM}\leq 0
\end{align*}

For (iii), we must prove that these bounds are sharp. Notice first of all that by Lemma \ref{lem:nm_feasb}, $D_{NM}$ is always feasible, so in (i) we know that $D_{NM} \geq D_{MOB}^{-}$; there is thus nothing that can be gained via the MOB for the lower end of the interval. A similar argument shows that nothing can be gained on the upper end of the interval via the MOB in (ii). Finally, note that any $D'$ within the interval of (i) can be mapped to some $\alpha \in [0,1]$ such that $D'$ is $\alpha \cdot D_{NM}+(1-\alpha)\cdot \min\left\{D_{MOB}^{+},D_{ER}\right\}$. Such a mixture will have $\mathbb{E}[\Cov(Y,X_N|X)]$, $\mathbb{E}[\Cov(Y,X|X_N)] \geq 0$ since both extremes do; thus this mixture does not contradict any of the given information and is possible. The bounds are thus sharp; a similar argument can be made for (ii).

\end{proof}

\begin{proof}[Proof of \textbf{Theorem \ref{thm:identification_D}}]
This Theorem is a consequence of the previous propositions, which show sharp bounds under the various possible conditions. First off, note that if $\delta_B,\delta_W \in\mathbb{R}$, then nothing is known besides what the method of bounds provide; thus Proposition \ref{prop:mob_dif} gives the top left cell. Now note that if $\delta_B=\delta_W=0$, then by Proposition \ref{prop:decomp_conditional_assoc}, $D=0$. If $\delta_W=0$, $D=D_{ER}$, which takes care of the bottom row, while $\delta_B=0\implies D=D_{NM}$ takes care of the rightmost column. The shaded cells are the results of Proposition \ref{prop:bounds_cr}. Proposition \ref{prop:cga_bounds} give the remaining two cells of the first row; Proposition \ref{prop:cna_bounds} gives the remaining two cells of the first column. For each of these cells, note that the we have already shown the bounds are sharp as in the respective propositions.

Two cells remain. For $\delta_B \geq0$, $\delta_W\leq 0$, Proposition \ref{prop:bias_er} shows that $D = D_{ER}-\frac{\delta_W}{\Var(X_N)}\geq D_{ER}$, while Proposition \ref{prop:bias_nm} shows that $D=D_{NM}+\frac{\delta_B}{\Var(X_N)}\geq D_{NM}$. Thus, $D \geq \max\left\{D_{ER},D_{NM}\right\}$; but we also have that $D \leq D_{MOB}^+$, and $D\geq D_{MOB}^-$ by the method of bounds. Thus $D \geq \max\left\{D_{ER},D_{NM},D_{MOB}^{-}\right\}$ and $D \leq D_{MOB}^{+}$; combining these yields the cell. To see that the bound is sharp, note first of all since the range is a subset of the method of bounds, Corollary \ref{lem:feasb_intermediate_disp} shows that any $D$ does not conflict with $Y_n$ because it is a subset of the method of bounds range. Thus to be infeasible given the stated information, $D$ would have to result in $\delta_B < 0$ or $\delta_W >0$; but neither of these can be true for the stated range (simply because we \emph{derived} the stated range by identifying the region where these could not be true). 
The bound is thus sharp. 

On the other hand, the same propositions show that $\delta_B \leq 0 \implies D\leq D_{NM}$, while $\delta_W \geq 0 \implies D \leq D_{ER}$. Again, we also have that $D_{MOB}^{+}\geq D$, so $\min\left\{D_{ER},D_{NM},D_{MOB}^{+}\right\}\geq D$, and again $D\geq D_{MOB}^{-}$. Combining these yields the cell, sharpness follows similarly, and we have thus proved each cell in the table is as claimed. 
\end{proof}

\begin{proof}[Proof of \textbf{Proposition \ref{prop:bias_er_levels}}]
Note that the claim for $Y_{ER}^1-Y^1$ will be satisfied if we can show that:
\begin{align*}
   \frac{\Var(X_N)}{1-\mathbb{E}[X]} \cdot \left(Y_{ER}^{1}-Y^1\right) = \delta_W
\end{align*}
Recall first off that:
\begin{align*}
    \gamma = \frac{\Var(X_N)}{\Var(X)}= \frac{\Var(X_N)}{\mathbb{E}[X](1-\mathbb{E}[X])}\implies \frac{\Var(X_N)}{1-\mathbb{E}[X]} = \gamma \mathbb{E}[X].
\end{align*}
Now, let us look first at $Y_{ER}^1$. We have:
\begin{align*}
    \gamma\mathbb{E}[X]\cdot  Y_{ER}^1&= \gamma\mathbb{E}[X]\cdot\left[\mathbb{E}[Y]+(1-\mathbb{E}[X])\frac{\Cov(X_N,Y_N)}{\Var(X_N)}\right]
    \\& = \gamma \mathbb{E}[X]\mathbb{E}[Y] + \Cov(X_N,Y_N)\cdot \frac{\mathbb{E}[X]\cdot\E(1-\mathbb{E}[X])\cdot \gamma}{\Var(X_N)}
    \\& = \gamma \mathbb{E}[X]\mathbb{E}[Y] +\Cov(X_N,Y_N).
\end{align*}
On the other hand, using that $Y^1= \frac{\mathbb{E}[XY]}{\mathbb{E}[X]}$ (since $\mathbb{E}[XY] = \mathbb{E}[Y|X=1]\mathbb{E}[X]$) we can write that:
\begin{align*}
    \gamma \mathbb{E}[X] \cdot Y^1 = \gamma \mathbb{E}[XY].
\end{align*}
Putting together what we have so far, we have:
\begin{align*}
    \frac{\Var(X_n)}{1-\mathbb{E}[X]} \cdot \left(Y_{ER}^1 - Y^1\right) &= \gamma\mathbb{E}[X]\mathbb{E}[Y]+\Cov(X_N,Y_N)-\gamma\mathbb{E}[XY]
    \\&= \Cov(X_N,Y_N)-\gamma \Cov(X,Y).
\end{align*}
But using the definition of $\gamma$, the fact that $D=\frac{\Cov(X,Y)}{\Var{X}}$, and the fact that $D= D_{ER}-\frac{\delta_W}{\Var(X_N)}$ by Proposition \ref{prop:bias_er}, we have that:
\begin{align*}
    \Cov(X_N,Y_N)-\gamma\Cov(X,Y) &=  \Cov(X_N,Y_N) - \frac{\Cov(X,Y)}{\Var(X)}\cdot \Var(X_N)
    \\&= \Cov(X_N,Y_N) - \Var(X_N)\cdot D \\&= \Cov(X_N,Y_N)- \Var(X_N)\left[D_{ER} -\frac{\delta_W}{\Var(X_N)}\right]
    \\&= \delta_W 
\end{align*}
where the final equality follows by the fact that $D_{ER} = \Cov(X_N,Y_N)/\Var(X_N)$. This proves the claim. 

The second part follows similarly. Write $\Var(X_N)/\mathbb{E}[X] = \gamma (1-\mathbb{E}[X])$. Then:
\begin{align*}
    Y^0\cdot \gamma \cdot(1-\mathbb{E}[X]) = \frac{\mathbb{E}[(1-X)Y]}{1-\mathbb{E}[X]}\cdot \gamma \cdot (1-\mathbb{E}[X])=\gamma \mathbb{E}[Y]-\gamma\mathbb{E}[XY]
\end{align*}
where we have used that $\mathbb{E}[(1-X)Y]=\mathbb{E}[Y|X=0]\cdot(1-\mathbb{E}[X])$. 
And:
\begin{align*}
    \gamma (1-\mathbb{E}[X]) \cdot Y_{ER}^{0} &= \gamma(1-\mathbb{E}[X])\mathbb{E}[Y_N] - \gamma (1-\mathbb{E}[X])\frac{\Cov(X_N,Y_N)}{\Var(X_N)}\cdot \mathbb{E}[X]\\
    &= \gamma \mathbb{E}[Y]-\gamma\mathbb{E}[X]\mathbb{E}[Y] - \Cov(X_N,Y_N).
\end{align*}
Putting these together:
\begin{align*}
    \frac{\Var(X_N)}{\mathbb{E}[X]}\cdot \left(Y_{ER}^{0}-Y^0\right) &= \gamma\mathbb{E}[Y]-\gamma\mathbb{E}[X]\mathbb{E}[Y]-\Cov(X_N,Y_N)-(\gamma\mathbb{E}[Y]-\gamma\mathbb{E}[XY])
    \\&= \gamma \Cov(X,Y)-\Cov(X_N,Y_N)
    \\&=-\delta_W,
\end{align*}
where the last equation follows similarly by Proposition \ref{prop:bias_er} and the analogous facts as those mentioned at this point in the proof for $Y_{ER}^1-Y^1$ above. This completes the proof.  
\end{proof}

\begin{proof}[Proof of \textbf{Proposition \ref{prop:bias_nm_levels}}]
Notice that:
\begin{align*}
    \mathbb{E}[XY] &= \Pr[X=1 \cap Y=1] = \mathbb{E}[Y|X=1]\Pr[X=1]
\end{align*}
We can thus write that:
\begin{align*}
    Y^1 = \frac{\mathbb{E}[XY]}{\mathbb{E}[X]}
\end{align*}
Then
\begin{align*}
    Y_{NM}^1 -Y^1= \frac{\mathbb{E}[X_N Y_N]}{\mathbb{E}[X_N]} - \frac{\mathbb{E}[XY]}{\mathbb{E}[X]} = \frac{\mathbb{E}[X_N Y_N]-\mathbb{E}[XY]}{\mathbb{E}[X]}
\end{align*}
since $\mathbb{E}[X_N]=\mathbb{E}[X]$. Now we note that $\mathbb{E}[X_N Y_N] = \Cov(X_N, Y_N)+\mathbb{E}[X_N]\mathbb{E}[Y_N] = \Cov(X_N, Y_N) +\mathbb{E}[X]\mathbb{E}[Y]$, and substitute this in to obtain:
\begin{align*}
    \frac{\mathbb{E}[X_N Y_N]-\mathbb{E}[XY]}{\mathbb{E}[X]} = \frac{\Cov(X_N,Y_N) +\mathbb{E}[X]\mathbb{E}[Y] - \mathbb{E}[XY]}{\mathbb{E}[X]}=\frac{\Cov(X_N, Y_N)-\Cov(X,Y)}{\mathbb{E}[X]}
\end{align*}
Now applying the law of total covariance, we can write that:
\begin{align*}
\Cov(X,Y)=\mathbb{E}[\Cov(X,Y|N)]+\Cov(\mathbb{E}[X|N],\mathbb{E}[Y|N])=\delta_B + \Cov(X_N,Y_N)
\end{align*}
and plugging this back in gives:
\begin{align*}
    \frac{\Cov(X_N, Y_N)-\Cov(X,Y)}{\mathbb{E}[X]} = \frac{\Cov(X_N, Y_N)-\delta_B -\Cov(X_N,Y_N)}{\mathbb{E}[X]} =-\frac{\delta_B}{\mathbb{E}[X]}
\end{align*}
as desired. To see second part, note that:
\begin{align*}
    \mathbb{E}[(1-X)Y]= (1-\mathbb{E}[X])\mathbb{E}[Y|X=0] \implies Y^0 = \frac{\mathbb{E}[(1-X)Y]}{1-\mathbb{E}[X]}
\end{align*}
so subtracting $Y^0$ from $Y_{NM}^0$ yields
\begin{align*}
    \frac{\mathbb{E}[(1-X_N)Y_N]}{\mathbb{E}[1-X_N]}-\frac{\mathbb{E}[(1-X)Y]}{\mathbb{E}[1-X]} &= \frac{\mathbb{E}[Y_N]-\mathbb{E}[X_N Y_N] - \mathbb{E}[Y] + \mathbb{E}[XY]}{1-\mathbb{E}[X]}\\&= \frac{\mathbb{E}[XY]-\mathbb{E}[X_N Y_N]}{1-\mathbb{E}[X]}.
\end{align*}
The numerator is now the negative of the term arrived at before invoking the definition of covariance in the proof for $Y^1$, so can follow similarly, and we arrive at:
\begin{align*}
    Y_{NM}^{0}-Y^{0}= \frac{\delta_B}{1-\mathbb{E}[X]}
\end{align*}
as desired. 
\end{proof}

\begin{proof}[Proof of \textbf{Theorem \ref{thm:identification_Y1}}]
The ideas in this proof are very similar to that of Theorem \ref{thm:identification_D}, so we will sketch out the key points without belaboring the details.  
    If $\delta_W>0$, then:\begin{align*} Y^{1} = Y_{ER}^1-\frac{\delta_W(1-\E[X])}{\Var(X_N)} \leq Y_{ER}^{1}
    \end{align*}
    If $\delta_B>0$, then 
    \begin{align*}
        Y^1 = Y_{NM}^1 + \frac{\delta_B}{\mathbb{E}[X]}\geq Y_{NM}^1
    \end{align*}
    Thus under CR with $\delta_W,\delta_B>0$, we have that
    \begin{align*}
    Y_{NM}^1\leq Y_{NM}^1+\frac{\delta_B}{1-\mathbb{E}[X]} \leq Y^1 \leq Y_{ER}^1-\frac{\delta_W(1-\mathbb{E}[X])}{\Var(X_N)}\leq Y_{ER}^1
    \end{align*}
    Note that we must also have always $Y^1 \leq Y_{MOB}^{1+}$ and $Y^1\geq Y_{MOB}^{1-}$ by the definition of the method of bounds; the neighborhood model is guaranteed to have $Y^1$ within the method of bounds estimates,, so we can omit those when the boundary is  $Y_{NM}^1$, but we do need to compare it to $Y_{ER}^1$. 

    Similarly, if $\delta_W<0$, then $Y^1 \geq Y_{ER}^1$, if $\delta_B<0$ then $Y_1\leq Y_{NM}^1$, and if both are true -- i.e. contextual reinforcement in the negative direction -- then $Y_{ER}^1\leq Y^1\leq Y_{NM}^1$. 

    For the case in which $\delta_B \leq 0, \delta_W \geq 0$, note that in this case we will have $Y^1\leq Y_{NM}^1$, $Y^1 \leq Y_{ER}^1$, and as always $Y^1 \leq Y_{MOB}^{1+}$, meaning it must be less than the minimum of the three; similarly with the maximum of $Y_{ER}^1$,$Y_{NM}^1$, and $Y_{MOB}^{1-}$ in the opposite case. 

    In cases where $\delta_W$ or $\delta_B$ are known to be exactly equal, then the estimator coincides with the truth. Finally both effects are $0$, then by Theorem \ref{thm:identification_D} we must have that $D=0$, which implies $Y^1=Y^0$. 

    And finally with regards to sharpness, we can use a similar argument as before to obtain any $Y^{1'}$ within the boundaries using a mixture; hence, the bounds are sharp. 
    
\end{proof}

\begin{proof}[Proof of \textbf{Theorem \ref{thm:identification_Y0}}]
    The proof is similar to that of Theorem \ref{thm:identification_Y1}, but with key signs reversed due to the sign difference in the biases laid out in Propositions \ref{prop:bias_er_levels} and \ref{prop:bias_nm_levels}. In particular, note that $\delta_B \geq0 \implies Y^0=Y_{NM}^0-\frac{\delta_B}{1-\mathbb{E}[X]}\leq Y_{NM}^0$, while $\delta_W\geq0 \implies Y^{0}=Y_{ER}^0 +\frac{\delta_W\mathbb{E}[X]}{\Var(X_N)}\geq Y_{ER}^{0}$. The rest of the proof follows similarly. 
\end{proof}

\begin{proof}[Proof of \textbf{Proposition \ref{prop:cga_long_bounds}}]
(i) Suppose that $\delta_{B,n}\geq0$. Then $Y_n^1\geq Y_n^0$, immediately implying that $D_n\geq 0$. Of course, $D_n$ still must be below $D_{MOB,n}^+$, and combining these two facts yields the stated interval. Now, to see that $Y_n^1\geq Y_n$ and $Y_n^0 \leq Y_n$, note that $Y_n = X_n Y_n^1 + (1-X_n) Y_n^0$ is a weighted average of $Y_n^1$ and $Y_n^0$. Thus, $Y_n^1 \geq Y_n^0$ implies that $Y_n^0 \leq Y_n \leq Y_n^1$. These facts give the lower and upper bound on $Y_n^1$ and $Y_n^0$ respectively, and combining with the method of bounds yields the stated interval.

(ii) If $\delta_{B,n} \leq 0$, then similarly $Y_n^1 \leq Y_n^0$, implying $D_n \leq 0$ while also $D_n \geq D_{MOB,n}^{-}$. Again, $Y_n$ being a weighted average of $Y_n^1$ and $Y_n^0$ gives the remainder of the statement.

(iii) For sharpness for (i), we need to show that each of the stated values do not contradict the assumptions. As usual, since everything is contained within the method of bounds, the only possibility that we need to consider is that some point in the stated intervals could \emph{require} $\delta_{B,n}<0$ and thus not be consistent with the assumed hypothesis. But since $\delta_{B,n} = D_n \Var[X|N=n]$, showing that a given point requires $\delta_{B,n}<0$ implies this point requires $D_n <0$. Thus, if we can show all points are consistent with  $D_n \geq 0$, they are consistent with $\delta_{B,n}\geq 0$ and are feasible. 

Start with $D_n$, which is always within the method of bounds. As just mentioned, $D_n =0$ if and only if $\delta_{B,n}=0$, so $D_n=0$ is feasible. For the other end of the interval, as long as $D_{MOB}^{+}\geq 0$, it does not rule out $D_{n} \geq 0$ which again is true if and only $\delta_{B,n}\geq 0$. But since $D_n=0$ is always feasible, the method of bounds $D_{MOB,n}^{+}$ \emph{must} at least include $0$, so $D_{MOB,n}^{+}\geq0$. Thus, $D_{MOB,n}^{+}$ is feasible as well. Applying Corollary \ref{lem:feasb_intermediate_disp} then shows that any $D$ in between $0$ and $D_{MOB,n}^{+}$ is also feasible, and the claim that the interval is sharp is proved. 

Now consider $Y_n^1$. $Y_n^1 = Y_n$ is feasible because setting $Y_n^1=Y_n=Y_n^0$ creates feasible disparity $D_n=0$. We can also see that $Y_n^1=Y_{MOB,n}^{1+}$ (paired with the complement  $Y_n^0 = Y_{MOB,0}^-$) is also feasible, because it corresponds to $D_n = D_{MOB,n}^{+} \geq 0$.

Since both $Y_n^1=Y_n$ and $Y_n^1=Y_{MOB,n}^{1+}$ are feasible; Lemma \ref{lem:feasb_intermediates} then gives that all points in the range are similarly feasible, and Corollary \ref{lem:feasb_intermediate_disp} shows that the disparity must also be nonnegative. The reasoning for the range containing $Y_n^{0}$ is symmetric. Finally, the argument for sharpness for (ii) is similar, mutatis mutandis. 

\end{proof}

\begin{proof}[Proof of \textbf{Proposition \ref{prop:cna_long_bounds}}]
    \ 
     \\(i) If $\delta_{W,n}\geq0$, then $D_n = \mu'(X_n)-\delta_{W,n} \leq \mu'(X_n)$. Since the method of bounds still holds, we have that $D_n \leq \min\{\mu'(X_n), D_{MOB,n}^+\}$; we also still have $D_n \geq D_{MOB,n}^{-}$, giving the claimed interval. 
     
     Turning to $Y_n^1$: note that rearranging  $D_n = Y_n^1-Y_n^0$ and substituting $Y_n=X_n Y_n^1 + (1-X_n) Y_n^0$ in shows that $Y_n^1 = Y_n + (1-X_n) \cdot D_n$. But since $D_n \leq \mu'(X_n) $ given that $\delta_{W,n}\geq 0$, we have that $Y_n^1 \leq Y_n + (1-X_n)\cdot \mu'(X_n)$. Combining with the MOB, we get $Y_n^1 \in [Y_{MOB,n}^{1-},\min\left\{Y_{MOB,n}^{1+},Y_n +(1-X_n)\cdot\mu'(X_n)\right\}$. Rearranging $D= Y_n^1 - Y_n^0$ the other way and substituting into $Y_n^0 = Y_n - X_n D_n$ gives that $Y_n^0 = Y_n-X_n\cdot D_n \geq Y_n - X_n \cdot \mu'(X_n)$, with the inequality again following since $\mu'(X_n)\geq D_n$, and combining this with the MOB again shows that $Y_n^0 \in [\max\left\{Y_{MOB,n}^{0-},Y_n-X_n\cdot \mu'(X_n)\right\}, Y_{MOB,n}^{0+}]$. 
     
     (ii) If $\delta_{W,n}\leq 0$, reversing the inequalities and applying the same substitutions gives the stated bounds.
     
     (iii) We will show sharpness for (i); (ii) again follows by using the same substitutions and reversing the inequalities. Like in Proposition \ref{prop:cga_long_bounds}, we need to show that for any given point in the range, there exists a feasible distribution with $\delta_{W,n} \geq 0$; note that this is true if and only if $D_n \leq \mu'(X_n)$; so in particular, to show that a point is feasible, we must show that there exists a distribution consistent with the data such that $D_n \leq \mu'(X_n)$.

     Note first off that there are two cases  depending on how $\mu'(X_n)$ relates to $D_n$. If $D_{MOB}^{+} \leq \mu'(X_n)$, then the claimed interval reduces to the method of bounds interval;  by construction all points in this range are consistent with $Y_n$ and $\overline{Y},\underline{Y}$; Moreover, by Lemma \ref{lem:feasb_intermediates}, everything in that range has disparity $D_n \leq D_{MOB}^{+}\leq \mu'(X_n)$, and so satisfies $\delta_{W,n}\geq0$; thus, that entire range is feasible and the bound is sharp in this case. 

     So now suppose that $\mu'(X_n) < D_{MOB}^{+}$. The range reduces to a subset of the method of bounds now, so again every point in the range is feasible in terms of $Y_n$ and $\overline{Y},\underline{Y}$. To see that every point between $[D_{MOB}^-,\mu'(X_n)]$ is feasible in terms of $D_n \leq \mu'(X_n)$, again apply Lemma \ref{lem:feasb_intermediates} with the choice of extreme distributions $(Y_{MOB}^-,Y_{MOB}^{0+})$ and $(Y^{1\mu},Y^{0\mu})$, where $Y^{1\mu}$ and $Y^{0\mu}$ are marginals give rise to $D_{\mu}= \mu'(X_n)$ — such a pair must exist by Corollary \ref{lem:feasb_intermediate_disp} since $\mu'(X_n)$ is within the method of bounds, and in fact can explicitly be constructed as $Y^{1\mu}=Y_n + (1-X_n)\mu'(X_n)$, $Y^{0\mu} = Y_n-X_n \mu'(X_n)$; this distribution has disparity $\mu'(X_n)$, and it can be shown that these must both be in $[\underline{Y},\overline{Y}]$.) Thus, the bound is sharp for $Y_n^1$. A similar argument follows for $Y_n^0$.

 \end{proof}   

\begin{proof}[Proof of \textbf{Proposition \ref{prop:cr_local}}]
    This follows by applying Propositions \ref{prop:cga_long_bounds} and \ref{prop:cna_long_bounds} simultaneously. We shall prove (i) for the case that $\delta_{B,n}\geq 0$ and $\delta_{W,n} \geq 0$; the case that $\delta_{B,n} \leq 0 $ and $\delta_{W,n} \leq 0$ is similar. We will omit (ii), because it is similar to (i), mutatis mutandis, and then prove (iii), i.e. sharpness, again for the case that $\delta_{B,n}\geq 0$ and $\delta_{W,n} \geq 0$. 
    
    (i) Suppose that $\delta_{B,n}\geq 0$ and $\delta_{W,n} \geq 0$. Notice that $\delta_{B,n}\geq0 \implies D_n\geq 0$, and this plus $\delta_{W,n}\geq 0 $ means that $\mu'(X_n) = D_n + \delta_{W,n}\geq 0$. 
    Applying the assumptions (i) from both Propositions \ref{prop:cga_long_bounds} and \ref{prop:cna_long_bounds} gives that $D_n \in [D_{MOB,n}^-, \min\left\{\mu'(X_n),D_{MOB,n}^+\right\}]$ and $D_n \in [0, D_{MOB,n}^+]$; that $D_n \in [0,\min\left\{\mu'(X_n),D_{MOB,n}^+\right\}]$ then follows since both must hold. (Notice that $\mu'(X_n)\geq 0$ implies the claimed interval is non-empty.) $Y_n^1$ and $Y_n^0$ follow similarly.  
    
    (ii) As noted, (ii) follows similarly to (i).

    (iii) For sharpness, we must show all points in the range do not contradict both 1) $\delta_{B,n}\geq0$ and 2) $\delta_{W,n}\geq 0$. Recall that $\delta_{B,n} \geq 0 \iff D_n\geq0$, and $\mu'(X_n) = D_n +\delta_{W,n} \implies \delta_{W,n} \geq 0 \iff \mu'(X_n) \geq D_n$. So all we need to do to show a given value is feasible is that is consistent with some distribution where $D_n \geq0$ and $D_n \leq \mu'(X_n)$. Note that $Y_n^1=Y_n^0=Y_n$ gives $D_n=0$, and $(Y^{1\mu}, Y^{0\mu})$ as described above gives $D_n=\mu'(X_n)$.  But now by invoking Lemma \ref{lem:feasb_intermediates} and Corollary \ref{lem:feasb_intermediate_disp} in the same way as we did in the proofs of Propositions \ref{prop:cga_long_bounds} and \ref{prop:cna_long_bounds}, we can see that every $D_n$ in $[0, \mu'(X_n)]$ corresponds to some feasible distribution, and the bounds are thus sharp.

\end{proof}

\subsection*{Relaxing Assumption 1}
Now we turn to relaxing Assumption \ref{assmp:exclusion}. In Assumption \ref{assmp:exclusion}, we are assuming that any two neighborhoods that have the same group prevalence will also share the same expected outcome by group; relaxing this means allowing that neighborhoods with the same group prevalence have different expected outcomes. In this case, the analogues to Propositions \ref{prop:cga_long_bounds} -\ref{prop:cr_local} apply in an average sense; i.e. we can obtain upper and lower bounds on the average disparity and average values of $Y_n^j$ over the set of neighborhoods with the same group prevalence. 

For each of our random variables, we use the notation $\tilde{\cdot}_n$ to indicate that we are averaging over the set of neighborhoods $n' \in \mathcal{N}$ such that $X_{n'}=X_n$. For example:
\begin{align*}
    \tilde{Y}_n^1 :&= \E[Y_n^1|n\in\mathcal{N}^{-1}(X_n)]\\&
    =\sum_{n \in \mathcal{N}^{-1}(x_n)} w_n Y_n^1
\end{align*}
where $\mathcal{N}^{-1}(x_n)$ is the set of $n\in\mathcal{N}$ such that $X_n=x_n$ and $w_n = p_n/\sum_{n\in\mathcal{N}^{-1}(x_n)} p_n$. We define $\tilde{Y}_n^0$ similarly, and  $\tilde{D}_n$ as $\tilde{Y}_n^1-\tilde{Y}_n^0$. We can also define the method of bounds similarly — $\tilde{Y}_{MOB,n}^{1+} = \sum_{n \in \mathcal{N}^{1+}(x_n)} w_n 
 Y_{MOB,n}^{1+}$ and so on, and immediately have that:
\begin{align*}
    \tilde{Y}_n^1 \in [\tilde{Y}_{MOB,n}^{1-},\tilde{Y}_{MOB,n}^{1+}], \\ \tilde{Y}_n^0 \in [\tilde{Y}_{MOB,n}^{0-},\tilde{Y}_{MOB,n}^{0+}],
\end{align*}
because 
\begin{align*}
    Y_n^1 \leq Y_{MOB,n}^{1+} \ \forall n \in \mathcal{N}^{-1}(x_n) \implies \sum_{n\in\mathcal{N}^{-1}(x_n)} w_n Y_{n}^1 \leq \sum_{n \in \mathcal{N}^{-1}(x_n) }w_n Y_{MOB,n}^{1+} =\tilde{Y}_{MOB,n}^{1+}
\end{align*}
and so on; similarly, we can write that: 
\begin{align*}
    \tilde{D}_n \in \  &[\tilde{D}_{MOB,n}^{-},\tilde{D}_{MOB,n}^{+}] \\=& [\tilde{Y}_n^{1-}-\tilde{Y}_n^{0+},\tilde{Y}_{n}^{1+}-\tilde{Y}_n^{0-}].
\end{align*}

These define the method of bounds. We now turn to the equivalents of Propositions \ref{prop:cga_long_bounds}-\ref{prop:cr_local}. As before, define $\mu(x_n):= \E[Y|X_n=x_n]$. Define also
\begin{align*}
    \tilde{\delta}_{B,n} := \Var[X|N=n]\left(\tilde{Y}_n^1-\tilde{Y}_n^0\right)
\end{align*}
and
\begin{align*}
    \tilde{\delta}_{W,n} := \sum_{n \in \mathcal{N}^{-1}(X_n)} w_n\tilde{\delta}_{W,n}.
\end{align*}

Note first:
\begin{lemma}
Define $\tilde{\delta}_{W,n}$, $\tilde{D}_n$ as above. Then:
\begin{align*}
    \tilde{D}_n = \mu'(X_n)-\tilde{\delta}_{W,n}
\end{align*}
    
\end{lemma}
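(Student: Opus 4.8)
The plan is to reduce this averaged identity to the pointwise relation $\mu'(x_n)=D_n+\delta_{W,n}$ already derived in the main text, by showing that the population-weighted averages over $\mathcal{N}^{-1}(X_n)$ coincide with the group-specific regression functions $\mu_x(\cdot)$ and their derivatives, evaluated at the prevalence $X_n$. Once that correspondence is established, the claim follows from differentiating the accounting identity \eqref{eq:yb_accounting}.

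First I would establish the key link $\tilde{Y}_n^x=\mu_x(X_n)$. Starting from $\mu_1(x_n)=\E[Y\mid X=1,X_N=x_n]$, I condition on the neighborhood $N$ and apply Bayes' rule to write it as $\sum_{n'\in\mathcal{N}^{-1}(x_n)}\Pr[N=n'\mid X=1,X_N=x_n]\,Y_{n'}^1$. Because every $n'\in\mathcal{N}^{-1}(x_n)$ shares the same prevalence $X_{n'}=x_n$, the factor $X_{n'}$ cancels between numerator and denominator of the conditional probability, leaving exactly the population weight $w_{n'}=p_{n'}/\sum_{n''\in\mathcal{N}^{-1}(x_n)}p_{n''}$; hence $\mu_1(x_n)=\tilde{Y}_n^1$, and symmetrically $\mu_0(x_n)=\tilde{Y}_n^0$. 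Subtracting gives $\tilde{D}_n=\mu_1(X_n)-\mu_0(X_n)$. The same observation that prevalence is constant on $\mathcal{N}^{-1}(X_n)$ shows that $\delta_{W,n'}=X_n\,\mu_1'(X_n)+(1-X_n)\,\mu_0'(X_n)$ takes an identical value for every $n'$ in the set, so its weighted average $\tilde{\delta}_{W,n}$ equals that common value.

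Next I would reuse the accounting identity $\mu(x_n)=x_n\,\mu_1(x_n)+(1-x_n)\,\mu_0(x_n)$ from \eqref{eq:yb_accounting}, noting that its derivation (conditioning on $X$ and using $\Pr[X=1\mid X_N=x_n]=x_n$) does not invoke Assumption \ref{assmp:exclusion} and therefore remains valid in the relaxed setting. Differentiating in $x_n$ yields $\mu'(x_n)=\big[\mu_1(x_n)-\mu_0(x_n)\big]+\big[x_n\,\mu_1'(x_n)+(1-x_n)\,\mu_0'(x_n)\big]$. Evaluating at $x_n=X_n$ and substituting the identifications from the previous step gives $\mu'(X_n)=\tilde{D}_n+\tilde{\delta}_{W,n}$, which rearranges to the stated equality.

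The main obstacle I anticipate is the bookkeeping in the first step: carefully verifying that the Bayes'-rule weights collapse to the population weights $w_{n'}$ precisely because prevalence is held fixed on $\mathcal{N}^{-1}(X_n)$, so that the averaged objects $\tilde{Y}_n^x$ and $\tilde{\delta}_{W,n}$ align with the conditional-on-prevalence functions $\mu_x$ and their derivatives. Once this correspondence is in place, the remainder is exactly the chain-rule differentiation already carried out in the main text, and no genuinely new analytic content is required.
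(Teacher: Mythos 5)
Your proposal is correct and takes essentially the same route as the paper: both proofs differentiate the mixture identity $\mu(x_n) = x_n\,\bigl(\text{group-1 average}\bigr) + (1-x_n)\,\bigl(\text{group-0 average}\bigr)$ over $\mathcal{N}^{-1}(x_n)$ and read off $\tilde{D}_n$ and $\tilde{\delta}_{W,n}$ from the product rule. Your explicit Bayes-rule step showing $\tilde{Y}_n^x=\mu_x(X_n)$ --- so that the differentiation of \eqref{eq:yb_accounting} from the main text can be reused without Assumption \ref{assmp:exclusion} --- is precisely the identification the paper's proof uses implicitly when it equates the weighted sums $\sum_{n'} w_{n'} Y_{n'}^{x}$ and their derivatives with $\mu_x$ and $\mu_x'$.
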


\begin{proof}

Note that:
\begin{align*}
    \mu(x_n) = \sum_{n\in\mathcal{N}^{-1}(x_n)} w_n Y_n;
\end{align*}
 this can be re-written as:
\begin{align*}
    \mu(x_n) &= \sum_{n\in\mathcal{N}^{-1}(x_n)} w_n \cdot(X_n Y_n^1 +(1-X_n)Y_n^0) \\ &= X_n \cdot\sum_{n\in\mathcal{N}^{-1}(x_n)} w_n Y_n^1 + (1-X_n)\cdot \sum_{n\in\mathcal{N}^{-1}(x_n)} w_n Y_n^0
\end{align*}
since $X_n$ is constant for all $n\in\mathcal{N}^{-1}(x_n)$. Then taking the derivative gives:
\begin{align*}
    \mu'(X_n) &= X_n     \cdot \sum_{n\in\mathcal{N}^{-1}(x_n)} w_n Y_n^{1'} + (1-X_n)\cdot\sum_{n\in\mathcal{N}^{-1}(x_n)} w_n Y_n^{0'} + \sum_{n\in\mathcal{N}^{-1}(x_n)} w_n (Y_n^1 - Y_n^0)
    \\&= \sum_{n\in\mathcal{N}^{-1}(x_n)}w_n\left(X_n\mu_1'(x_n)+(1-X_n)\mu_0'(x_n)\right) + \sum_{n \in \mathcal{N}^{-1}(x_n)}w_n(Y_n^1-Y_n^0)
    \\&= \sum_{n \in \mathcal{N}^{-1}(x_n)}w_n \delta_{W,n}+\sum_{n \in \mathcal{N}^{-1}(x_n)} w_n (Y_n^1-Y_n^0) \\&=\tilde{\delta}_{W,n} + \tilde{D}_n.
\end{align*}
\end{proof}

We can now state the generalized versions of Propositions \ref{prop:cga_long_bounds}-\ref{prop:cr_local}:
\begin{prop}[General version of Proposition \ref{prop:cga_long_bounds}]\label{prop:cga_long_bounds_gen}
(i)
If $\tilde{\delta}_{B,n}\geq 0$ for some set of neighborhoods $\mathcal{N}^{-1}(x_n)$, then: 
\begin{align*}
    \tilde{D}_n & \in [0, \tilde{D}_{MOB,n}^{+}]\\
    \tilde{Y}_n^1 & \in [\tilde{Y}_n, \tilde{Y}_{MOB,n}^{1+}]\\
    \tilde{Y}_n^0 &\in [\tilde{Y}_{MOB,n}^{0-},\tilde{Y}_{n}]
    \end{align*}
(ii) If $\tilde{\delta}_{B,n} \leq 0$ for some set of neighborhoods $\mathcal{N}^{-1}(x_n)$, then:
\begin{align*}
    \tilde{D}_n &\in [\tilde{D}_{MOB,n}^{-},0],\\
    \tilde{Y}_n^1 & \in[\tilde{Y}_{MOB,n}^{1-},\tilde{Y}_n] \\
    \text{ and } \tilde{Y}_n^0 &\in [\tilde{Y}_n, \tilde{Y}_{MOB,n}^{0+}].
\end{align*}
(iii) The bounds in (i) and (ii) are sharp in the absence of additional information.
\end{prop}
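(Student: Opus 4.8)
The plan is to mirror the proof of Proposition \ref{prop:cga_long_bounds} almost verbatim, replacing every neighborhood-level quantity by its $w_n$-weighted average over the level set $\mathcal{N}^{-1}(x_n)$, and exploiting the fact that $X_{n'}=X_n$ is constant on this set. The one structural fact that makes everything go through is an averaged accounting identity. Since $Y_{n'}=X_{n'}Y_{n'}^1+(1-X_{n'})Y_{n'}^0$ for each $n'$ and $X_{n'}=X_n$ throughout $\mathcal{N}^{-1}(x_n)$, taking the $w_n$-weighted sum yields
\[
\tilde{Y}_n = X_n\,\tilde{Y}_n^1 + (1-X_n)\,\tilde{Y}_n^0,
\]
so that $\tilde{Y}_n$ is the same weighted average of $\tilde{Y}_n^1$ and $\tilde{Y}_n^0$. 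Moreover, because $\Var[X|N=n']=X_n(1-X_n)$ is also constant on the set, the definition $\tilde{\delta}_{B,n}=\Var[X|N=n]\,\tilde{D}_n$ shows that $\operatorname{sign}(\tilde{\delta}_{B,n})=\operatorname{sign}(\tilde{D}_n)$ whenever $X_n\in(0,1)$ (the degenerate cases being excluded by our non-degeneracy assumptions).

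Given these two observations, parts (i) and (ii) follow exactly as in Proposition \ref{prop:cga_long_bounds}. Suppose $\tilde{\delta}_{B,n}\geq0$; then $\tilde{D}_n\geq0$. Combined with the method-of-bounds inequalities $\tilde{D}_{MOB,n}^-\leq\tilde{D}_n\leq\tilde{D}_{MOB,n}^+$ established in the preceding discussion, this gives $\tilde{D}_n\in[0,\tilde{D}_{MOB,n}^+]$. Since $\tilde{Y}_n$ lies between $\tilde{Y}_n^0$ and $\tilde{Y}_n^1$ by the averaged identity, $\tilde{D}_n\geq0$ forces $\tilde{Y}_n^0\leq\tilde{Y}_n\leq\tilde{Y}_n^1$; intersecting with the averaged method-of-bounds intervals then yields $\tilde{Y}_n^1\in[\tilde{Y}_n,\tilde{Y}_{MOB,n}^{1+}]$ and $\tilde{Y}_n^0\in[\tilde{Y}_{MOB,n}^{0-},\tilde{Y}_n]$. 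Part (ii) is identical after reversing every inequality.

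For sharpness (iii), I would show that both endpoints of each interval are achieved by feasible joint distributions carrying the required sign, and then fill in by mixing. The value $\tilde{D}_n=0$ is feasible by setting $Y_{n'}^1=Y_{n'}^0=Y_{n'}$ for every $n'\in\mathcal{N}^{-1}(x_n)$ (the within-set neighborhood model), which forces $\tilde{\delta}_{B,n}=0\geq0$; the value $\tilde{D}_n=\tilde{D}_{MOB,n}^+$ is feasible by setting $Y_{n'}^1=Y_{MOB,n'}^{1+}$ and $Y_{n'}^0=Y_{MOB,n'}^{0-}$ for each $n'$, which gives $\tilde{D}_n=\tilde{D}_{MOB,n}^+\geq0$ (and in particular shows $\tilde{D}_{MOB,n}^+\geq0$, so the interval is non-empty). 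Applying Lemma \ref{lem:feasb_intermediates} and Corollary \ref{lem:feasb_intermediate_disp} neighborhood-by-neighborhood, every $\tilde{D}_n\in[0,\tilde{D}_{MOB,n}^+]$ arises as an $\alpha$-mixture of these two feasible allocation schemes and therefore corresponds to a feasible distribution with $\tilde{\delta}_{B,n}\geq0$; the same mixtures realize every level in the stated $\tilde{Y}_n^1$ and $\tilde{Y}_n^0$ intervals. The argument for (ii) is symmetric.

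The step I expect to be the main obstacle is the careful bookkeeping needed to establish the averaged accounting identity and to verify that the feasibility-of-intermediates machinery, originally stated for population-level means, applies to $w_n$-weighted averages over a single level set. Everything hinges on the constancy of $X_{n'}$ (and hence of $\Var[X|N=n']$) across $\mathcal{N}^{-1}(x_n)$, which is precisely what allows the scalar $X_n$ to factor out of the weighted sums and thereby reduces the averaged problem to the neighborhood-specific one already solved in Proposition \ref{prop:cga_long_bounds}.
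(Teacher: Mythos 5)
Your proof is correct and follows essentially the same route as the paper: the paper's own proof of Proposition \ref{prop:cga_long_bounds_gen} is a one-line remark that the argument of Proposition \ref{prop:cga_long_bounds} carries over with every quantity replaced by its $w_n$-weighted counterpart over $\mathcal{N}^{-1}(x_n)$, which is precisely the reduction you carry out. Your write-up in fact supplies the details the paper leaves implicit — the averaged accounting identity $\tilde{Y}_n = X_n\,\tilde{Y}_n^1 + (1-X_n)\,\tilde{Y}_n^0$ (valid because $X_{n'}$ is constant on the level set), the sign equivalence via $\tilde{\delta}_{B,n}=\Var[X|N=n]\,\tilde{D}_n$, and the neighborhood-by-neighborhood application of Lemma \ref{lem:feasb_intermediates} and Corollary \ref{lem:feasb_intermediate_disp} for sharpness — and all of these steps are sound.
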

\begin{prop}[General version of Proposition \ref{prop:cna_long_bounds}]
\label{prop:cna_long_bounds_gen}

(i)Suppose that $\tilde{\delta}_{W,n}\geq 0$ for some neighborhoods $\mathcal{N}^{-1}(x_n)$. Then:
\begin{align*}
\tilde{Y}_n &\in \left[\tilde{D}_{MOB,n}^{-},\min\left\{\mu'(X_n),\tilde{D}_{MOB,n}^{+} \right\} \right],\\
Y_n^1 & \in
\left[  \tilde{Y}_{MOB,n}^{1-}, \min \left\{ \tilde{Y}_{MOB,n}^{1+}, \tilde{Y}_n+(1-X_n)\mu'(X_n) \right\}
\right],\\
\text{ and } \tilde{Y}_n^0 &\in \left[\max\left\{\tilde{Y}_{MOB,n}^{0-},\tilde{Y}_n - X_n \cdot \mu'(X_n)\right\}\right]
\end{align*}
(ii) Suppose that $\tilde{\delta}_{W,n}\leq 0 $ for some set of neighborhoods $\mathcal{N}^{-1}(x_n)$. Then:
\begin{align*}
    \tilde{Y}_n &\in \left[ \max \left\{\mu'(x_n),\tilde{D}_{MOB,n}^{-} \right\},\tilde{D}_{MOB,n}^{+}\right]
    \\
    \tilde{Y}_n^1 &\in \left[ \max \left\{ \tilde{Y}_{MOB,n}^{1-},\tilde{Y}_n+(1-X_n)\cdot\mu'(X_n)\right\},\tilde{Y}_{MOB,n}^{1+} \right]\\
     Y_n^0 &\in \left[ \tilde{Y}_{MOB,n}^{0-}, \min\left\{\tilde{Y}_{MOB,n}^{0+},\tilde{Y}_n-X_n\cdot\mu'(X_n)\right\} \right]
\end{align*}

(iii) The bounds in (i) and (ii) are sharp in the absence of further information.
\end{prop}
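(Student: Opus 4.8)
The plan is to mirror the proof of Proposition \ref{prop:cna_long_bounds} step for step, replacing every neighborhood-level quantity by its $w_{n'}$-weighted average over the set $\mathcal{N}^{-1}(x_n)$ of neighborhoods sharing prevalence $X_n$, and invoking the averaged accounting identities in place of the single-neighborhood ones. The first step is to establish the averaged budget constraint: because $X_{n'}=X_n$ is constant across $n'\in\mathcal{N}^{-1}(x_n)$, taking the weighted average of $Y_{n'}=X_{n'}Y_{n'}^1+(1-X_{n'})Y_{n'}^0$ yields $\tilde{Y}_n = X_n\tilde{Y}_n^1+(1-X_n)\tilde{Y}_n^0$. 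Rearranging gives the two identities $\tilde{Y}_n^1=\tilde{Y}_n+(1-X_n)\tilde{D}_n$ and $\tilde{Y}_n^0=\tilde{Y}_n-X_n\tilde{D}_n$, the exact analogues of the relations $Y_n^1=Y_n+(1-X_n)D_n$ and $Y_n^0=Y_n-X_nD_n$ used before.

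The second step combines the lemma established just above, $\tilde{D}_n=\mu'(X_n)-\tilde{\delta}_{W,n}$, with the averaged method-of-bounds interval $\tilde{D}_n\in[\tilde{D}_{MOB,n}^-,\tilde{D}_{MOB,n}^+]$ already derived. For part (i), $\tilde{\delta}_{W,n}\geq0$ forces $\tilde{D}_n\leq\mu'(X_n)$, so $\tilde{D}_n\in[\tilde{D}_{MOB,n}^-,\min\{\mu'(X_n),\tilde{D}_{MOB,n}^+\}]$. Feeding this into the two averaged identities, and using $1-X_n\geq0$ and $X_n\geq0$ respectively, converts the upper bound on $\tilde{D}_n$ into the stated upper bound on $\tilde{Y}_n^1$ and lower bound on $\tilde{Y}_n^0$; intersecting with the averaged method-of-bounds intervals for $\tilde{Y}_n^1$ and $\tilde{Y}_n^0$ yields the claimed intervals. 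Part (ii) is identical with all inequalities reversed.

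For sharpness (iii), the essential observation is that, since $\mu'(X_n)$ is fixed by the observed data, the sign assumption is equivalent to a constraint on the target disparity itself: $\tilde{\delta}_{W,n}\geq0\iff\tilde{D}_n\leq\mu'(X_n)$. Hence sharpness reduces to showing every $\tilde{D}_n$ in the claimed range is realized by a feasible joint distribution, which I would establish exactly as in Proposition \ref{prop:cna_long_bounds}(iii): the endpoints $\tilde{D}_{MOB,n}^-$ and $\min\{\mu'(X_n),\tilde{D}_{MOB,n}^+\}$ are each feasible, and Lemma \ref{lem:feasb_intermediates} together with Corollary \ref{lem:feasb_intermediate_disp} fills in the interior by mixing; the corresponding bounds for $\tilde{Y}_n^1$ and $\tilde{Y}_n^0$ follow from the same feasible family via the averaged identities.

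The main obstacle is checking that the feasibility machinery, stated for mixtures of joint distributions of $(X,Y)$, transfers cleanly to the averaged quantities $\tilde{\cdot}_n$ defined over the whole set $\mathcal{N}^{-1}(x_n)$ rather than a single neighborhood. Concretely I must verify that a distribution attaining $\tilde{D}_n=\mu'(X_n)$ exists and is simultaneously feasible across all $n'\in\mathcal{N}^{-1}(x_n)$; this is where the explicit construction $\tilde{Y}_n^{1}=\tilde{Y}_n+(1-X_n)\mu'(X_n)$, $\tilde{Y}_n^0=\tilde{Y}_n-X_n\mu'(X_n)$ enters, and I must confirm it lands in $[\underline{Y},\overline{Y}]$. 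A related point, handled as in the original proof, is that whenever the sign hypothesis is consistent with the data one necessarily has $\mu'(X_n)\geq\tilde{D}_{MOB,n}^-$, so the claimed interval is non-empty; if instead $\mu'(X_n)<\tilde{D}_{MOB,n}^-$, then $\tilde{\delta}_{W,n}\geq0$ is incompatible with $\tilde{D}_n\in[\tilde{D}_{MOB,n}^-,\tilde{D}_{MOB,n}^+]$ and the statement is vacuous.
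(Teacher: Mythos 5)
Your proposal is correct and takes essentially the same route as the paper: the paper's own proof of Proposition \ref{prop:cna_long_bounds_gen} is a one-line remark that the generalized results follow from the special-case proofs after replacing each quantity with its averaged counterpart, and that is exactly what you carry out, using the same key lemma $\tilde{D}_n=\mu'(X_n)-\tilde{\delta}_{W,n}$, the averaged method-of-bounds intervals, and the mixing machinery of Lemma \ref{lem:feasb_intermediates} and Corollary \ref{lem:feasb_intermediate_disp} for sharpness. The obstacle you flag (realizing $\tilde{D}_n=\mu'(X_n)$ feasibly across all of $\mathcal{N}^{-1}(x_n)$) resolves just as in the paper's special-case argument, by taking per-neighborhood $\alpha$-mixtures of the method-of-bounds extremes rather than a uniform shift, so the averaged disparity sweeps the whole interval while every neighborhood stays in $[\underline{Y},\overline{Y}]$.
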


\begin{prop}[General Version of Proposition \ref{prop:cr_local}]\label{prop:cr_local_gen}
    Suppose that $\tilde{\delta}_{B,n}\cdot \tilde{\delta}_{W,n}\geq 0$. Then either:

(i) $\mu'(X_n) \geq 0$, and:
\begin{align*}
    \tilde{D}_n &\in [0, \min\{\mu'(X_n),\tilde{D}_{MOB,n}^{+}]\\
    Y_n^1 & \in \left[\tilde{Y}_n,\min\{\tilde{Y}_n+(1-X_n)\cdot \mu'(X_n),\tilde{Y}_{MOB,n}^{1+}\right]\\
    \tilde{Y}_n^0 &\in \left[\tilde{Y}_n,\min\{\tilde{Y}_n -X_n\mu'(X_n),\tilde{Y}_{MOB,n}^{0+}\right];
\end{align*}
or
(ii) $\mu'(X_n) \leq 0$, and 
\begin{align*}
    \tilde{D}_n &\in \left[\max\{\mu'(X_n), \tilde{D}_{MOB,n}^{-},0\right],\\
    \tilde{Y}_n^1 &\in [\max\{\tilde{Y}_{MOB,n}^{1-},\tilde{Y}_n+(1-X_n)\mu'(X_n)\},\tilde{Y}_n]\\
    \text{ and } Y_n^0 & \in \left[\tilde{Y}_n, \min\{\tilde{Y}_n-X_n\cdot \mu'(X_n),\tilde{Y}_{MOB,n}^{0+}\}\right]
\end{align*}

(iii) The bounds in (i) and (ii) are sharp in the absence of further information. 
\end{prop}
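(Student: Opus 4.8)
The plan is to mirror the proof of Proposition \ref{prop:cr_local} verbatim in structure, substituting the averaged (tilde) quantities for the neighborhood-specific ones and invoking the generalized Propositions \ref{prop:cga_long_bounds_gen} and \ref{prop:cna_long_bounds_gen} in place of Propositions \ref{prop:cga_long_bounds} and \ref{prop:cna_long_bounds}. The two structural facts that drive everything are: (a) since $\tilde{\delta}_{B,n} = \Var[X|N=n]\,\tilde{D}_n$ with $\Var[X|N=n]\geq 0$, we have $\sign(\tilde{\delta}_{B,n}) = \sign(\tilde{D}_n)$; and (b) by the preceding lemma, $\mu'(X_n) = \tilde{D}_n + \tilde{\delta}_{W,n}$.

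First I would split the hypothesis $\tilde{\delta}_{B,n}\cdot\tilde{\delta}_{W,n}\geq 0$ into the case where both factors are nonnegative and the case where both are nonpositive. In the first case, (a) gives $\tilde{D}_n \geq 0$ and (b) gives $\mu'(X_n) = \tilde{D}_n + \tilde{\delta}_{W,n} \geq 0$, placing us in branch (i); in the second case the inequalities reverse, placing us in branch (ii). Having fixed, say, the nonnegative case, I would apply part (i) of Proposition \ref{prop:cga_long_bounds_gen} (which uses only $\tilde{\delta}_{B,n}\geq 0$) and part (i) of Proposition \ref{prop:cna_long_bounds_gen} (which uses only $\tilde{\delta}_{W,n}\geq 0$), and intersect the resulting intervals for $\tilde{D}_n$, $\tilde{Y}_n^1$, and $\tilde{Y}_n^0$. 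The former contributes the lower endpoints $\tilde{Y}_n$ (and $0$ for the disparity, using that $\tilde{D}_n=0$ is always feasible so $\tilde{D}_{MOB,n}^-\leq 0$), while the latter contributes the $\mu'(X_n)$-based upper endpoints; the intersection collapses to exactly the stated interval, and $\mu'(X_n)\geq 0$ guarantees it is non-empty. Branch (ii) is symmetric, mutatis mutandis.

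For sharpness I would reduce both sign constraints to constraints on $\tilde{D}_n$: using (a) and (b), $\tilde{\delta}_{B,n}\geq 0 \iff \tilde{D}_n \geq 0$ and $\tilde{\delta}_{W,n}\geq 0 \iff \tilde{D}_n \leq \mu'(X_n)$, so it suffices to exhibit, for each target value in the stated interval, a data-consistent distribution with $0 \leq \tilde{D}_n \leq \mu'(X_n)$. The distribution setting $Y_n^1 = Y_n^0 = Y_n$ on every $n\in\mathcal{N}^{-1}(x_n)$ realizes $\tilde{D}_n = 0$, and the explicit distribution $Y_n^1 = \tilde{Y}_n + (1-X_n)\mu'(X_n)$, $Y_n^0 = \tilde{Y}_n - X_n\mu'(X_n)$ realizes $\tilde{D}_n = \mu'(X_n)$ whenever $\mu'(X_n)$ lies inside the method of bounds. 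Lemma \ref{lem:feasb_intermediates} and Corollary \ref{lem:feasb_intermediate_disp} then fill in every intermediate value by taking the appropriate $\alpha$-mixture, which respects averaging over $\mathcal{N}^{-1}(x_n)$ since both the feasibility construction and the averaging are linear.

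The main obstacle I anticipate is the sharpness step when $\mu'(X_n)$ falls outside the method-of-bounds interval, i.e. $\mu'(X_n) > \tilde{D}_{MOB,n}^+$. There the upper endpoint collapses to $\tilde{D}_{MOB,n}^+$, the constraint $\tilde{D}_n \leq \mu'(X_n)$ becomes slack, and I must mix against the feasible extreme realizing $\tilde{D}_{MOB,n}^+$ rather than against the (infeasible) target $\mu'(X_n)$ — precisely the subcase split handled in the sharpness argument of Proposition \ref{prop:cna_long_bounds}. I would also verify the analogous endpoint arithmetic for $\tilde{Y}_n^1$ and $\tilde{Y}_n^0$, confirming that the two min/max envelopes arising from the two propositions combine into the single min/max expressions written in the statement.
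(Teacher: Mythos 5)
Your proposal is correct and takes essentially the same approach as the paper: the paper's own proof of Proposition \ref{prop:cr_local_gen} consists of the single remark that the generalized propositions ``follow similarly to those of the analogous special cases having replaced each quantity with its averaged counterpart,'' which is precisely the substitution you carry out --- invoking Propositions \ref{prop:cga_long_bounds_gen} and \ref{prop:cna_long_bounds_gen} in place of Propositions \ref{prop:cga_long_bounds} and \ref{prop:cna_long_bounds}, and reducing the two sign hypotheses to $0 \leq \tilde{D}_n \leq \mu'(X_n)$ exactly as in the proof of Proposition \ref{prop:cr_local}. Your explicit handling of the subcase $\mu'(X_n) > \tilde{D}_{MOB,n}^{+}$, where the mixture must target the feasible extreme $\tilde{D}_{MOB,n}^{+}$ rather than $\mu'(X_n)$, is if anything more careful than the paper's one-line appeal to analogy.
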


Proofs of these propositions follow similarly to those of the analogous special cases having replaced each quantity with its averaged counterpart. 

\newpage
\section{Supplemental Figures/Tables}
\label{sec:supplement}

\renewcommand{\thefigure}{S\arabic{figure}}
\setcounter{figure}{0}

\renewcommand{\thetable}{S\arabic{table}}
\setcounter{table}{0}

\begin{figure}[ht]
    \begin{centering}
    \caption{Identification of Partisan Vaccination Rate}
    \label{fig:bounds_by_group}
    \includegraphics[width=0.9\textwidth]{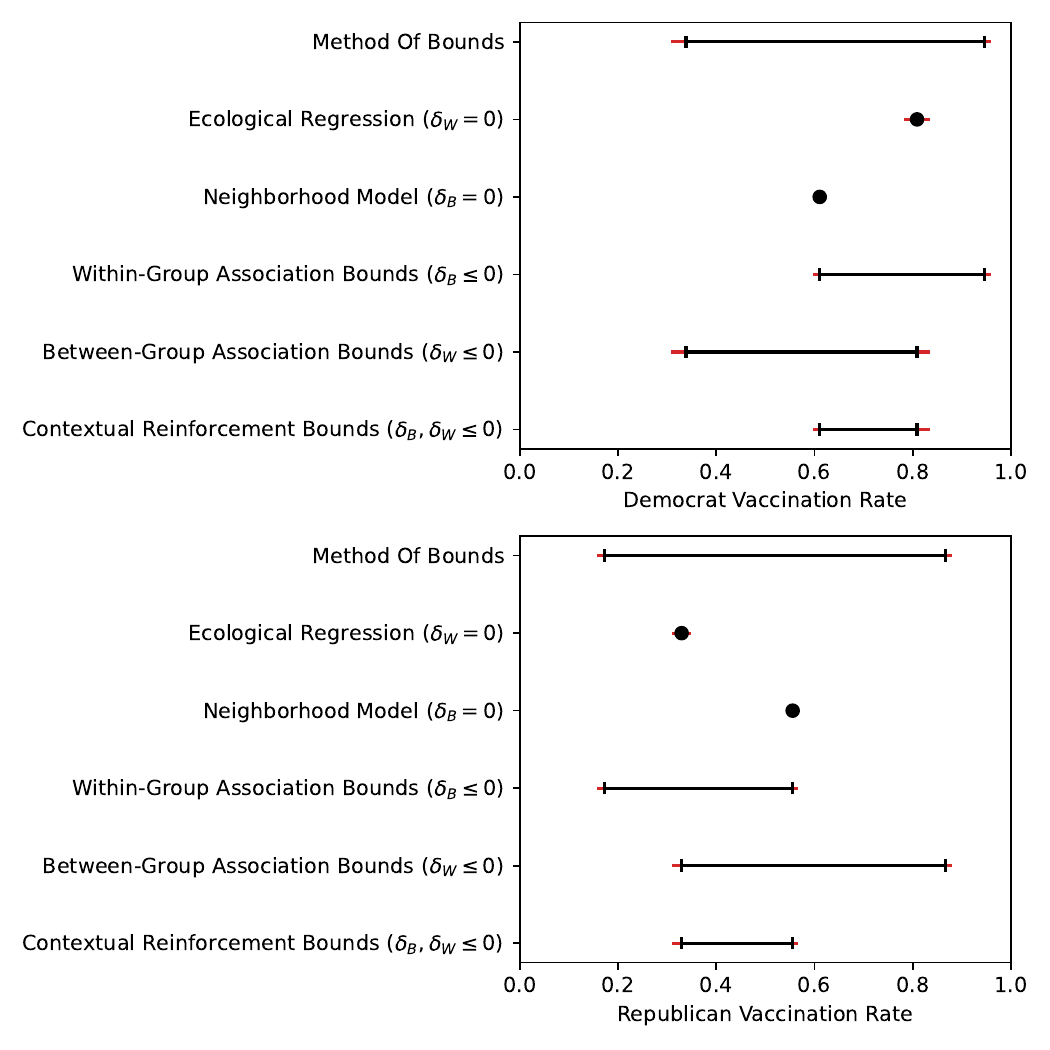}
    \end{centering}
    \vspace{-5pt} \\
    {\footnotesize{Notes: Mean vaccination rate for Democrats and Republicans (components of the partisan vaccination gap). Red bars are 95\% confidence intervals following \citep{imbens2004confidence}; the confidence intervals are based on standard errors from a county-level bootstrap with 1000 bootstrap replicates.}}
\end{figure}

\begin{figure}[ht]
    \begin{centering}
    \caption{Median Income and Proportion Rural Population by Republican Vote Share}
    \label{fig:rural_income_republican}
    \includegraphics[width=0.8\textwidth]{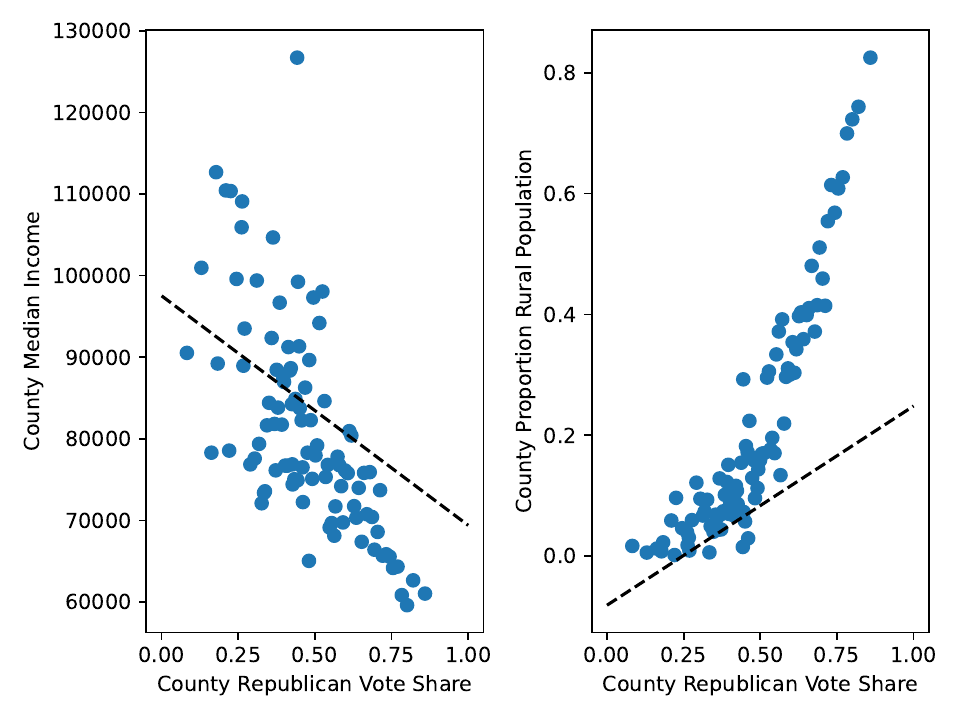}
    \end{centering}
    \vspace{-5pt} \\
    {\footnotesize{Notes: The figure reports county-level median income and proportion of the population living in rural areas. Counties are grouped into 100 equal-population bins (blue). Population-weighted linear regressions are also shown (black dotted lines). Household income data and proportion rural population were sourced from the 2023 5-year American Community Survey and the 2020 Decennial Census, respectively. Republican vote share was sourced from the MIT Election Data and Science Lab.}}
\end{figure}

\begin{figure}[ht]     
\caption{Local Linear Approximation to County Vaccine Data}   
\label{fig:local_linear}
\begin{centering} 
\includegraphics[width=0.8\textwidth]{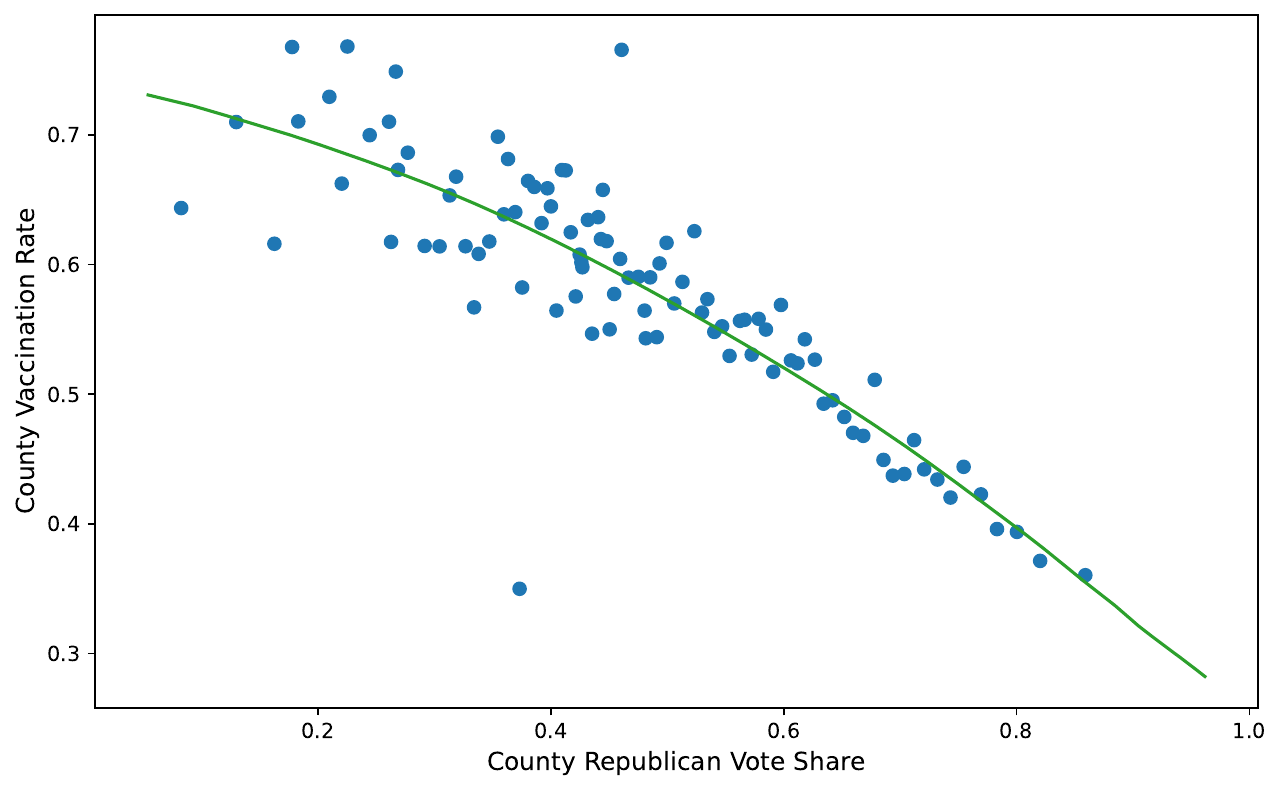}
    \vspace{-5pt} \\
    \end{centering}
{\footnotesize{Notes: This figure reports county-level COVID-19 vaccination rates by the share of voters in the county who voted for the Republican candidate in the 2020 presidential election. Counties are grouped into 100 equal-population bins (blue). The local linear approximation for county-level data is shown in green. This approximation uses an Epanechnikov kernel and 10-fold cross-validation to choose the bandwidth parameter.}}
\end{figure}

\begin{longtable}{|c|c|c|c|}
\hline
\textbf{Tier} & \textbf{Uniquely Matched} & \textbf{Unmatched} & \textbf{Proportion Matched} \\
\hline
Tier 1 & 1,028,172  & 764,810 & .569 \\
Tier 2 & 61,906 & 712,439 & .603 \\
Tier 3 & 129,298 & 582,697 & .675\\
Tier 4 & 37,409 & 545,694 & .695 \\
Tier 5 & 35,231 & 510,427 & .715 \\
Tier 6 & 89,605 & 420,430 & .764 \\
\hline
\caption{Summary of matched patient counts for the auxiliary individual-level dataset. Tier 1 contains the highest confidence matches and Tier 6 contains the lowest confidence matches (for details, see Appendix \ref{sec:app_application}). For individuals with multiple matches, one match is chosen at random.}
\label{tab:patient_tiers_randomtie}
\end{longtable}

\renewcommand{\arraystretch}{2.5}
\begin{longtable}{|c|c|c|c|c|c|c|}
    \hline
    & \textbf{Tier 1} & \textbf{Tier 1-2} & \textbf{Tier 1-3} & \textbf{Tier 1-4} & \textbf{Tier 1-5} & \textbf{Tier 1-6} \\
    \hline
    \shortstack{$E[Cov(Y, X_N|X)]$} & 
    \shortstack{-0.00463 \\ (8e-05)} & 
    \shortstack{-0.00452 \\ (8e-05)} & 
    \shortstack{-0.00451 \\ (7e-05)} & 
    \shortstack{-0.00447  \\ (7e-05)}&
    \shortstack{-0.00455  \\ (7e-05)}& 
    \shortstack{-0.00466  \\ (7e-05)} \\
    \hline
    \shortstack{$E[Cov(Y, X|N)]$} & 
    \shortstack{-0.00691 \\ (0.00022)} & 
    \shortstack{-0.00703 \\ (0.00021)} & 
    \shortstack{-0.0066 \\ (0.0002)} & 
    \shortstack{-0.00652  \\ (0.0002)}&
    \shortstack{-0.00623  \\ (0.0002)}& 
    \shortstack{-0.00592  \\ (0.0002)} \\
    \hline
\caption{Conditional covariance (standard error) estimated from the auxiliary individual-level dataset. $Y$ indicates whether an individual is vaccinated, $X$ indicates whether they are Republican, and $X_N$ is the proportion of Republicans in their county. Individuals all have equal weight and are matched in a tiered approach, where Tier 1 contains the highest confidence matches and Tier 6 contains the lowest confidence matches. Individuals with multiple matches are discarded.}
\label{tab:zero_tie_unitwt}
\end{longtable}

\renewcommand{\arraystretch}{2.5}
\begin{longtable}{|c|c|c|c|c|c|c|}
    \hline
    & \textbf{Tier 1} & \textbf{Tier 1-2} & \textbf{Tier 1-3} & \textbf{Tier 1-4} & \textbf{Tier 1-5} & \textbf{Tier 1-6} \\
    \hline
    \shortstack{$E[Cov(Y, X_N|X)]$} & 
    \shortstack{-0.00463\\(8e-05)} & 
    \shortstack{-0.00451\\(8e-05)} & 
    \shortstack{-0.0045\\(7e-05)} & 
    \shortstack{-0.00446\\(7e-05)}&
    \shortstack{-0.00459\\(7e-05)}& 
    \shortstack{-0.00481\\(7e-05)} \\
    \hline
    \shortstack{$E[Cov(Y, X|N)]$} & 
    \shortstack{-0.00698\\(0.00022)} & 
    \shortstack{-0.00706\\(0.00021)} & 
    \shortstack{-0.00638\\(0.0002)} & 
    \shortstack{-0.00633\\(0.0002)}&
    \shortstack{-0.00594\\(0.00019)}& 
    \shortstack{-0.00519\\(0.00019)} \\
    \hline
\caption{Conditional covariance (standard error) estimated from the auxiliary individual-level dataset. $Y$ indicates whether an individual is vaccinated, $X$ indicates whether they are Republican, and $X_N$ is the proportion of Republicans in their county. Individuals all have equal weight and are matched in a tiered approach, where Tier 1 contains the highest confidence matches and Tier 6 contains the lowest confidence matches. For individuals with multiple matches, one match is chosen at random.}
\label{tab:random_tie_unitwt}
\end{longtable}

\renewcommand{\arraystretch}{2.5}
\begin{longtable}{|c|c|c|c|c|c|c|}
    \hline
    & \textbf{Tier 1} & \textbf{Tier 1-2} & \textbf{Tier 1-3} & \textbf{Tier 1-4} & \textbf{Tier 1-5} & \textbf{Tier 1-6} \\
    \hline
    \shortstack{$E[Cov(Y, X_N|X)]$} & 
    \shortstack{-0.00361 \\ (0.00042)} & 
    \shortstack{-0.00363\\(0.00041)} & 
    \shortstack{-0.00353\\(0.00038)} & 
    \shortstack{-0.00354\\(0.00037)}&
    \shortstack{-0.0036\\(0.00037)}& 
    \shortstack{-0.00366\\(0.00036)} \\
    \hline
    \shortstack{$E[Cov(Y, X|N)]$} & 
    \shortstack{-0.00847\\(0.001)} & 
    \shortstack{-0.00876\\(0.00096)} & 
    \shortstack{-0.00833\\(0.00089)} & 
    \shortstack{-0.00712\\(0.0009)}&
    \shortstack{-0.00691\\(0.00088)}& 
    \shortstack{-0.00656\\(0.00087)} \\
    \hline
\caption{Conditional covariance (standard error) estimated from the auxiliary individual-level dataset. $Y$ indicates whether an individual is vaccinated, $X$ indicates whether they are Republican, and $X_N$ is the proportion of Republicans in their county. Individuals are re-weighted (see Appendix \ref{sec:app_application}) and are matched in a tiered approach, where Tier 1 contains the highest confidence matches and Tier 6 contains the lowest confidence matches. Individuals with multiple matches are discarded.}
\label{tab:zero_tie_rewt}
\end{longtable}

\newpage
\section{Auxiliary Matched Analysis: Additional Detail}
\label{sec:app_application}

\subsection*{Data}
\subsubsection*{Individual COVID-19 Vaccinations}
We used data on individual COVID-19 vaccination status from the American Family Cohort (AFC). The AFC is derived from the American Board of Family Medicine PRIME Registry \citep{balraj2023american}. PRIME is used by practices to track electronic health records data for Medicare and Medicaid reporting as well as measurement of social determinants of health \citep{phillips2017prime}. PRIME contains approximately 8 million patient records across 47 states \citep{balraj2023american}. From this dataset, we obtained COVID-19 vaccination status following Hao et al., 2023 \citep{hao2023covid}. We considered all patients over 5 years old who did not die during 2021. Patients who had no history of receiving immunizations or visiting practices operating from January 1, 2019 to December 31, 2021 were excluded. We considered patients to be vaccinated if they received two COVID-19 vaccinations, or a single dose of the Janssen vaccine, before December 18, 2021 (n = 339,484). All other patients (n = 1,463,303) were considered unvaccinated.

\subsubsection*{Individual Voter Information}
We used national voter records from the L2 database \citep{l2data2024}. This database is collected and assembled by the L2 organization and sold for use in academic, political, or business settings. In addition to minimizing data gathering steps, the database was constructed using a number of standardization and cleaning preprocessing steps that ensure greater comparability across states. We obtained voter name, state, date of birth, and registered political party for 214,125,844 voters nationally.

\subsubsection*{County-Level Voter Information}
We used election returns at the county level from the 2020 Presidential Election from the \citep{DVN/VOQCHQ_2020} to provide the $b_i$ in the individual-level analysis to check the signs of the covariance terms.

\subsection*{Matching}
We used a tiered matching strategy to pair patients from the AFC dataset with voters from the L2 dataset. This strategy was informed by \citep{roos1986art}, who implement a tiered matching system for records with few identifiers. We implement a similar system, where ``Tier 1" matches represent high confidence, and ``Tier 6" matches represent low confidence (see Table \ref{tab:patient_tiers_0tie}, \ref{tab:patient_tiers_randomtie}). 

We used the characteristics of the datasets to determine which factors to match on in each tier. In the L2 data, we found that recorded birth dates were non-uniformly distributed, with strong biases towards birth dates on the first of each month and on January 1. This suggests that the L2 data may record births as occurring on the first of the month or on January first when respondents only note their birth month or year. Thus, in some tiers, we allow for matches without identical birth dates as long as the birth month and year are the same. 

We require an exact match on first and last name in all tiers. We considered matching on first initial and last name, but we found that these matching schemes resulted in clearly erroneous matches. We also considered matching on phonetic last name. However, since our datasets come from official voter and patient records, it is unlikely that last names are misspelled. Thus, a phonetic encoding may have led to incorrect matches. 

In our final matching scheme, tier 1 requires matches at the highest level of specificity with exact matches across all available factors. We are most confident that tier 1 matches are correct, since they match for all six factors. Tier 6 only requires exact matches on three factors, so we are less confident that these matches are correct. The tiers are defined as follows:

\begin{itemize}
    \item Tier 1: Unique match on exact first name, last name, state, and birth date (day/month/year)
    \item Tier 2: Unique match on exact first name, last name, state, and birth month/year
    \item Tier 3: Unique match on exact first name, last name, state, and birth year
    \item Tier 4: Unique match on exact first name, last name, and birth date (day/month/year)
    \item Tier 5: Unique match on exact first name, last name, and birth month/year
    \item Tier 6: Unique match on exact first name, last name, and birth year
\end{itemize}

We tested two different ways of considering ties, where multiple vaccination records or voters were matched to one voter or vaccination record in a given tier. In the first approach, we did not allow any ties (``zero-tie"), deleting ties from matched data at each tier. In the second approach, following \citep{roos1986art}, we kept one record at random from each set of ties (``random-tie"). We present analyses of the zero-tie strategy in the main text. Analyses for the random-tie strategy are presented in the supplement.

\begin{longtable}{|c|c|c|c|}
\hline
\textbf{Tier} & \textbf{Uniquely Matched} & \textbf{Unmatched} & \textbf{Proportion Matched} \\
\hline
Tier 1 & 1,017,300 & 764,810 & 0.563 \\
Tier 2 & 57,517 & 702,882 & 0.595 \\
Tier 3 & 101,999 & 573,883 & 0.651 \\
Tier 4 & 39,042 & 563,416 & 0.673 \\
Tier 5 & 26,207 & 511,227 & 0.687 \\
Tier 6 & 39,464 & 414,351 & 0.709 \\
\hline
\caption{Summary of matched patient counts for the auxiliary individual-level dataset. Tier 1 contains the highest confidence matches and Tier 6 contains the lowest confidence matches. Individuals with multiple matches are discarded.}
\label{tab:patient_tiers_0tie}
\end{longtable}

When estimating covariances and confidence intervals, we tested two different weighting schemes. In the first scheme, each record was given equal weight (``unit weight"). In the second scheme, we re-weighted records such that the sum of records for a given county in a given political party matched the number of voters of that party in that county, as calculated from our county-level voter information (``re-weighted"). Because many counties had few records in our matched dataset, we restricted the re-weighted data to only weight patients coming from counties with over 20 records -- all other patients were given weight 0. We present analyses for the unit weight strategy in the main text. Analyses for the re-weighted strategy are presented in the supplement.

When patient counties recorded in the AFC and L2 data did not match, we chose to use the AFC county. For the zero-tie and random-tie approaches, 4,346 and 5,235 records respectively were removed due to missing AFC county. Additionally, in Alaska, the county-level dataset recorded Alaskan districts rather than counties. We removed all people with AFC counties in Alaska from the merged dataset. We also restricted to only Democrat and Republican voters. This left us with a total of 987,663 (437,722 Democrat, 549,941 Republican) and 1,061,517 (476,496 Democrat, 585,021 Republican) individuals matched in the zero-tie and random-tie approaches, respectively. 

\newpage
\subsection*{Estimating Conditional Covariances}
In this section, we describe how we estimate $\theta = \E[\Cov(A,B|C)]$, where $C$ takes values in $\mathcal{C}$, and $\Var(\hat{\theta})$. We assume that the distribution of $C$ is known, as is the distribution of $A|C$.

We know:
\begin{align*}
    \theta &= \sum_{c\in \mathcal{C}} \Pr(C=c) \Cov(A,B|C=c)\\
\end{align*}

which we estimate by
\begin{align*}
    \hat{\theta} &= \sum_{c\in \mathcal{C}} \Pr(C=c) \sum_{i: C_i = c} \frac{1}{n_c-1} (A_i - \bar{A})(B_i-\bar{B}) \\
    &= \sum_{c\in \mathcal{C}} \frac{\Pr(C=c)}{n_c-1} \sum_{i: C_i = c} (A_i - \bar{A})(B_i-\bar{B}) \\
    &= \sum_{c\in \mathcal{C}} \Pr(C=c) \left( \frac{\sum_{i: C_i = c}(A_i-\bar{A})^2}{n_c-1}  \right) \left(\frac{\sum_{i: C_i = c}(A_i - \bar{A})(B_i-\bar{B})}{\sum_{i: C_i = c}{(A_i-\bar{A})^2}}\right)\\
    &= \sum_{c\in \mathcal{C}} \Pr(C=c) \Var(A|C=c) \hat{\beta}_c
\end{align*}

where $\hat{\beta_c}$ is the coefficient from the linear regression of $A$ on $B$ when $C=c$, $n_c$ is the number of entries where $C=c$, and $\Var(A|C=c)$ is the variance of $A$ for entries where $C=c$. 

We now estimate $\Var(\hat{\theta})$.

\begin{align*}
    \Var(\hat{\theta}) &= \Var(\sum_{c\in \mathcal{C}} \Pr(C=c) \Var(A|C=c) \hat{\beta_c})\\
    &= \sum_{c\in \mathcal{C}} \Var(\Pr(C=c) \Var(A|C=c) \hat{\beta_c})\\
    &= \sum_{c\in \mathcal{C}} \Pr(C=c)^2 \Var(A|C=c)^2 \Var(\hat{\beta_c})\\
    &= \sum_{c\in \mathcal{C}} \Pr(C=c)^2 \Var(A|C=c)^2 \text{SE}({\hat{\beta_c}})^2
\end{align*}

Where $\text{SE}({\beta_c})^2$ is estimated standard error from the linear regression of $A$ on $B$ when $C=c$.

In our setting, we seek to estimate $\E\left[\Cov(X_N,Y|X)\right]$ and $\E\left[\Cov(X,Y|X_N)\right]$. We estimate $\E\left[\Cov(X_N,Y|X)\right]$ as follows:

We calculate $Pr(X = 1)$ as the weighted sum of entries where $X=1$ over the total weight of the dataset. We then calculate $Var(X_N|X=i)$. For each value of $X$, we then perform weighted least squares regression ($Y \sim X_N|X=i$). We calculate the coefficient on $X_N$ of this regression, $\hat{\beta_{i}}$. For our estimate for the standard error of $\E\left[\Cov(X_N,Y|X)\right]$, we calculate the standard error of $\hat{\beta_{i}}$ and follow the formulas above.

For $\E\left[\Cov(X,Y|X_N)\right]$, we first note a useful proposition:

\begin{prop}\label{prop:equivalence}
    We can write that:
    \begin{align*}
        \E[\Cov(X,Y|N)] = \E[\Cov(X,Y|X_N)]
    \end{align*}
\end{prop}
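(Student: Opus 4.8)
The plan is to deduce the identity from the law of total covariance applied with respect to two different conditioning variables, $N$ and $X_N$, exploiting the fact that $X_N = \E[X\mid N]$ is by construction a function of $N$, so that the information in $X_N$ is coarser than that in $N$ (i.e.\ $\sigma(X_N)\subseteq\sigma(N)$).

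First I would write the law of total covariance in both forms,
\[
\Cov(X,Y) = \E[\Cov(X,Y\mid N)] + \Cov\big(\E[X\mid N],\,\E[Y\mid N]\big),
\]
\[
\Cov(X,Y) = \E[\Cov(X,Y\mid X_N)] + \Cov\big(\E[X\mid X_N],\,\E[Y\mid X_N]\big).
\]
Since the left-hand sides are identical, proving the proposition reduces to showing that the two ``between'' covariance terms agree, namely
\[
\Cov\big(\E[X\mid N],\,\E[Y\mid N]\big) = \Cov\big(\E[X\mid X_N],\,\E[Y\mid X_N]\big).
\]

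Next I would simplify each argument. By definition $\E[X\mid N] = X_N$, and because $X_N$ is $\sigma(N)$-measurable, the tower property gives $\E[X\mid X_N] = \E[\E[X\mid N]\mid X_N] = \E[X_N\mid X_N] = X_N$; hence the first argument of the covariance equals $X_N$ on both sides. For the second argument, $\E[Y\mid N] = Y_N$, while $\E[Y\mid X_N] = \E[Y_N\mid X_N]$, again using that conditioning on $N$ refines conditioning on $X_N$. The claim therefore becomes $\Cov(X_N, Y_N) = \Cov(X_N,\,\E[Y_N\mid X_N])$, which is the projection property of conditional expectation: since $X_N$ is $\sigma(X_N)$-measurable, $\E[X_N Y_N] = \E[X_N\,\E[Y_N\mid X_N]]$ by iterated expectations, and the means agree because $\E[Y_N] = \E[\E[Y_N\mid X_N]]$, so the two covariances coincide.

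I do not anticipate a serious obstacle, as the result is essentially a repackaging of the tower property. The only point requiring care is the measurability bookkeeping — verifying that $\sigma(X_N)\subseteq\sigma(N)$ so the iterated-expectation steps are valid and that $\E[Y\mid X_N] = \E[Y_N\mid X_N]$. Since $Y$ is assumed conditionally bounded and $X$ is binary, all relevant moments exist and every covariance is finite, so no integrability issues arise.
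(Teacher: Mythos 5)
Your proof is correct and follows essentially the same route as the paper's: both apply the law of total covariance with respect to $N$ and with respect to $X_N$, reduce the claim to equality of the between-covariance terms $\Cov(\E[X\mid N],\E[Y\mid N]) = \Cov(\E[X\mid X_N],\E[Y\mid X_N])$, and close the argument via the tower property using $\E[X\mid X_N]=X_N$ and $\E[X_N Y_N]=\E[X_N\,\E[Y\mid X_N]]$. No substantive differences from the paper's argument.
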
 

Using this proposition, we estimate $\E\left[\Cov(X,Y|N)\right] = \E\left[\Cov(X,Y|X_N)\right]$ as follows:

We round $X_N$ to the nearest $.05$, creating 19 groups $(d_1,...d_{19})$ of $X_N$ values (in this case, $X_N$ = county proportion Republican). We estimate $Pr(X_N \in d_i)$ as the weighted sum of entries with $X_N\in d_i$ over the total weight of the dataset. We also calculate the variance of $X$, $\Var(X|X_N \in d_i)$. We then perform weighted least squares regression ($Y \sim X|X_N \in d_i$) and calculate the coefficient on $X$ of this regression, $\widehat{\beta_{d_i}}$. For our estimate for the standard error of $\E\left[\Cov(X,Y|X_N)\right]$, we calculate the standard error of $\widehat{\beta_{d_i}}$, $\text{SE}(\widehat{\beta_{d_i}})$ and follow the formulas above.

\begin{proof}[Proof of \textbf{Proposition \ref{prop:equivalence}}]
    By the law of total covariance, the left side is:
    \begin{align*}
    \mathbb{E}[\Cov(X,Y|N)] =\Cov(X,Y) -\Cov(\mathbb{E}[X|N],\mathbb{E}[Y|N])
    \end{align*}
    and similarly the right side is :
    \begin{align*}
        \mathbb{E}[\Cov(X,Y|X_N)] = \Cov(X,Y) -\Cov(\mathbb{E}[X|X_N],\mathbb{E}[Y|X_N])
    \end{align*}
    So it is enough that:
    \begin{align*}
        \mathbb{E}[\mathbb{E}[X|N]\mathbb{E}[Y|N]] -\mathbb{E}[\mathbb{E}[X|N]]\mathbb{E}[\mathbb{E}[Y|N]] = \mathbb{E}[\mathbb{E}[X|X_N]\mathbb{E}[Y|X_N]] - \mathbb{E}[\mathbb{E}[X|X_N]]\mathbb{E}[\mathbb{E}[Y|X_N]].
    \end{align*}
    But notice that the second term on both sides is $\mathbb{E}[X]\mathbb{E}[Y]$ by the law of iterated expectations, while $\mathbb{E}[X|X_N]=\mathbb{E}[X|N]=X_N$, by definition of $X_N$. Thus, we simply require that:
    \begin{align*}
        \mathbb{E}[X_N\cdot Y_N] = \mathbb{E}[X_N \cdot \mathbb{E}[Y|X_N]].
    \end{align*}
But
    \begin{align*}
\mathbb{E}[X_N \cdot Y_N] = \mathbb{E}[\mathbb{E}[X_N\cdot Y_N|X_N]] = 
\end{align*}
by the Law of Iterated Expectations, and $X_N$ is constant given $X_N$, so 
\begin{align*}
    \mathbb{E}[X_N\cdot Y_N] = \mathbb{E}[X_N \cdot \mathbb{E}[Y_N|X_N]].
\end{align*}
Now since $Y_N$ is $\mathbb{E}[Y|N]$, we can write that:
\begin{align*}
    \mathbb{E}[X_N \cdot \mathbb{E}[Y_N|X_N]] = \mathbb{E}[X_N\cdot \mathbb{E}[\mathbb{E}[Y|N]|X_N]].
\end{align*}

But note that:
\begin{align*}
    \mathbb{E}[\mathbb{E}[Y|N]|X_N] = \mathbb{E}[Y|X_N]
\end{align*}
by the (general version of the) Law of Iterated Expectations. Hence,
\begin{align*}
    \mathbb{E}[X_N\cdot Y_N] = \mathbb{E}[X_N\cdot \mathbb{E}[Y|X_N]]
\end{align*}
as desired. 

\end{proof}

\end{document}